\documentclass[%
 reprint,
superscriptaddress,
 amsmath,amssymb,
 aps,
prd
]{revtex4-2}
\usepackage{tikz}
\usepackage{pgfplots}
\pgfplotsset{compat=1.18}
\usepackage{graphicx}% Include figure files
\usepackage{dcolumn}% Align table columns on decimal point
\usepackage{bm}% bold math
\usepackage{subcaption}
\usepackage{mathrsfs}
\usepackage{diffcoeff}
\usepackage[capitalise]{cleveref}
\usepackage{tensor}
\usepackage{amsthm}
\usepackage{nicematrix}
\usepackage{mathtools}

\graphicspath{ {Images/} }

\newcommand{\scri}{\mathscr{I}}
\newcommand{\scrim}{\scri^-}
\newcommand{\scrip}{\scri^+}
\newcommand{\FCG}{\text{fully-compactified gauge}}
\newcommand{\SCG}{\text{semi-compactified gauge}}
\newcommand{\hr}{\hat{r}}
\newcommand{\ii}{\mathrm{i}}

\newcommand{\set}[1]{\left\{#1\right\}}
\newcommand{\swsh}[2]{\tensor[_{#1}]Y{_{#2}}}
\newtheorem{theorem}{Theorem}
\newtheorem{prop}{Proposition}
\newtheorem{lemma}{Lemma}
\newcommand{\ltwo}{\ell^2}

\newcommand{\dd}{\mathrm d}
\renewcommand{\diff}[3][]{%
  \frac{\dd^{#1} #2}{\dd #3^{#1}}%
}

\begin{document}

\preprint{APS/123-QED}

\title{Fully global numerical evolutions of the linearised spin-2 equation}% Force line breaks with \\
% \thanks{A footnote to the article title}%

\author{Merlyn Barrer}
\email{merlyn.barrer@monash.edu}
\affiliation{School of Mathematics and Statistics, University of Canterbury, Christchurch 8041, New Zealand}

\author{J\"{o}rg Frauendiener}
\email{joerg.frauendiener@otago.ac.nz}
\affiliation{Department of Mathematics and Statistics, University of Otago,
  Dunedin 9016, New Zealand
}

\author{J\"{o}rg Hennig}
\email{joerg.hennig@otago.ac.nz}
\affiliation{Department of Mathematics and Statistics, University of Otago,
  Dunedin 9016, New Zealand
}

\author{Oliver Markwell}
\email{oliver.markwell@aei.mpg.de}
\affiliation{School of Mathematics and Statistics, University of Canterbury, Christchurch 8041, New Zealand}

\author{Chris Stevens}
\email{chris.stevens@canterbury.ac.nz}
\affiliation{School of Mathematics and Statistics, University of Canterbury, Christchurch 8041, New Zealand}

\author{Jed Thompson-Fawcett}
\email{thoje591@student.otago.ac.nz}
\affiliation{Department of Mathematics and Statistics, University of Otago,
  Dunedin 9016, New Zealand
}

\date{\today}

\begin{abstract}
We study the linearised spin-2 field on conformally compactified Minkowski spacetime. Initial data are prescribed on past null infinity $\scrim$, and the system is evolved through the cylinder at spatial infinity to future null infinity $\scrip$.

To avoid the development of logarithmic singularities and thereby preserve smooth peeling behaviour, we derive explicit conditions on the freely specifiable ingoing gravitational radiation $\psi_0$ at $\scrim$ that ensure regular initial data. For compactly supported data, we obtain necessary and sufficient integral conditions guaranteeing regularity and compact support of the full system. 
In the simplest cases, we construct examples of exact initial data, while for the general case, we present a numerical procedure that computes the remaining $\psi_k$ to obtain a full initial data set.

We perform numerical evolutions using two gauges corresponding to compactification of either a neighbourhood of spatial infinity or the entire Minkowski spacetime, and demonstrate stable propagation of the system from $\scrim$ to $\scrip$.

These results provide a step towards a framework for fully global simulations of gravitational radiation and clarify the role of initial data on $\scrim$ in determining asymptotic structure.
\end{abstract}

%\keywords{Suggested keywords}%Use showkeys class option if keyword
                              %display desired
\maketitle

%\tableofcontents

\section{\label{sec:intro}Introduction}

Understanding the global propagation of gravitational radiation in asymptotically flat spacetimes remains a central problem in general relativity. Of particular interest is the behaviour of fields at past and future null infinity $\mathscr{I}^\pm$, where radiation is unambiguously defined and where key physical quantities, such as energy flux, are extracted. Friedrich’s conformal framework \cite{frauendiener2004conformal} provides a powerful approach to this problem by representing infinity as part of a finite, regular manifold, allowing the study of global properties of solutions within a unified setting.

A numerical implementation of the fully nonlinear Generalised Conformal Field Equations (GCFE) has been developed in recent years \cite{frauendiener2021non,frauendiener2023non,frauendiener2024non}, based on an initial boundary value problem (IBVP) formulation. Although successful in propagating radiation to $\scrip$, and more recently also extending the computational domain to include $\scrim$ \cite{frauendiener2025fully}, this framework remains based on the evolution of \emph{spacelike} hypersurfaces that cross $\scrim$, rather than on data prescribed intrinsically on $\scrim$ itself. However, for the study of scattering problems in asymptotically flat spacetimes, it is desirable to prescribe the free data directly on $\scrim$ and to determine the associated gauge and constrained quantities there, thereby obtaining an initial data set for the Asymptotic Characteristic Initial Value Problem (ACIVP) \cite{kroon2023}. As $\scrim$ is a null hypersurface, the current IBVP implementation---which is based on the evolution of spacelike slices---is not immediately suitable for this purpose.

The construction of a full scattering framework presents several obstacles, the central one being the construction of initial data on $\scrim$. While the conformal structure of asymptotically flat spacetimes renders $\scrim$ accessible at finite location, prescribing data on the entire surface---from past timelike infinity $i^-$ to the bottom of the cylinder at spatial infinity $I^-$---introduces subtle analytical difficulties. Previous work has addressed aspects of this problem in complementary regimes. Friedrich analysed the regularity of fields near timelike infinity in the context of purely radiative spacetimes, establishing conditions under which smooth conformal extensions exist \cite{friedrich1986purely}. On the other hand, the behaviour near the cylinder at spatial infinity has been studied extensively by Kroon et al.\ \cite{taujanskas2023controlled,marajh2025controlled}, who derived conditions ensuring regularity and controlled asymptotics at null infinity, including the avoidance of logarithmic terms that spoil the classical peeling behaviour.

These works give a prescription for controlling the fall-off at each end of $\scrim$ separately. However, the constrained components of the rescaled Weyl curvature on $\scrim$ are governed by a system of first-order equations arising from the Bianchi identity. As a consequence, one cannot freely prescribe the behaviour of the fields \emph{at both ends} of $\scrim$---the regularity at $i^-$ and at $I^-$ are linked in a nontrivial way. Generic initial data therefore lead to polyhomogeneous expansions and the development of logarithmic singularities \cite{kroon2002}.

Even when a suitable initial data set for the ACIVP is obtained, the problem remains of how to evolve these data in a manner compatible with existing numerical frameworks. Doulis and Frauendiener addressed this issue by considering evolutions from a spacelike hypersurface intersecting the middle of the cylinder at spatial infinity \cite{doulis2013second,GlobalSimulationsJoerg}. Working in the linearised regime around Minkowski spacetime, they showed that the spin-2 system decouples from the remaining perturbations, forming a closed, constrained symmetric hyperbolic subsystem. They demonstrated stable numerical evolutions from such initial data through the cylinder and up to $I^+$ (where the cylinder approaches $\scrip$) in a variety of conformal gauges, including those incorporating the physical origin.

The aim of this paper is to address the above issues in the linearised regime around Minkowski spacetime, as a step towards the fully nonlinear case, by developing a framework for fully global numerical evolutions from $\scrim$ to $\scrip$. We derive explicit conditions on the freely specifiable ingoing radiation field that ensure regular initial data and controlled behaviour at the conformal boundary. In particular, we identify necessary and sufficient conditions for compactly supported data to generate regular solutions and analyse how these conditions relate to the asymptotic structure of the spacetime.

In a variety of conformal gauges, our method evolves directly from $\scrim$---in contrast to previous approaches that start from a spacelike initial hypersurface at a finite time. Furthermore, we introduce new techniques for stable evolution through the cylinder at spatial infinity $I$, overcoming difficulties associated with degeneracies of the equations at this boundary.

The paper is organised as follows. In Sec.~\ref{sec:II}, we review conformal compactifications of Minkowski spacetime, introducing both semi-compactified and fully-compactified representations that will be used throughout. We also present the linearised spin-2 system in these settings. In Sec.~\ref{initial data}, we construct initial data on past null infinity $\scrim$. We analyse the intrinsic equations governing the spin-2 field, derive explicit solutions for the remaining components in terms of the freely specifiable data $\psi_0$, and establish conditions ensuring regularity at the cylinder and, where applicable, at past timelike infinity. In particular, we obtain necessary and sufficient conditions for compactly supported data to generate regular initial data sets. We also present a numerical procedure for constructing the full initial data set by computing the remaining $\psi_k$ from a given $\psi_0$. In Sec.~\ref{sec:IV}, we describe the numerical implementation of the evolution system in both semi-compactified and fully-compactified gauges, including our treatment of the cylinder at spatial infinity and the handling of degeneracies in the equations. We then present numerical results demonstrating stable evolutions from $\scrim$ through the cylinder to $\scrip$, and analyse the effect of different initial data choices on the global behaviour of the solutions. Finally, in Sec.~\ref{sec:summary} we summarise our findings and discuss possible extensions to the fully nonlinear case. We use the conventions of \cite{penrose1984spinors} throughout.

\section{Conformal compactifications of Minkowski spacetime}\label{sec:II}
This section describes the construction of several conformally-equivalent representations of Minkowski spacetime that will be used throughout. The usual Minkowski spacetime metric written in spherical coordinates
\begin{equation}\label{eq:MinkPhys}
    \tilde{g} = \dd T^2 - \dd R^2 - R^2(\dd\theta^2 + \sin^2\theta\,\dd\phi^2)
\end{equation}
is used as a common starting point, with different conformal rescalings and coordinate transformations applied in each representation.

\subsection{Semi-compactified representations}
The first representation of interest was developed by Friedrich in \cite{2003CQGra..20..101F,FRIEDRICH199883} and describes Minkowski spacetime in a region around spacelike and null infinity, but does not cover the entire spacetime. Performing the coordinate inversions

\begin{equation*}
R = -\frac{\hr}{\hat t^2 - \hr^2},\quad T = -\frac{\hat t}{\hat t^2 - \hr^2},
\end{equation*}
and compactifying with the conformal factor $\Omega = \hr^2 - \hat t^2$ results in the conformal metric
\begin{equation*}
    \bar{g} = \dd \hat t^2 - \dd\hr^2 - \hr^2(\dd\theta^2 + \sin^2\theta\,\dd\phi^2).
\end{equation*}
This metric is regular through the origin $\hr=0$ that corresponds to spatial infinity. However, it turns out that the linear spin-2 equations and other conformally invariant wave equations are singular there. Furthermore, it was shown in \cite{FRIEDRICH199883} that in general asymptotically flat spacetimes the Weyl curvature also becomes singular near spacelike infinity. In order to resolve this singular behaviour, one introduces an additional rescaling with a function $\kappa(\hr)$ with the property that $\kappa(\hr)/\hr$ is finite at $\hr=0$. With this rescaling, and a new time coordinate $t = \kappa(\hr)^{-1}\hat t$, results in
\begin{equation*}
    g = \dd t^2 + \frac{2t\kappa'\dd t \dd\hr}{\kappa} - \frac{1-t^2\kappa'^2}{\kappa^2}\dd\hr^2 - \frac{\hr^2}{\kappa^2}(\dd\theta^2 + \sin^2\theta\,\dd\phi^2),
\end{equation*}
where $\kappa':=\dd\kappa/\dd\hr$.
The effect of the additional conformal factor is a ``blow up'' of spatial infinity to a cylinder. This metric is related to the original Minkowski metric by the conformal transformation
\begin{equation*}
    g = \Theta^2 \tilde{g}, \quad \Theta = \frac{\hr^2 - \kappa^2 t^2}{\kappa}.
\end{equation*}
For consistency with the other representations used in this paper, the coordinate transformation $\hr = 1 - r$ is performed. This results in the physical spacetime being ``to the left'' of the cylinder at $r=1$ instead of ``to the right'' of the cylinder at $\hr=0$. In these coordinates, the conformal metric is
\begin{equation}
\begin{split}
    g = \dd t^2 + \frac{2 t\kappa'\dd t\dd r}{\kappa} - \frac{1-t^2\kappa'^2}{\kappa^2}\dd r^2&\\
    - \frac{(1 - r)^2}{\kappa^2}(\dd\theta^2 + \sin^2\theta\,\dd\phi^2),
\end{split}
\end{equation}
with adjusted definition $\kappa'=\dd\kappa/\dd r$,
and the conformal factor is
\begin{equation}\label{eq:Omega}
    \Theta = \frac{(1-r)^2 - \kappa^2 t^2}{\kappa}.
\end{equation}
We consider the two choices of $\kappa$
\begin{equation}
    \kappa_{\mathrm{diag}} = \frac{\hr}{1+\hr} = \frac{1 - r}{2 - r},\quad
    \kappa_{\mathrm{hor}} = \hr = 1 - r,
\end{equation}
which define the \emph{diagonal} and \emph{horizontal} semi-compactified representations, respectively, see Fig.~\ref{fig:Semi-compactified}. 
\begin{figure}[ht!]
    \begin{subfigure}{0.25\linewidth}
        \centering
        \begin{tikzpicture}[baseline,remember picture]
        
            \draw (-2,-3) -- node[anchor = west]{$\mathscr{I}^-$} (0,-1);
            \draw (0,1)  -- node[anchor = west]{$\mathscr{I}^+$} (-2,3);
            \draw (0,-1) -- (0,1);
            \draw [dashed] (-2,0) -- (0,0);
            \filldraw (0,1) circle (1pt) node[anchor = west]{$I^+$};
            \filldraw (0,-1) circle (1pt) node[anchor = west]{$I^-$};
            
        \end{tikzpicture}
        \caption{The diagonal representation.}
        \label{cylinderd}
    \end{subfigure}
    \begin{subfigure}[b]{0.25\linewidth}
        \centering
        \begin{tikzpicture}[remember picture,scale = 1]
            \useasboundingbox (-3,-3) rectangle (0,3);
            \draw (0,-1) -- node[anchor = north]{$\mathscr{I}^-$} (-2,-1);
            \draw (-2,1)  -- node[anchor = south]{$\mathscr{I}^+$} (0,1);
            \draw (-0,-1) -- (0,1);
            \draw [dashed] (-2,0) -- (0,0);
            \filldraw (0,1) circle (1pt) node[anchor = west]{$I^+$};
            \filldraw (0,-1) circle (1pt) node[anchor = west]{$I^-$};
        
        \end{tikzpicture}
        \caption{The horizontal representation.}
        \label{cylinderh}
    \end{subfigure}
    \caption{Semi-compactified representations of the cylinder.}\label{fig:Semi-compactified}
\end{figure}

Both bring spatial infinity to a finite location, but they are \textit{semi-compactified} in the sense that they do not include the entire physical spacetime and its conformal boundary. The physical origin $R=0$ and past and future timelike infinity are still infinitely far away in the spatial and temporal directions, respectively.

\subsection{Fully-compactified representations}
This representation, developed in \cite{GlobalSimulationsJoerg}, includes the entire physical spacetime and conformal boundary. This is, of course, advantageous if we want to study \emph{fully global} properties. However, it also introduces additional complexity around the physical origin and timelike infinity, as will be seen in the following discussion.

This representation is achieved through the identification of the entire spacetime, including its conformal boundary, as a submanifold of the Einstein cylinder. Beginning with the Minkowski spacetime Eq.~\eqref{eq:MinkPhys}, performing the coordinate transformation
\begin{equation*}
\begin{split}
    \tau &= \arctan(T + R) + \arctan(T - R),\\
    \rho &= \arctan(T + R) - \arctan(T - R),
\end{split}
\end{equation*}
and compactifying with the conformal factor
\begin{equation*}
    \Omega = 2\cos\left(\frac{\tau + \rho}{2}\right)\cos\left(\frac{\tau - \rho}{2}\right),
\end{equation*}
results in the familiar Einstein cylinder metric
\begin{equation}
    \bar{g} = \dd\tau^2 - \dd\rho^2 - \sin^2\rho\,(\dd\theta^2 + \sin^2\theta\,\dd\phi^2).
\end{equation}
Again, we rescale the conformal factor as $\Omega \rightarrow \kappa^{-1}\Omega$, and introduce new coordinates $(t,r)$ such that $(\tau,\rho) = (\kappa(r)f(t),\pi r)$, to get the final conformal metric
\begin{equation}
\begin{split}
    g = \dot{f}^2\dd t^2 + \frac{2\kappa'f\dot{f}}{\kappa}\dd t\dd r - \frac{\pi^2 - f^2\kappa'^2}{\kappa^2}\dd r^2&\\ 
    - \frac{\sin^2(\pi r)}{\kappa^2}(\dd\theta^2 + \sin^2\theta\,\dd\phi^2)&,
\end{split}
\end{equation}
where $\dot f:=\dd f/\dd t$.
There is a large class of functions $\kappa(r)$ and $f(t)$ that maintain regularity at the origin $r=0$ and at the cylinder $r=1$ \cite{GlobalSimulationsJoerg}. One convenient example is
\begin{equation}
    \kappa(r) = \cos\left(\frac{\pi r}{2}\right),\quad f(t) = 2t.
\end{equation} 
With this choice, the spacetime has spatial infinity represented as the cylinder at $r = 1$, timelike infinity is located at $(t,r) = (\pm\frac{\pi}{2},0)$, and null infinity is given by
\begin{equation}
    t = \pm\frac{\pi(1-r)}{2\cos(\frac{\pi r}{2})}.
\end{equation}
This representation is called the \emph{fully-compactified diagonal representation}, see Fig.~\ref{fig:FullyCompactifiedDiag}.

\begin{figure}[htbp]
    \centering
    \begin{tikzpicture}
        \draw (-0.5,-2) -- (5.5,-2);
        \draw (-0.5,2) -- (5.5,2);
        \draw (-0.5,-2) -- (-0.5,2);
        \draw (5.5,-2) -- (5.5,2);
        \draw[step=1cm,gray,very thin] (-0.5,-2) grid (5,2);
        \draw[very thick] (5,-1) -- (5,1);
        \draw[dashed] (0,-1.57) -- (0, 1.57);
        \draw plot coordinates {(0,1.570796327) (0.5,1.431338851) (1,1.3213064) (1.5,1.234062152) (2,1.164966623) (2.5,1.110720735) (3,1.068959332) (3.5,1.037992862) (4,1.016640738) (4.5,1.004124204) (5,1)};
        \draw plot coordinates {(0,-1.570796327) (0.5,-1.431338851) (1,-1.3213064) (1.5,-1.234062152) (2,-1.164966623) (2.5,-1.110720735) (3,-1.068959332) (3.5,-1.037992862) (4,-1.016640738) (4.5,-1.004124204) (5,-1)};
        \node [anchor = south] at (2.5,1.2) {$\mathscr{I}^+$};
        \node [anchor = north] at (2.5,-1.2) {$\mathscr{I}^-$};
        \filldraw (0,1.57) circle (1pt);
        \node [anchor = north] at (-0.25,2) {$i^+$};
        \filldraw (0,-1.57) circle (1pt);
        \node [anchor = south] at (-0.25,-2) {$i^-$};
        \node [anchor = north] at (0,-2) {0};
        \node [anchor = north] at (1,-2) {0.2};
        \node [anchor = north] at (2,-2) {0.4};
        \node [anchor = north] at (3,-2) {0.6};
        \node [anchor = north] at (4,-2) {0.8};
        \node [anchor = north] at (5,-2) {1};
        \node [anchor = east] at (-0.5,-2) {-2};
        \node [anchor = east] at (-0.5,-1) {-1};
        \node [anchor = east] at (-0.5,0) {0};
        \node [anchor = east] at (-0.5,1) {1};
        \node [anchor = east] at (-0.5,2) {2};
        \node [anchor = west] at (4.9,1) {$I^+$};
        \node [anchor = west] at (4.9,-1) {$I^-$};
        \node [anchor = east] at (-1,0) {$t$};
        \node [anchor = north] at (2.5,-2.5) {$r$};
    \end{tikzpicture}
    
    \caption{The fully-compactified diagonal representation.}\label{fig:FullyCompactifiedDiag}
    
\end{figure}
An alternative choice of $\kappa$ and $f$ is
\begin{equation}
    \kappa(r) = \cos\left(\frac{\pi r}{2}\right),\quad f(t) = \frac{1}{20}\tanh^{-1}(t).
\end{equation}
Here, the cylinder is located at $r = 1$, and null infinity is given by
\begin{equation}
    t = \pm \tanh\left(\frac{20\pi(1-r)}{\cos(\frac{\pi r}{2})}\right).
\end{equation}
The factor inside the $\tanh$ was chosen in such a way that the the variation of the right-hand side varies over the entire interval $[0,1]$ is sufficiently small, so that $\scri^{\pm}$ are numerically indistinguishable from horizontal boundaries. Nonetheless, the cone structure at $r=0$ is intrinsic and remains. Consequently, this representation behaves as the horizontal representation while simultaneously compactifying the entire spacetime into $0 \leq r \leq 1$, with timelike infinity located at $i^\pm = (\pm 1, 0)$. We therefore refer to this as the \emph{fully-compactified horizontal representation}, see Fig.~\ref{fig:FullyCompactifiedHoriz}.
\begin{figure}[h!]
    \centering
    \begin{tikzpicture}
        \draw (-0.5,-2) -- (5.5,-2);
        \draw (-0.5,2) -- (5.5,2);
        \draw (-0.5,-2) -- (-0.5,2);
        \draw (5.5,-2) -- (5.5,2);
        \draw[step=1cm,gray,very thin] (-0.5,-2) grid (5,2);
        \draw[very thick] (5,-1) -- (5,1);
        \draw[dashed] (0,-1) -- (0, 1);
        \draw (0,1) -- (5,1);
        \draw (0,-1) -- (5,-1);
        \node [anchor = south] at (2.5,1.2) {$\mathscr{I}^+$};
        \node [anchor = north] at (2.5,-1.2) {$\mathscr{I}^-$};
        \filldraw (0,1) circle (1pt);
        \node [anchor = north] at (-0.25,1.6) {$i^+$};
        \filldraw (0,-1) circle (1pt);
        \node [anchor = south] at (-0.25,-1.5) {$i^-$};
        \node [anchor = north] at (0,-2) {0};
        \node [anchor = north] at (1,-2) {0.2};
        \node [anchor = north] at (2,-2) {0.4};
        \node [anchor = north] at (3,-2) {0.6};
        \node [anchor = north] at (4,-2) {0.8};
        \node [anchor = north] at (5,-2) {1};
        \node [anchor = east] at (-0.5,-2) {-2};
        \node [anchor = east] at (-0.5,-1) {-1};
        \node [anchor = east] at (-0.5,0) {0};
        \node [anchor = east] at (-0.5,1) {1};
        \node [anchor = east] at (-0.5,2) {2};
        \node [anchor = west] at (4.9,1) {$I^+$};
        \node [anchor = west] at (4.9,-1) {$I^-$};
        \node [anchor = east] at (-1,0) {$t$};
        \node [anchor = north] at (2.5,-2.5) {$r$};
    \end{tikzpicture}
    
    \caption{The fully-compactified horizontal representation.}\label{fig:FullyCompactifiedHoriz}
    
\end{figure}

\subsection{The linearised equations}

As we are interested in gravitational radiation propagating from past null infinity to future null infinity, the generalised conformal field equations of Helmut Friedrich \cite{friedrich1995einstein} provide a natural framework. We employ these equations in the space-spinor formalism \cite{Beyer_2017}. As a first step, we consider linear perturbations about Minkowski spacetime. In this setting, the perturbation equations for the gravitational spinor $\psi_{ABCD} := \Theta^{-1}\Psi_{ABCD}$ decouple from the remaining perturbations, so that gravitational radiation can be analysed within a symmetric hyperbolic subsystem. We call the eight complex components of this subsystem the \emph{spin-2 equations}. These correspond to the components of the second Bianchi identity in vacuum, $\nabla_{A'}{}^A\psi_{ABCD} = 0$.

We now introduce a null tetrad in order to implement the $\eth$-formalism, an elegant way to circumvent the pole problem: The associated derivative operators act simply on the Spin-Weighted Spherical Harmonic (SWSH) basis functions ${}_sY_{lm}$, and admit a fast and accurate numerical implementation. For a metric of the general form
\begin{equation}\label{eq:genLE}
    g = a^2 \text{d}t^2 - 2c\,\text{d}t\text{d}r - b\,\text{d}r^2 - g^2r^2(\text{d}\theta^2 + \sin^2\theta\,\text{d}\phi^2),
\end{equation}
we can define a null tetrad via
\begin{equation}
    \begin{split}
        l^\mu &= \frac{1}{\sqrt{2}}(A,B,0,0), \\
        n^\mu &= \frac{1}{\sqrt{2}}(C,-B,0,0), \\
        m^\mu &= \frac{1}{\sqrt{2gr}}(0,0,1,\frac{\ii}{\sin\theta}),
    \end{split}
\end{equation}
where
\begin{equation}
    \begin{split}
        A &= \frac{1}{a}\left(1 + \frac{c}{\sqrt{a^2b^2 + c^2}}\right), \\
        B &= \frac{a}{\sqrt{a^2b^2 + c^2}}, \\
        C &= \frac{1}{a}\left(1 - \frac{c}{\sqrt{a^2b^2 + c^2}}\right).
    \end{split}
\end{equation}
The $\eth$ and $\eth'$ derivatives from the GHP-formalism \cite{penrose1984spinors} for this tetrad are implemented via
\begin{equation}
    \begin{split}
         \eth f &= (m^a\nabla_a - p \beta + q\bar\beta')f, \\
         \eth' f &= (\bar m^a\nabla_a + p \beta' - q\bar\beta)f,
    \end{split}
\end{equation}
where $\beta$ and $\beta'$ are spin-coefficients, and $p$ and $q$ are the GHP weights of the weighted quantity $f$. 

At this stage, it is important to note that Eq.~\eqref{eq:genLE} describes \emph{round} 2-spheres for constant $t$ and $r$. As there exist fast and accurate methods to implement the $\eth$-operators numerically for the unit 2-sphere \cite{beyer2016}, denoted $\hat\eth$ and $\hat\eth'$, we will utilise the simple relations
\begin{equation}
    \eth = \frac{1}{\sqrt{2}gr}\hat\eth, \qquad
    \eth' = \frac{1}{\sqrt{2}gr}\hat\eth'.
\end{equation}

To exploit the linearity and spherical nature of the setting, the components of the gravitational spinor are expanded in the SWSH basis ${}_sY_{lm}$ \cite{beyer2016}. Namely, for a type-$\{p,q\}$ function $f$ \emph{\`a la}  \cite{penrose1984spinors}, which has spin-weight $s=(p-q)/2$ and is $L^2$-integrable over the 2-sphere, we have the representation
\begin{equation}
    f = \sum_{l=|s|}^\infty\sum_{m=-l}^l a_{lm}\;{}_sY_{lm},
\end{equation}
where $a_{lm}$ are the spectral coefficients for $f$, which are constant on the sphere.

The $\hat\eth$-operators act on the SWSH basis functions via
\begin{equation}
    \begin{split}\label{eq:ethaction}
        \hat\eth\,{}_sY_{lm} = -\sqrt{l(l+1)-s(s+1)}\,{}_{s+1}Y_{lm}, \\
        \hat\eth'\,{}_sY_{lm} = \sqrt{l(l+1)-s(s-1)}\,{}_{s-1}Y_{lm}.
    \end{split}
\end{equation}
An important consequence of the linearity of the spin-2 equations is that each term in an SWSH expansion evolves independently, i.e.\ there is no mode mixing. We therefore formulate the equations for general SWSH modes $(l,m)$, with the restriction $l \ge 2$ appropriate for radiative (spin-2) degrees of freedom.
We adopt a slightly informal notation and use $\psi_k$ both for the full components of the spin-2 field and for their individual $(l,m)$ modes, with the meaning being clear from context.

The eight linearised equations for the $(l,m)$ modes split into five evolution equations and three constraint equations. In the semi-compactified regime, these take the form \cite{doulis2013second}
\begin{equation} \label{ethevo}
    \begin{split}
        (1 + t \kappa') \partial_t \psi_0 &= \kappa\partial_r\psi_0 - \left(3\kappa' +\frac{\kappa}{1-r}\right)\psi_0 - \frac{\kappa}{1-r}a_0\psi_1 ,\\
        \partial_t \psi_1 &= \frac{\kappa}{2(1-r)}\left(a_0\psi_0+2\psi_1-a_1\psi_2\right),\\
        \partial_t \psi_2 &= \frac{\kappa}{2(1-r)}(a_1\psi_1 - a_2\psi_3),\\
        \partial_t \psi_3 &= \frac{\kappa}{2(1-r)}\left(a_2\psi_2-2\psi_3-a_3\psi_4\right),\\
        (1 - t\kappa') \partial_t \psi_4 &= -\kappa\partial_r\psi_4 + \frac{\kappa}{1-r}a_3\psi_3+\left(3\kappa' +\frac{\kappa}{1-r}\right)\psi_4 ,
    \end{split}
\end{equation}
and
\begin{equation} \label{ethconstr}
    \begin{split}
        2(1-r)\partial_r\psi_1 &= a_0(1+t\kappa')\psi_0+2\left[3 + \left(3\frac{1-r}{\kappa}+t\right)\kappa'\right]\psi_1 \\
        & \quad\quad +a_1(1-t\kappa')\psi_2,\\
        2(1-r)\partial_r\psi_2 &= a_1(1+t\kappa')\psi_1 +6\left(1+\frac{1-r}{\kappa}\kappa'\right)\psi_2\\
        & \quad\quad +a_2(1-t\kappa')\psi_3,\\
        2(1-r)\partial_r\psi_3 &=a_2(1+t\kappa')\psi_2 +2\left[3+\left(3\frac{1-r}{\kappa}-t\right)\kappa'\right]\psi_3\\
        & \quad\quad +a_3(1-t\kappa')\psi_4,
    \end{split}
\end{equation}
where
\begin{equation}\label{eq:akDefn}
    a_k = \sqrt{l(l+1)-(k-1)(k-2)}.
\end{equation}
The coefficient $a_k$ arises upon acting on a spin-weighted spherical harmonic $\swsh{2-k}{lm}$ with the $\eth'$ operator through Eq.~\eqref{eq:ethaction}.

Note that different choices of the coordinate function $\kappa(r)$ in these equations are completely equivalent, 
as the corresponding spacetimes differ only by a coordinate transformation combined with a conformal transformation, the latter arising because our choice of conformal factor \eqref{eq:Omega} depends explicitly on $\kappa$. Consequently, if $\psi_k$ and $\tilde\psi_k$ represent the same solution of the spin-2 equations
expressed in coordinates associated with $\kappa$ and $\tilde\kappa$, respectively, we obtain the transformation law
\begin{equation}\label{eq:psitrafo}
    \tilde\psi_k(\tilde t, r) = \frac{\tilde\kappa^3(r)}{\kappa^3(r)} 
    \psi_k\!\left(\frac{\tilde\kappa(r)}{\kappa(r)}\,\tilde t, r\right).
\end{equation}
This relation can also be verified directly by checking that, if $\psi_k$ satisfies the constraint
and evolution equations for a given function $\kappa$, then $\tilde\psi_k$ defined by
\eqref{eq:psitrafo} satisfies the corresponding equations with $\tilde\kappa$.
As a result, most analytical considerations may be restricted to the simplest choice
$\kappa=\kappa_{\mathrm{hor}}=1-r$, with all results readily transferred to arbitrary coordinates
associated with a general function $\tilde\kappa$.

On the other hand, in the fully-compactified regime, defining
\begin{equation}
    \begin{split}
        A = \frac{1}{\dot{f}(t)}&\left(1 + \frac{f(t)}{2}\sin\left(\frac{\pi r}{2}\right)\right),\quad B = \frac{1}{\pi}\cos\left(\frac{\pi r}{2}\right),\\&\quad C = \frac{1}{\dot{f}(t)}\left(1 - \frac{f(t)}{2}\sin\left(\frac{\pi r}{2}\right)\right),\\
            &g %= \frac{\sin(\pi r)}{r\cos(\frac{\pi r}{2})},
                = \frac{2}{r}\sin\left(\frac{\pi r}{2}\right),
            \quad \epsilon = \frac{1}{4\sqrt{2}}\sin\left(\frac{\pi r}{2}\right),\quad \\ &\rho = -\frac{\sin(\frac{\pi r }{2}) + 2\cos(\frac{\pi r}{2})\cot(\pi r)}{2\sqrt{2}},
    \end{split}
\end{equation}
the evolution and constraint equations are \cite{GlobalSimulationsJoerg}
\begin{equation}\label{compactevo}
    \begin{split}
        C\partial_t\psi_0 &= B\partial_r\psi_0 + \sqrt{2}\left(4\epsilon - \rho\right)\psi_0 - \frac{a_0}{gr}\psi_1,\\
        (A+C)\partial_t\psi_1 &= \frac{a_0}{gr}\psi_0 + 2\sqrt{2}\left(2\epsilon + \rho\right)\psi_1 - \frac{a_1}{gr}\psi_2,\\
        (A+C)\partial_t\psi_2 &= \frac{a_2}{gr}(\psi_1 - \psi_3),\\
        (A+C)\partial_t\psi_3 &= \frac{a_2}{gr}\psi_2-2\sqrt{2}(2\epsilon+\rho)\psi_3 - \frac{a_3}{gr}\psi_4,\\
        A\partial_t\psi_4 &= - B\partial_r\psi_4 + \frac{a_3}{gr}\psi_3 - \sqrt{2}\left(4\epsilon-\rho\right)\psi_4 ,
    \end{split}
    \end{equation}
and
\begin{equation}\label{compactconstraint}
    \begin{split}
        &2(A+C)B\partial_r\psi_1 - \frac{2a_0}{gr}C\psi_0 \\ &
           \qquad + 4\sqrt{2}\left[A(\epsilon - \rho)- C(2\rho + \epsilon)\right]\psi_1 - \frac{2a_1}{gr}A\psi_2 = 0, \\
        &2(A+C)B\partial_r\psi_2 - \frac{2a_2}{gr}(C\psi_1 + A\psi_3)\\ &
          \qquad - 6\sqrt{2}(A+C)\rho\psi_2 = 0,\\
        &2(A+C)B\partial_r\psi_3 - \frac{2a_2}{gr}C\psi_2 \\&  
        \qquad +4\sqrt{2}\left[C(\epsilon - \rho) - A(2\rho + \epsilon)\right]\psi_3 - \frac{2a_3}{gr}A\psi_4 = 0,
    \end{split}
\end{equation}
respectively.

\section{Initial data on $\scri^-$}\label{initial data}

\subsection{The equations and issues to overcome}

In vacuum asymptotically flat spacetimes, the physical degrees of freedom of the spin-2 field are entirely represented by the complex-valued function $\psi_0$ on $\scrim$---the ingoing gravitational radiation. Before we can evolve the spin-2 system beyond this surface, we must first fix all $\psi_k$ components there in a manner consistent with the spin-2 equations.
Recall that the linearised equations consist of five evolution equations and three constraint equations (i.e.\ Eqs.\ \eqref{ethevo} and \eqref{ethconstr} or Eqs.\ \eqref{compactevo} and \eqref{compactconstraint}). 
If we decompose the spin-2 equations with respect to a null vector field adapted to $\scrim$ instead of a timelike vector field, the eight components split into four intrinsic to $\scrim$ and four extrinsic components.
The four intrinsic equations, together with a given $\psi_0$, form a system determining the remaining $\psi_k$ on $\scrim$, which constitute the initial data set.
These equations are
\begin{equation} \label{eq:general_lin}
    A \partial _t \psi_k + B \partial _r \psi_k - \sqrt{2} \bigl( (5-k) \rho + 2(2-k) \varepsilon \bigr) \psi_k = 2 a_{k-1} \psi_{k-1},
\end{equation}
where $k \in \set{1,2,3,4}$, and $a_k$ is given in Eq.~\eqref{eq:akDefn}.
%\begin{equation}\label{eq:akDefn}
%    a_k = \sqrt{6 - (k-2)(k-1)}.
%\end{equation}
% \begin{equation*}
%     a_k = \sqrt{l(l+1) - (k-2)(k-1)}.
% \end{equation*}

Since we consider a linearisation about Minkowski spacetime, $\scrim$ can be easily described in terms of the background's $(t, r)$ coordinates.
We will begin by looking at the semi-compactified gauge, where
\begin{equation}
    \scrim = \set{t = - \frac{\hat r}{\kappa(\hat r)}}.
\end{equation}
In this gauge, the equations \eqref{eq:general_lin} simplify to 
\begin{equation}
    \frac{1+t \kappa'}{\kappa}\partial_t \psi_k - \partial_{\hat r} \psi_k = \left(\frac{5-k}{\hat r} - \frac{3 \kappa'}{\kappa}\right) \psi_k + \frac{a_{k-1}}{\hat r} \psi_{k-1}.
\end{equation}
$\scri$ is, by construction, a characteristic surface of each of these equations, as can also be seen from a direct computation. If $\hat r$ is chosen as a parameter along $\scri$, the equations reduce to a system of ODEs along $\scri$.
In the semi-compactified horizontal representation, where $\kappa(\hat r) = \hat r$, this yields
\begin{equation} \label{eq:semi scrim eqns}
	\diff{\psi_k}{\hat r} =  \frac{k-2}{\hat r} \psi_k - \frac{a_{k-1}}{\hat r} \psi_{k-1}, \quad k \in \set{1,2,3,4}.
\end{equation}

Applying the same process to both fully-compactified representations gives
\begin{equation} \label{eq:main}
	\diff{\psi_k}{r} + \frac{\pi\bigl(3 + (7 - 2 k) \cos(\pi r)\bigr)}{2 \sin(\pi r)} \psi_k = \frac{\pi a_{k-1} }{\sin(\pi r)} \psi_{k-1}.
\end{equation}

Regardless of the gauge choice above, the ODEs are of the form
\begin{equation}
    \psi_k' = - f_k \psi_k + g_k,
\end{equation}
for some functions $f_k$ and $g_k$. 
These can be solved with integrating factors $\mu_k = e^{\int f_k(r) dr}$ to obtain the general solutions.
In the semi-compactified horizontal representation, this yields
\begin{equation} \label{eq:horiz_integral}
	\psi_k(\hr) = - a_k \hr^{k-2} \int_0^{\hat r} \rho^{1-k} \psi_{k-1}(\rho) \dl \rho + C_k \hr^{k-2}, 
\end{equation}
while in the fully-compactified representations, we get
\begin{multline} \label{eq:integral}
    \psi_k(r) = \frac{\pi a_{k-1}}{2} s(r)^{k-5} c(r)^{k-2} 
    \\
    \times \left(\int_0^r s(\rho)^{4-k} c(\rho)^{1-k} \psi_{k-1}(\rho) \dl \rho + C_k\right),
\end{multline}
where
\begin{equation*}
    s(r) = \sin\left(\frac{\pi r}{2}\right), \quad c(r) = \cos\left(\frac{\pi r}{2}\right),
\end{equation*}
and the $C_k$ are integration constants.

The intrinsic equations possess a natural hierarchical structure. Once the freely specifiable ingoing radiation field $\psi_0$ is prescribed, the remaining components are determined sequentially, with each equation involving only $\psi_k$ and the previously determined component $\psi_{k-1}$. This ladder-like structure is closely related to that underlying the Teukolsky--Starobinsky identities on type-D backgrounds \cite{starobinskii1974amplification,teukolsky1974perturbations,kalnins1989teukolsky}, where repeated use of the first-order Bianchi equations yields direct relations between the spin $\pm2$ Weyl scalars. In the present characteristic setting, we instead exploit the hierarchy to construct the remaining components by successive integration along $\scrim$.

Studying the solutions in integral form is helpful for determining conditions on the choice of ingoing radiation $\psi_0$ for which the full initial data set (i.e.\ all $\psi_k$) is regular on all or part of $\scrim$.
It is of direct physical relevance if the initial data set is not regular at past timelike infinity $i^-$ or at the bottom of the cylinder $I^-$. 

Non-regularity at $i^-$ signals that a non-decaying Coulomb-type field (i.e.\ rest mass yielding a non-vanishing $\psi_2$) is already present in the infinite past sourcing $\psi_{ABCD}$. This is of interest as black hole spacetimes fall within this class. In this case, $i^-$ is not a part of a smooth conformal boundary, and some if not all of the $\psi_k$ diverge there. 

Regularity at $I^-$ requires precise cancellations in the asymptotic expansions of data near spacelike infinity. Generic data lead to polyhomogeneous (logarithmic) terms in the $\psi_k$, which propagate to future null infinity and destroy its smoothness, thereby violating the classical peeling behaviour. Physically, this reflects the presence of a persistent long-range gravitational field that cannot be cleanly separated from radiation, rather than a purely transient radiative process.

Hence, investigating properties of initial data sets on $\scrim$ requires us to look closely at regularity of the $\psi_k$ at one or both ends of $\scrim$, namely $i^-$ and $I^-$. It is important to note that because the spin-2 equations intrinsic to $\scrim$ are governed by \emph{first order} equations, we cannot enforce regularity at both ends of $\scrim$ simultaneously. In the following sections, we investigate these questions.

\subsection{Regularity conditions on $\scrim$ in the semi-compactified representation}
\label{Regularity SCG}
In this section, we look for conditions on the ingoing wave $\psi_0$ so that the other components $\psi_k$, as given by the integral \eqref{eq:horiz_integral}, do not diverge anywhere. This corresponds to a completely non-radiative, zero-mass past.

In the semi-compactified representation, $I^-$, the bottom of the cylinder, is located at $\hat r = 0$, and a piece of $\scrim$ in a neighbourhood of $I^{-} $ is parametrised by $0 \leq \hat r < \infty$. 
Because this representation does not compactify to past timelike infinity, we cannot access all of $\scrim$. Instead, we will look for initial data sets that cover a finite portion away from the cylinder. We will set $\psi_0$ to have compact support $[b, a]$ for some $0 \leq b \leq a$, and require it to be continuously differentiable.
%We will then look for initial data sets that have compact support $[0, a)$, and which are continuous and differentiable everywhere, as well as analytic in a neighbourhood of $\hat r=0$. 
% This is illustrated in \cref{fig:sc}.
% \begin{figure}[htbp]
% 	\centering
% 	\begin{subfigure}{0.3\textwidth}
% 		\begin{tikzpicture}
% 			\centering
% 			% Draw scrim
% 			\draw[thick, ->] (0,-1) -- (-2,-3) node[midway, left] {$\scrim$};
		
% 			\draw (0, -1) node[right] {$I^-$ ($r = 0$)};
	
% 			\draw[thick, DodgerBlue] (-0.2, -1.2) -- (-1, -2) node[midway, right, DodgerBlue] {$\psi_0 \neq 0$};
	
% 			\filldraw[DodgerBlue] (-0.2, -1.2) circle (1pt) node[DodgerBlue, right]{$b$};
% 			\filldraw[DodgerBlue] (-1, -2) circle (1pt) node[DodgerBlue, right]{$a$};

% 			% Draw the cylinder
% 			\draw[thick] (0, -1) -- (0, 1) node[midway, right] {$I$};
% 		\end{tikzpicture}
% 	\end{subfigure}
% 	\begin{subfigure}{0.3\textwidth}
% 		\centering
% 		\begin{tikzpicture}
% 			% Draw scrim
% 			\draw[thick, ->] (0,-1) -- (-2,-3) node[midway, left] {$\scrim$};
		
% 			\draw (0, -1) node[right] {$I^-$ ($r = 0$)};
	
% 			\draw[thick, DodgerBlue] (0, -1) -- (-1, -2) node[midway, right, DodgerBlue] {$\psi_k \neq 0$};
	
% 			\filldraw[DodgerBlue] (0, -1) circle (1pt) {};
% 			\filldraw[DodgerBlue] (-1, -2) circle (1pt) node[DodgerBlue, right]{$a$};

% 			% Draw the cylinder
% 			\draw[thick] (0, -1) -- (0, 1) node[midway, right] {$I$};
% 		\end{tikzpicture}
% 	\end{subfigure}
% 	\caption{Schematic illustrating the region where $\psi_k$ may be non-zero.}
% 	\label{fig:sc}
% \end{figure} 

We begin by noting some conditions that are required for regularity at $\hat r = 0$.
Note that these conditions will be trivially satisfied in the case $b > 0$.  
\begin{prop}  \label{prop:r0 conds}
	In the semi-compactified representation, for all $\psi_k(\hat{r})$ to be continuously differentiable at $\hat r = 0$, we require
	\begin{subequations} \label{eq:sc cylinder conds}
		\begin{align}
			\psi_1(0) = \psi_0(0) = 0,
			\\
			\psi_3(0) = a_2 \psi_2(0),
			\\
			\psi_4(0) = \frac{a_3}{2} \psi_3(0),
		\end{align}
	\end{subequations}
    where the $a_k$ are given by Eq.~\eqref{eq:akDefn}.
    Furthermore, if we assume that each $\psi_k$ is twice continuously differentiable at $\hat{r} = 0$, then $\psi_0 = \mathcal{O}(\hat{r}^2)$ and $\psi_1 = \mathcal{O}(\hat{r}^2)$. Similarly, if each $\psi_k$ is thrice continuously differentiable at $\hat{r} = 0$, then $\psi_0 = \mathcal{O}(\hat{r}^3)$ and $\psi_1 = \mathcal{O}(\hat{r}^3)$.
\end{prop}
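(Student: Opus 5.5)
The plan is to work directly with the intrinsic ODEs \eqref{eq:semi scrim eqns} in the horizontal semi-compactified representation, written in the non-singular form
\begin{equation*}
  \hat r\,\psi_k'(\hat r) = (k-2)\,\psi_k(\hat r) - a_{k-1}\,\psi_{k-1}(\hat r), \qquad k\in\set{1,2,3,4}.
\end{equation*}
All conditions at $\hat r=0$ will come from evaluating this identity, and its successive derivatives, at $\hat r = 0$. The regularity hypothesis ensures that the term carrying the explicit factor $\hat r$ drops out. Throughout, I use that for $l\ge 2$ all coefficients $a_0,\dots,a_3$ in \eqref{eq:akDefn} are strictly positive: $a_0=a_3=\sqrt{l(l+1)-2}$ and $a_1=a_2=\sqrt{l(l+1)}$. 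This lets me divide by them.

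\emph{First order.} If each $\psi_k$ is $C^1$ at $\hat r=0$, then $\hat r\psi_k'(\hat r)\to 0$, so $(k-2)\psi_k(0)=a_{k-1}\psi_{k-1}(0)$ for each $k$.
\begin{itemize}
\item $k=2$ gives $a_1\psi_1(0)=0$, hence $\psi_1(0)=0$.
\item $k=1$ gives $-\psi_1(0)=a_0\psi_0(0)$, hence $\psi_0(0)=0$.
\item $k=3$ gives $\psi_3(0)=a_2\psi_2(0)$.
\item $k=4$ gives $2\psi_4(0)=a_3\psi_3(0)$.
\end{itemize}
These are exactly the conditions \eqref{eq:sc cylinder conds}.

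\emph{Higher orders.} Differentiating the identity $j$ times (Leibniz rule on $\hat r\psi_k'$) gives
\begin{equation*}
  \hat r\,\psi_k^{(j+1)} + j\,\psi_k^{(j)} = (k-2)\,\psi_k^{(j)} - a_{k-1}\,\psi_{k-1}^{(j)}.
\end{equation*}
Under $C^{j+1}$ regularity, evaluation at $\hat r=0$ yields
\begin{equation*}
  (j+2-k)\,\psi_k^{(j)}(0) = -a_{k-1}\,\psi_{k-1}^{(j)}(0).
\end{equation*}
For $j=1$ I start at $k=3$, where the left side vanishes, so $\psi_2'(0)=0$. Then $k=2$ gives $\psi_2'(0)=-a_1\psi_1'(0)$, so $\psi_1'(0)=0$. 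Then $k=1$ gives $2\psi_1'(0)=-a_0\psi_0'(0)$, so $\psi_0'(0)=0$. Combined with $\psi_0(0)=\psi_1(0)=0$, Taylor's theorem gives $\psi_0,\psi_1=\mathcal O(\hat r^2)$.

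For $j=2$ the vanishing coefficient occurs at $k=4$, giving $\psi_3''(0)=0$. Cascading downward through $k=3,2,1$ then forces $\psi_2''(0)=\psi_1''(0)=\psi_0''(0)=0$. Hence $\psi_0,\psi_1=\mathcal O(\hat r^3)$.

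The step needing the most care is the analytic justification that $\hat r\,\psi_k^{(j+1)}(\hat r)\to0$. The $j$-fold differentiated identity is needed at and near $0$, where the raw equation has the $1/\hat r$ singular coefficients. Matching the differentiability order precisely to each claim is therefore essential: $C^1$ for the values, $C^2$ for the first derivatives, $C^3$ for the second derivatives. I would handle this by working only with the multiplied-through form above, where all coefficients are smooth, and by invoking continuity of the top derivative at $0$.

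A second point to check is the correct reading of the hypothesis in the statement. "Twice continuously differentiable" must be used to justify the $j=1$ evaluation, namely continuity of $\psi_k''$ so that $\hat r\psi_k''\to0$. If only $C^1$ up to the boundary were available, I would instead use a difference-quotient argument on the integral representation \eqref{eq:horiz_integral}.
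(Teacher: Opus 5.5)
Your proof is correct. For the first-order conditions it coincides with the paper's: both evaluate $\hat r\psi_k' = (k-2)\psi_k - a_{k-1}\psi_{k-1}$ at $\hat r = 0$. For the $C^2$ and $C^3$ statements, however, you take a different route.

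The paper eliminates the intermediate components. It iterates the rearranged equation $\psi_{k-1} = \bigl((k-2)\psi_k - \hat r\psi_k'\bigr)/a_{k-1}$ to obtain the closed identities $\psi_1 = \hat r^2\psi_3''/(a_1a_2)$ and $\psi_1 = -\hat r^3\psi_4'''/(a_1a_2a_3)$. Boundedness of $\psi_3''$ (resp. $\psi_4'''$) then gives the order of $\psi_1$ at once. The order of $\psi_0$ follows from $\psi_0 = -(\psi_1 + \hat r\psi_1')/a_0$ together with a separate L'H\^opital-type estimate showing $\psi_1' = \mathcal{O}(\hat r)$ (resp. $\mathcal{O}(\hat r^2)$).

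You instead differentiate the multiplied-through identity $j$ times and obtain the Taylor-coefficient recursion $(j+2-k)\psi_k^{(j)}(0) = -a_{k-1}\psi_{k-1}^{(j)}(0)$. The vanishing factor at $k=j+2$ then triggers a downward cascade. The two approaches trade off as follows.
\begin{itemize}
\item The paper's argument is shorter and bounds the functions directly through a single explicit identity.
\item Yours is more systematic. One formula handles every order, and it removes the need for the separate estimate on $\psi_1'$.
\item Yours also yields extra information for free: $\psi_2'(0)=0$, and $\psi_2''(0)=\psi_3''(0)=0$.
\item Yours shows where the constraints come from: resonances with the homogeneous exponents $\hat r^{k-2}$. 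Consequently no new conditions arise for $j\ge 3$. The uncancelled $j=2$, $k=4$ resonance is exactly what produces the $\hat r^2\ln\hat r$ behaviour of $\psi_4$ when $\psi_0$ is only $\mathcal{O}(\hat r^2)$.
\end{itemize}
Your closing remark about a difference-quotient argument under mere $C^1$ regularity is not needed, since the statement assumes $C^2$ and $C^3$ respectively.
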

\begin{proof}
	The entire right-hand side of \cref{eq:semi scrim eqns} has a factor of $1/\hat r$, which is singular at $\hat r = 0$. 
	Thus, for $\difs{\psi _k}{\hat r}$ to be finite, we require that
	\begin{equation}
		(k-2) \psi_k - a_{k-1} \psi_{k-1} = \mathcal{O}(\hat r)
	\end{equation}
	around $\hat r = 0$. This implies
	\begin{equation}
		(k-2) \psi_k(0) - a_{k-1} \psi_{k-1}(0) = 0.
	\end{equation}
	Evaluating this equation for each $k \in \set{1, 2, 3, 4}$, we obtain \cref{eq:sc cylinder conds}.
    
    For the second and third statements, we start by rearranging Eq.~\eqref{eq:semi scrim eqns} to solve for $\psi_{k-1}$. Doing this yields
	\begin{equation}
	\label{eq:semi scrim rearranged}
	\psi_{k-1} = \frac{1}{a_{k-1}} \left ( (k-2)\psi_k - \hat{r} \diff{\psi_k}{\hat r} \right ).
	\end{equation}
    
	If we assume that all $\psi_k$ are twice continuously differentiable and solve for $\psi_1$ in terms of $\psi_3$, we find that
	\begin{align*}
	\psi_1 & = \frac{\hat{r}^2}{a_1a_2}\diff[2]{\psi_3}{\hat{r}}.
	\end{align*}
	Thus, since $\psi_3$ is twice continuously differentiable, we find $\psi_1 = \mathcal{O}(\hat{r}^2)$. Then, as $\psi_1$ is twice continuously differentiable, using a L'H\^opital's rule argument yields that d${\psi_k}/$d$\hat{r} = \mathcal{O}(\hat{r})$. Hence, using this and Eq.~\eqref{eq:semi scrim rearranged} for $\psi_0$, we obtain
	\begin{align*}
	\psi_0 & = -\frac{1}{a_0} \left (\psi_1 + \hat r \diff{\psi_1}{\hat r} \right ) = \mathcal{O}(\hat{r}^2). 
	\end{align*}
	
	Assuming all $\psi_k$ are thrice continuously differentiable yields 
	\begin{align*}
	\psi_1 & = -\frac{\hat{r}^3}{a_1a_2a_3}\diff[3]{\psi_4}{\hat{r}} = \mathcal{O}(\hat{r}^3).
	\end{align*}
	With a similar line of reasoning as in the previous case, we can conclude $\psi_0 = \mathcal{O}(\hat{r}^3)$. 
\end{proof}

In general, a solution with $\psi_k(\hat r) = 0$ for $\hat r \geq a$ may not be regular at $\hat r = 0$. Hence, we want to determine requirements on $\psi_0$ such that all $\psi_k$ are differentiable, and potentially vanishing there whilst also having compact support.  
To initiate this discussion, we temporarily move away from initial data with compact support on $\scrim$ and focus on ensuring regularity for general initial data. 

The ODEs we are solving are linear, meaning we can use a simple integrating factor to find a solution. However, care is needed when we do this, since the equations are singular at $\hat r = 0$. We start by constructing our solutions without any constants of integration and integrate from $0$ to an arbitrary $\hat{r}$, as this proves most fruitful later on. Thus, from Eq.~\eqref{eq:horiz_integral}, these solutions take the form
\begin{align} 
\psi_k(\hat r) = -a_{k-1} \hat r^{k-2} \int_0^{\hat r} \rho^{1-k}\psi_{k-1}(\rho) \dl \rho. \label{eq:horiz_int_2}
\end{align}
Obviously, integrating from $\rho=0$ only makes sense if we can ensure that the integrand is integrable there, which is not trivial, given that for $k=2,3,4$, we have a factor of $1/\rho$, $1/\rho^2$, and $1/\rho^3$ in the integrand, respectively. The information provided in Proposition \ref{prop:r0 conds} helps with this problem, as the nature of Eq.~\eqref{eq:horiz_int_2} has a form of `decay preservation' at $\hat{r} = 0$ when going from $\psi_{k-1}$ to $\psi_k$, given the decay of $\psi_0$ is sufficiently fast. This is summarised in the following. 

\begin{lemma}\label{lem:nice sols}
Let $\psi_0$ be continuously differentiable for $\hat{r}\ge 0$ and $\psi_0 = \mathcal{O}(\hat r^n)$ with $n\ge 3$. Then each $\psi_k$ for $k=1,2,3,4$ defined by Eq.~\eqref{eq:horiz_int_2} is well defined, continuously differentiable, and $\psi_k = \mathcal{O}(\hat r^n)$.
\end{lemma}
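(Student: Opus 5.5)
Proof by induction on $k$, with the single-step claim: if $\psi_{k-1}$ is continuously differentiable on $[0,\infty)$ with $\psi_{k-1}=\mathcal{O}(\hat r^n)$, $n\ge 3$, then $\psi_k$ defined by Eq.~\eqref{eq:horiz_int_2} has the same properties. The base case is the hypothesis on $\psi_0$.

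\emph{Well-definedness and decay.} The integrand is $\rho^{1-k}\psi_{k-1}(\rho)$. Choose $\delta>0$ and $M$ with $|\psi_{k-1}(\rho)|\le M\rho^n$ on $[0,\delta]$. Then
\begin{equation*}
|\rho^{1-k}\psi_{k-1}(\rho)|\le M\rho^{n+1-k}.
\end{equation*}
Since $k\le 4$ and $n\ge3$, the exponent satisfies $n+1-k\ge 0$. So the integrand is bounded near $0$ and continuous on $(0,\infty)$, and the integral from $0$ exists (as a Riemann integral after continuous extension). For $\hat r\le\delta$,
\begin{equation*}
|\psi_k(\hat r)|\le a_{k-1}\hat r^{k-2}\,\frac{M\hat r^{n+2-k}}{n+2-k}=\frac{a_{k-1}M}{n+2-k}\,\hat r^{n},
\end{equation*}
which gives $\psi_k=\mathcal{O}(\hat r^n)$. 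The point to stress is that the factor $\hat r^{k-2}$ exactly compensates the gain of one power from integration, so no decay is lost. This is the ``decay preservation'' mentioned before the lemma.

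\emph{Differentiability for $\hat r>0$.} By the fundamental theorem of calculus, $\psi_k$ is $C^1$ there and satisfies Eq.~\eqref{eq:semi scrim eqns}:
\begin{equation*}
\psi_k'=\frac{k-2}{\hat r}\psi_k-\frac{a_{k-1}}{\hat r}\psi_{k-1}.
\end{equation*}

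\emph{Differentiability at $\hat r=0$.} This is the main point needing care, since the ODE coefficients are singular there.
\begin{itemize}
\item From $\psi_k=\mathcal{O}(\hat r^n)$ with $n\ge 2$, the difference quotient $\psi_k(\hat r)/\hat r\to0$, so $\psi_k'(0)=0$ (one-sided derivative).
\item Continuity of $\psi_k'$ at $0$ follows from the displayed ODE: both terms on the right are $\mathcal{O}(\hat r^{n-1})\to 0$.
\end{itemize}
So $\psi_k$ is continuously differentiable on $[0,\infty)$, and the hypotheses propagate from $\psi_{k-1}$ to $\psi_k$.

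Iterating the step for $k=1,2,3,4$ gives the lemma. The case $k=1$ has the harmless factor $\hat r^{-1}$ outside and $\rho^0$ inside; it is covered by the same estimate, with exponent $n+1$ in the integrand bound. The only place the assumption $n\ge3$ is genuinely needed is the integrability at $k=4$, where the integrand carries $\rho^{-3}$.
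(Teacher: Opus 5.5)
Your proof is correct and takes essentially the same route as the paper. The shared steps are: the bound $\rho^{1-k}\psi_{k-1}=\mathcal{O}(\rho^{n+1-k})$ with $n+1-k\ge 0$, which gives both integrability and decay preservation; the fundamental theorem of calculus for $\hat r>0$; and continuity of the derivative at $0$, read off from Eq.~\eqref{eq:semi scrim eqns} via $\psi_k/\hat r,\ \psi_{k-1}/\hat r\to 0$.

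One small caveat concerns the borderline case $k=4$, $n=3$: there the integrand $\rho^{-3}\psi_3$ need only be bounded and need not have a continuous extension to $\rho=0$, so your parenthetical is slightly off. Boundedness together with continuity on $(0,\hat r]$ already suffices for Riemann integrability, which is exactly how the paper argues.
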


\begin{proof}
For this proof, we prove a general case and use the fact that $\psi_0 = \mathcal{O}(\hat{r}^n)$ for $n\ge 3$. Starting with Eq.~\eqref{eq:horiz_int_2}, if $\psi_{k-1} = \mathcal{O}(\rho^{n})$ for $n\ge k-1$, then we can conclude that $\rho^{1-k}\psi_{k-1} $ is Riemann-integrable from $0$ to $\hat r$ for any $\hat r>0$. This is because it can only be discontinuous at $\rho=0$, due to our assumptions, and near $\rho=0$, we have that $ \rho^{1-k}\psi_{k-1}  = \mathcal{O}(\rho^{n-k+1})$ for $n-k+1\ge 0$, meaning it is bounded near its only possible point of discontinuity. Thus, as we know the integral is well defined,
\begin{align*}
\psi_k & = -a_{k-1} \hat r^{k-2} \int_0^{\hat r} \rho^{1-k}\psi_{k-1}  \dl \rho ,
\\
& = -a_{k-1}\hat r^{k-2} \int_0^{\hat r} \mathcal{O}( \rho ^{n-k+1}) \dl \rho ,
\\ 
& = \hat r^{k-2} \mathcal{O}(\hat r ^{n-k+2}) = \mathcal{O}(\hat r^n).
\end{align*}
Using this, and noting that $\psi_0 = \mathcal{O}(\hr^n)$ for $n\ge 3 \ge k-1$ for all $k=1,2,3,4$, we get that all $\psi_k$ are well defined and $\psi_k = \mathcal{O}(\hr^n)$.

Finally, to prove that each $\psi_k$ is continuously differentiable (where differentiation at $\hr = 0$ is taken as $\hr \to 0^+$), we apply the product rule and then the Fundamental Theorem of Calculus. For that purpose, our integrand must be continuous over the interval, which is not guaranteed at $\rho =0$. However, this can be handled by splitting the integral into two parts: one from 0 to 1 and another from 1 to $\hr$ and differentiating them separately, noting that the first integral is just a constant. Then, we can observe that the first derivative of each $\psi_k$ is continuous by looking at the nature of Eq.~\eqref{eq:semi scrim eqns} and noticing that both $\psi_k/\hat r, \, \psi_{k-1}/\hat r \to 0$ as $\hat r \to 0$ for $k=1,2,3,4$.
\end{proof}

We can also prove that the result of Lemma~\ref{lem:nice sols} also holds for $\psi_0 = \mathcal{O}(\hat r^2)$, but it comes with the caveat of defining $\psi_4$ more carefully. This can be seen via the order-preservation property, as proven above, since the integrand in the definition of $\psi_4$ behaves like $\psi_3/\rho^3 = \mathcal{O}(1/\rho)$ near $\rho = 0$. Thus, when defining the integral properly, we find that, in general, $\psi_4 = \mathcal{O}(\hat r^2 \ln \hat r )$ near $\hat r =0$. 

A nice consequence of constructing initial data via the recommendations of Lemma~\ref{lem:nice sols} is that if $\psi_0$ is $m$-times continuously differentiable, then each $\psi_k$ is also $m$-times continuously differentiable. The proof is somewhat lengthy, so we omit it here; however, this straightforward result relies on the fact that if $f(x)/x\to L\in \mathbb{R}$ when $x\to 0$ and $f(x)$ is $(m+1)$-times differentiable on an interval including zero, then $f(x)/x$ is $m$-times differentiable. However, when $\psi_0 = \mathcal{O}(\hat r^2)$, this does not hold, as Proposition \ref{prop:r0 conds} directly disproves this possibility. 

The next step is to include constants of integration into our initial data functions $\psi_k$. As will be seen later, to include these, we will separate the `core' solution structure given in Eq.~\eqref{eq:horiz_int_2} from them. We define $\bar{\psi}_k$ by Eq.~\eqref{eq:horiz_int_2} and incorporate the constants of integration via functions $\beta_k$ for $k=1,2,3,4$. The general solution to Eq.~\eqref{eq:semi scrim eqns} is then given by
\begin{equation} \label{eq:general_sols}
\psi_k = \bar{\psi}_k + \beta_k, 
\end{equation}
where, by linearity of \eqref{eq:semi scrim eqns}, this defines a solution provided that each $\beta_k$ itself is a solution.

As we do not want to introduce any singularities, we must choose these constants wisely. From the solution structure in Eq.~\eqref{eq:horiz_integral}, we do not include an integration constant for $\psi_1$. Otherwise, we would have a term $C_1/\hat r$, which is singular as we approach the cylinder for $C_1 \neq 0$. Hence, $\beta_1=0 $ is enforced for regularity. However, for $k=2,3,4$, the choice of $C_k$ does not affect the regularity of $\psi_k$. Thus, we have the freedom to choose $\beta_k$ for $k=2,3,4$ that satisfy Eq.~\eqref{eq:semi scrim eqns}. It turns out that for any choice of constants $C_k$ for $k=2,3,4$, our functions $\beta_k$ must be 
\begin{align}
\begin{split}
\beta_1 & = 0, 
\\
\beta_2 & = C_2,
\\
\beta_3 & = C_3 \hr + a_2C_2,
\\
\beta_4 & = C_4 \hr^2 + a_3C_3 \hr + \frac{a_3a_2}{2} C_2. 
\end{split}\label{eq:int_consts}
\end{align}

Before returning to solutions with compact support, we note one further useful property of solutions to Eq.~\eqref{eq:semi scrim eqns}. Specifically, the iterated integrals appearing in Eq.~\eqref{eq:horiz_integral} and Eq.~\eqref{eq:horiz_int_2} can be eliminated, allowing each $\psi_k$ to be expressed solely in terms of integrals of $\psi_0$. As we now demonstrate, this is achieved via integration by parts.

\begin{theorem} \label{thm:alt_sol_form}
Given that $\psi_0 = \mathcal{O}(\hat r^n)$ for $n\ge 3$ and continuously differentiable for $\hat r \ge 0$, the general solutions to Eq.~\eqref{eq:semi scrim eqns} for our initial data can be expressed as
\begin{align} \label{eq:alt_sol}
\begin{split}
\psi_1 & = - \frac{a_0}{\hat r} \int_0^{\hat r} \psi_0 \dl \rho,
\\
\psi_2 & = a_1a_0 \left (\int_0^{\hat r} \frac{\psi_0}{\rho}\dl \rho -  \frac{1}{\hat r}\int_0^{\hat r} \psi_0 \dl \rho\right ) + C_2,
\\
\psi_3 & = -\frac{a_2a_1a_0}{2} \left (\hr\int_0^{\hat r}  \frac{\psi_0}{\rho^2}\dl \rho -2 \int_0^{\hat r} \frac{\psi_0}{\rho} \dl \rho +  \frac{1}{\hat r}\int_0^{\hat r} \psi_0 \dl \rho\right )
\\
&  \qquad + C_3 \hat r + a_2 C_2,
\\
\psi_4 & = \frac{a_3a_2a_1a_0}{6} \left (\hr^2\int_0^{\hat r}  \frac{\psi_0}{\rho^3}\dl \rho -3\hr \int_0^{\hat r}  \frac{\psi_0}{\rho^2}\dl \rho +3 \int_0^{\hat r} \frac{\psi_0}{\rho} \dl \rho\right.
\\
& \qquad \left. -  \frac{1}{\hat r}\int_0^{\hat r} \psi_0 \dl \rho\right ) + C_4 \hat r^2+  a_3C_3 \hat r + \frac{a_3 a_2}{2}C_2.
\end{split}
\end{align}
\end{theorem}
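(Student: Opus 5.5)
The plan is to start from the decomposition \eqref{eq:general_sols}, $\psi_k=\bar\psi_k+\beta_k$. Here $\bar\psi_k$ is given by the iterated integrals \eqref{eq:horiz_int_2}, and the $\beta_k$ are given by \eqref{eq:int_consts}. The integration-constant parts of \eqref{eq:alt_sol} are exactly the $\beta_k$, so the claim reduces to showing that each core solution $\bar\psi_k$ equals the bracketed expression involving single integrals of $\psi_0$.

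By Lemma~\ref{lem:nice sols}, all $\bar\psi_k$ are well defined, $C^1$, and $\mathcal{O}(\hat r^n)$ with $n\ge 3$. Moreover, each integrand $\rho^{-j}\psi_0$ with $j\le 3$ is bounded near $0$, so every integral appearing in \eqref{eq:alt_sol} converges.

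I would proceed by induction on $k$, substituting the claimed formula for $\bar\psi_{k-1}$ into
\[
\bar\psi_k=-a_{k-1}\hat r^{k-2}\int_0^{\hat r}\rho^{1-k}\bar\psi_{k-1}(\rho)\,\dl\rho .
\]
The case $k=1$ is immediate.

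For $k\ge2$, the integrand is a sum of terms of the form $\rho^{m}F_j(\rho)$, where $F_j(\rho)=\int_0^\rho \sigma^{-j}\psi_0(\sigma)\,\dl\sigma$. Each such term is handled by integration by parts: integrate the power of $\rho$ and differentiate $F_j$, using $F_j'=\rho^{-j}\psi_0$. When the power is $\rho^{-1}$, the antiderivative is $\ln\rho$, which would produce logarithms. To avoid that, I would instead swap the order of integration (Fubini on the triangle $0\le\sigma\le\rho\le\hat r$), which gives
\[
\int_0^{\hat r}\rho^{-1}F_j(\rho)\,\dl\rho=\int_0^{\hat r}\sigma^{-j}\psi_0(\sigma)\ln(\hat r/\sigma)\,\dl\sigma .
\]
I suspect, however, that these logarithmic terms cancel among the contributions. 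The cleaner route is to swap orders uniformly for every term, which yields the Cauchy-type representation
\[
\bar\psi_k=c\,\hat r^{k-2}\int_0^{\hat r}K_k(\hat r,\sigma)\psi_0(\sigma)\,\dl\sigma .
\]
Here $K_k$ is an iterated integral of powers, and I would compute it explicitly as
\[
\int_{\sigma<\rho_1<\dots<\hat r}\prod\rho_i^{\,\cdot}\;\dl\rho .
\]
Expanding $K_k$ in powers of $\sigma$ and $\hat r$ then gives exactly the combination $\hat r^{k-2}\sigma^{2-k},\dots,\hat r^{-1}\sigma^{0}$ with binomial-type coefficients $1,-(k-1),\dots$ divided by $(k-1)!$, matching \eqref{eq:alt_sol}.

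The boundary terms at $\rho=0$ that arise in the integrations by parts vanish because $F_j(\rho)=\mathcal{O}(\rho^{n-j+1})$ with $n\ge3\ge j$. This is where the hypothesis $\psi_0=\mathcal{O}(\hat r^3)$ is used. The main obstacle is bookkeeping: for $k=3,4$ the innermost integrals $\rho^{-1}\int\psi_0$ meet powers $\rho^{-2}$ and $\rho^{-3}$, and one must check that no $\ln$ terms survive and that the coefficients come out as $-\tfrac12(1,-2,1)$ and $\tfrac16(1,-3,3,-1)$.

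As an independent check, avoiding the integration by parts altogether, I would verify directly that the right-hand sides of \eqref{eq:alt_sol} satisfy \eqref{eq:semi scrim eqns}. Differentiating, the derivatives of the integrals contribute multiples of $\psi_0\hat r^{-1}$ that cancel because the coefficient sums vanish ($1-2+1=0$, $1-3+3-1=0$). The remaining terms reproduce $\frac{k-2}{\hat r}\psi_k-\frac{a_{k-1}}{\hat r}\psi_{k-1}$. Each bracket is also $\mathcal{O}(\hat r^n)$, so it has no homogeneous component $\hat r^{k-2}$. By uniqueness of the solution of \eqref{eq:semi scrim eqns} with that decay, the bracket coincides with $\bar\psi_k$.
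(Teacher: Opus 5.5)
Your proposal is correct, but it argues differently from the paper. Both you and the paper reduce to $C_2=C_3=C_4=0$ via the $\beta_k$; after that the methods diverge. The paper integrates by parts in the opposite direction to your first attempt. In each moment $\int_0^{\hat r}\rho^{-j}\psi_0\,\dd\rho$ it integrates $\psi_0$ (using that $-\rho\psi_1/a_0$ is an antiderivative by \eqref{eq:horiz_int_2}) and differentiates the power. It then repeats this with $-\psi_2/a_1$ as an antiderivative of $\psi_1/\rho$, and so on. The result writes each moment as a triangular combination of $\psi_1(\hat r),\dots,\psi_{j+1}(\hat r)$, Eqs.~\eqref{eq:alt_int_1}--\eqref{eq:alt_int_3}, and this system is then inverted. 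Your Fubini route gives instead the closed kernel $\bar\psi_k=(-1)^k a_0\cdots a_{k-1}\,\frac{\hat r^{k-2}}{(k-1)!}\int_0^{\hat r}(\sigma^{-1}-\hat r^{-1})^{k-1}\psi_0(\sigma)\,\dd\sigma$; after substituting $u_i=1/\rho_i$ the inner integral is a simplex volume, which explains the binomial coefficients at once. Your verification-plus-uniqueness route is the shortest of the three. It also shows why $n\ge 3$ is exactly the right hypothesis: it rules out the homogeneous pieces $\hat r^{k-2}$ for $k\le 4$. You should phrase it as an induction on $k$, since uniqueness for $\psi_k$ presupposes that $\psi_{k-1}$ has already been identified. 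Note that this route only certifies a formula known in advance. The paper's antiderivative bookkeeping is constructive, and it transfers verbatim to the trigonometric weights of Theorem~\ref{thm:alt_sol_form_2}, where a Fubini kernel would be messier. Three small corrections. First, the feared $\rho^{-1}$ never occurs: $\rho^{1-k}\bar\psi_{k-1}$ contains only terms $\rho^{j-k}F_j$ with $j\le k-2$, so your first integration-by-parts induction already closes without logarithms. Second, the kernel monomials are $\hat r^{k-2}\sigma^{1-k},\dots,\hat r^{-1}\sigma^{0}$, not $\hat r^{k-2}\sigma^{2-k}$. Third, the boundary terms at $\rho=0$ vanish because the full product satisfies $\rho^{j-k+1}F_j=\mathcal{O}(\rho^{n-k+2})$, not merely because of the decay of $F_j$.
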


\begin{proof}
It suffices to show the equivalence of Eqs.~\eqref{eq:horiz_int_2} and \eqref{eq:alt_sol} for $C_2=C_3=C_4=0$. We begin this proof by considering the various integrals of $\psi_0$ and writing them in terms of the other $\psi_k$ using the formula given in Eq.~\eqref{eq:horiz_int_2}. Before we proceed, note that all integrals in this proof are well-defined because each $\psi_k = \mathcal{O}(\hat r^3)$. This fact will also be used implicitly when evaluating various $\psi_k/\rho$ and $\psi_k/\rho^2$ terms at $\rho =0$. Trivially, we are given the $\psi_1$ relation, as nothing has been augmented. Next, we consider 
the integral of $\psi_0/\rho$.
To employ integration by parts, we use that, according to Eq.~\eqref{eq:horiz_int_2}, an antiderivative of $\psi_0$ is $- \rho \psi_1 /a_0$. Thus, 
\begin{align}
\begin{split}
\int_0^{\hat r} \frac{\psi_0}{\rho} \dl \rho & = - \frac{1}{a_0} \psi_1(\rho) \biggl |_0^{\hat r} - \frac{1}{a_0} \int_0^{\hat r} \frac{\psi_1}{\rho} \dl \rho ,
\\
& = -\frac{1}{a_0} \psi_1 + \frac{1}{a_0a_1} \psi_2,
\end{split}\label{eq:alt_int_1}
\end{align}
noting the use of Eq.~\eqref{eq:horiz_int_2} again to replace the $\psi_1$ integral in terms of $\psi_2$. Applying the same procedure to the next integral gives the following when employing integration by parts while integrating $\psi_0$,
\begin{align*}
\int_0^{\hat r} \frac{\psi_0}{\rho^2} \dl \rho & = -\frac{1}{a_0}\frac{\psi_1}{\hat r} -\frac{2}{a_0}\int_0^{\hat r} \frac{\psi_1}{\rho^2} \dl \rho. 
\end{align*}
We now utilise integration by parts again using the antiderivative $- \psi_2/a_1$ of $\psi_1/\rho$ from Eq.~\eqref{eq:horiz_int_2}. This unveils the following
\begin{align}
\begin{split}
\int_0^{\hat r} \frac{\psi_0}{\rho^2} \dl \rho & = -\frac{1}{a_0}\frac{\psi_1}{\hat r} + \frac{2}{a_0a_1} \frac{\psi_2}{\hat r} + \frac{2}{a_0a_1} \int_0^{\hat r}\frac{\psi_2}{\rho^2}\dl\rho
\\
& = -\frac{1}{a_0}\frac{\psi_1}{\hat r} + \frac{2}{a_0a_1} \frac{\psi_2}{\hat r} - \frac{2}{a_0a_1a_2}\frac{\psi_3}{\hat r},
\end{split}\label{eq:alt_int_2}
\end{align}
where we have again used Eq.~\eqref{eq:horiz_int_2} to substitute the integral of $\psi_2$ for a $\psi_3$ term. We now move our attention to the last integral. If we again use integration by parts, we find
\begin{align*}
\int_0^{\hat r} \frac{\psi_0}{\rho^3}\dl \rho & = -\frac{1}{a_0}\frac{\psi_1}{\hat r^2} - \frac{3}{a_0} \int_0^{\hat r} \frac{\psi_1}{\rho^3}\dl \rho,
\\
& = -\frac{1}{a_0}\frac{\psi_1}{\hat r^2} + \frac{3}{a_0a_1}\frac{\psi_2}{\hat r^2} + \frac{6}{a_0a_1} \int_0^{\hat r} \frac{\psi_2}{\rho^3}\dl \rho. 
\end{align*}
Finally, recognising the antiderivative of $\psi_2/\rho^2$ as $-\psi_3/(a_2\rho)$ and using Eq.~\eqref{eq:horiz_int_2} on the remaining integral yields
\begin{align}
\begin{split}
\int_0^{\hat r} \frac{\psi_0}{\rho^3}\dl \rho & = -\frac{1}{a_0}\frac{\psi_1}{\hat r^2} + \frac{3}{a_0a_1}\frac{\psi_2}{\hat r^2} - \frac{6}{a_0a_1a_2}\frac{\psi_3}{\hat r^2} 
\\
& \qquad - \frac{6}{a_0a_1a_2} \int_0^{\hat r} \frac{\psi_3}{\rho^3}\dl \rho,
\\
& = -\frac{1}{a_0}\frac{\psi_1}{\hat r^2} + \frac{3}{a_0a_1}\frac{\psi_2}{\hat r^2} - \frac{6}{a_0a_1a_2}\frac{\psi_3}{\hat r^2} 
\\
& \qquad + \frac{6}{a_0a_1a_2a_3} \frac{\psi_4}{\hat r^2}.
\end{split}\label{eq:alt_int_3}
\end{align}
Using the formula for $\psi_1$ together with Eqs.~\eqref{eq:alt_int_1}--\eqref{eq:alt_int_3}, we can isolate each $\psi_k$ and obtain Eq.~\eqref{eq:alt_sol} for $C_2=C_3=C_4=0$, giving the desired result.
\end{proof}

This property can also be proven for the case where $\psi_0 = \mathcal{O}(\hat r^2)$. However, as with Lemma~\ref{lem:nice sols}, more care is required, as $\psi_0/\hat r^3 = \mathcal{O}(1/\hat r)$, which is one of the integrands in the previous formula. Hence, we must integrate from some positive $r_0>0$ to $\hat r$ to guarantee a well-defined $\psi_4$.

We now return to constructing initial data that have compact support. The idea is to find restrictions on our initial wave profile $\psi_0$ so that each $\psi_k$ has compact support. Initially, this was not so obvious since solutions of the form \eqref{eq:horiz_integral} relate $\psi_2$, $\psi_3$, and $\psi_4$ to $\psi_0$ through nested integrals. However, after Theorem~\ref{thm:alt_sol_form}, this becomes more apparent, as each $\psi_k$ is now written in terms of separate single integrals of $\psi_0$. These ideas are summarised below. 

\begin{theorem}\label{thm:compact_sol}
Suppose that $\psi_0$ is continuously differentiable for all $\hat r \ge 0$ such that $\psi_0 = \mathcal{O}(\hat r^3)$ near $\hat r = 0$ and has compact support $[b,a]$ for some $0\le b< a$. Then $\psi_0$ can generate a continuously differentiable solution to Eq.~\eqref{eq:semi scrim eqns} such that all $\psi_k$ have compact support contained in $[0,a]$ if and only if 
\begin{equation} \label{eq:compact_sol_1}
\int_b^a \psi_0(\hat r) \dl {\hat r} = 0.
\end{equation}
Furthermore, $\psi_k(0) = 0$ if and only if 
\begin{equation} \label{eq:compact_sol_2}
\int_b^a \frac{\psi_{0}(\hat r)}{\hat r} \dl {\hat r} = 0. 
\end{equation}  
\end{theorem}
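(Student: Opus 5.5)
By Theorem~\ref{thm:alt_sol_form}, every continuously differentiable solution generated by $\psi_0$ has the form \eqref{eq:alt_sol}. The only freedom is the integration constants $C_2,C_3,C_4$; the constant for $\psi_1$ is already excluded, since it would produce a $C_1/\hat r$ singularity. The plan is therefore to evaluate \eqref{eq:alt_sol} for $\hat r\ge a$, where $\psi_0$ vanishes, and ask when the constants can be chosen to make all $\psi_k$ vanish there.

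For $\hat r\ge a$ every integral $\int_0^{\hat r}\psi_0\rho^{-j}\,\dl\rho$ is a constant. Set
\[
I_j:=\int_b^a \psi_0(\rho)\,\rho^{-j}\,\dl\rho ,\qquad j=0,1,2,3 .
\]
The lower limit may be replaced by $b$ because $\psi_0=0$ on $[0,b)$. For $\hat r\ge a$ this gives:
\begin{itemize}
\item $\psi_1=-a_0I_0/\hat r$;
\item $\psi_2=a_1a_0(I_1-I_0/\hat r)+C_2$;
\item $\psi_3$ is a combination of $\hat r I_2$, $I_1$, $I_0/\hat r$ together with $C_3\hat r+a_2C_2$;
\item $\psi_4$ is a combination of $\hat r^2 I_3$, $\hat r I_2$, $I_1$, $I_0/\hat r$ together with $C_4\hat r^2+a_3C_3\hat r+\tfrac{a_3a_2}{2}C_2$.
\end{itemize}
Each $\psi_k$ is thus a Laurent polynomial in $\hat r$ on $[a,\infty)$, and it vanishes identically there iff each coefficient vanishes, since $\hat r^{-1},1,\hat r,\hat r^2$ are linearly independent on an interval.

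\emph{Necessity.} The coefficient of $1/\hat r$ in $\psi_1$ forces $I_0=0$. No constant can compensate for it, because $\beta_1=0$.

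\emph{Sufficiency.} Assume $I_0=0$ and match coefficients. The constant term of $\psi_2$ gives $C_2=-a_1a_0I_1$. The $\hat r$ coefficient of $\psi_3$ gives $C_3=\tfrac{a_2a_1a_0}{2}I_2$, and the $\hat r^2$ coefficient of $\psi_4$ gives $C_4=-\tfrac{a_3a_2a_1a_0}{6}I_3$. It then remains to check that the lower-order coefficients of $\psi_3$ and $\psi_4$ vanish automatically. For example, the constant term of $\psi_3$ is $a_2a_1a_0I_1+a_2C_2=0$. The analogous cancellations in $\psi_4$ follow because $\beta_k$ reproduces the same ladder structure \eqref{eq:int_consts}.

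This coefficient matching is the main bookkeeping step. A cleaner argument avoids it. Set $\hat\psi_k:=\psi_k-\beta_k$ with the constants above, i.e.\ $\psi_k=\bar\psi_k+\beta_k$ as in \eqref{eq:general_sols}. For $\hat r\ge a$ the pair $(\psi_0=0,\psi_k)$ solves the homogeneous ladder \eqref{eq:semi scrim eqns}, whose solution space on $(a,\infty)$ is exactly $\{C_1/\hat r\text{-term},\ \beta_k\}$, i.e.\ four-dimensional. Hence the solution is zero on $[a,\infty)$ iff its "coordinates" vanish, and these can be read off as $I_0$ and the free constants. Continuous differentiability and the $\mathcal O(\hat r^3)$ behaviour near $0$ are inherited from Lemma~\ref{lem:nice sols}, since the $\beta_k$ are polynomials. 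This establishes \eqref{eq:compact_sol_1}, with support contained in $[0,a]$.

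For the second claim, evaluate at $\hat r=0$ using \eqref{eq:alt_sol}. Each bracketed integral term is $\mathcal O(\hat r^3)$ by Lemma~\ref{lem:nice sols}, so $\psi_1(0)=0$, $\psi_2(0)=C_2$, $\psi_3(0)=a_2C_2$ and $\psi_4(0)=\tfrac{a_3a_2}{2}C_2$. Thus all $\psi_k(0)=0$ iff $C_2=0$. Since $C_2=-a_1a_0I_1$ and $a_0a_1\neq0$ for $l\ge2$, this holds iff $I_1=\int_b^a\psi_0/\hat r\,\dl\hat r=0$, which is \eqref{eq:compact_sol_2}. The one delicate point to confirm is that $C_2$ is uniquely forced by compact support and not a free choice. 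This is immediate from the constant term of $\psi_2$ for $\hat r\ge a$.
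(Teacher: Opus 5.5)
Your proposal is correct and follows essentially the same route as the paper. You read off $\psi_1=-a_0\hat r^{-1}\int_b^a\psi_0$ for $\hat r\ge a$ from Theorem~\ref{thm:alt_sol_form} for necessity, choose exactly the paper's $C_2,C_3,C_4$ to make $\psi_2,\psi_3,\psi_4$ vanish beyond $a$, and use $\psi_2(0)=C_2=\psi_3(0)/a_2=2\psi_4(0)/(a_2a_3)$ for the second claim, while also explicitly checking the lower-order cancellations in $\psi_3,\psi_4$ that the paper leaves implicit (note that only the $\bar\psi_k$, not the full $\psi_k$, are $\mathcal{O}(\hat r^3)$ near $0$, which is all you actually use).
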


\begin{proof}
The proof is an application of Theorem~\ref{thm:alt_sol_form}. From prior results, we know that there is a continuously differentiable solution for the given $\psi_0$. Thus, all that remains is to show the properties related to the compact support of $\psi_k$. For the reverse direction of the first statement, note that we require $\psi_1$ to have compact support. Using the formula for $\psi_1$, we see that for $\hat r\ge a$, 
\[ 
\psi_1(\hr) = - \frac{a_0}{\hat r} \int_b^a \psi_0(\rho) \dl \rho. 
\]
Thus, if $\psi_1(\hat r)=0$ for $\hat r\ge a$, then Eq.~\eqref{eq:compact_sol_1} must hold. 
Conversely, if Eq.~\eqref{eq:compact_sol_1} holds, then, by the compact support of $\psi_0$, we have $\psi_1(\hat r)=0$ for $\hat r\le b$ and $\hat r\ge a$. 
It therefore remains to show that we can choose $C_2$, $C_3$, and $C_4$ such that $\psi_2$, $\psi_3$, and $\psi_4$ also have support contained in $[0,a]$. 
This is achieved by setting
\begin{align*}
C_2 & = -a_1a_0 \int_b^a \frac{\psi_0(\hat r)}{\hat r} \dl {\hat r} ,
\\
C_3 & =  \frac{a_2 a_1 a_0}{2} \int_b^a \frac{\psi_0(\hat r)}{\hat r^2} \dl {\hat r} ,
\\
C_4 & = -\frac{a_3 a_2 a_1 a_0}{6} \int_b^a \frac{\psi_0(\hat r)}{\hat r^3} \dl {\hat r} ,
\end{align*}
as follows from Eq.~\eqref{eq:alt_sol} together with the fact that $\psi_0$ has compact support in $[b,a]$. 
Moreover, this is the unique choice of $C_k$ ensuring that $\psi_k$ has support contained in $[0,a]$, as can be seen by imposing $\psi_k(\hat r)=0$ for $\hat r\ge a$.

For the second statement, if we evaluate $\psi_2$, $\psi_3$, and $\psi_4$ at $\hat r = 0$, we find
\begin{align*}
C_2 = \psi_2(0) = \frac{\psi_3(0)}{a_2} = \frac{2\psi_4 (0)}{a_3a_2}.
\end{align*}
Hence, with this and the required choice of $C_2$, the statement follows immediately. 
\end{proof}

One consequence of \cref{thm:compact_sol} is that it is not possible to set $\psi_0$ to be a simple positive bump function with compact support $[b,a]$ and have all $\psi_k$ be regular at the cylinder and have compact support. At first sight, this may seem surprising, since the idealisation of such a profile by a Dirac delta distribution forms the basis of colliding plane-wave spacetimes, such as the Khan-Penrose solution  \cite{khan1971scattering}. However, the comparison is not directly applicable, as the Khan--Penrose construction is based on planar wavefronts, whereas the present setting describes gravitational wave data on $\scrim$ with an underlying spherical geometry.

It is not immediately clear how we should use Eqs.\ \eqref{eq:compact_sol_1} and \eqref{eq:compact_sol_2} to inform choices of $\psi_0(\hat r)$, outside of checking directly they are satisfied. 
% This is important for physical accuracy, as the $\psi_k$'s represent the rescaled Weyl tensor, which helps describe the influence of the gravitational wave (even if the gravitational wave on $\scrim$ is only $\psi_0$). This is clearer when we consider a timelike observer. We know that for any timelike observer, they will experience the effect of a gravitational wave for a finite period of `time'. However, if a gravitational wave originates from $\scrim$ and is not bounded away from spatial infinity, because it has lightlike 4-velocity and $\scrip$ is a lightlike hypersurface that connects to spatial infinity, its influence will cover all of $\scrip$ and a region around it, which will also include future timelike infinity. Then, because timelike infinity is infinitely far in the future of any timelike observer, the gravitational wave described earlier will remain indefinitely in the future of the observer once it is initially influenced by the wave, thus being unrealistic.  
However, we \emph{can} further refine the choice of $\psi_0$ such that all $\psi_k$ have compact support bounded away from the cylinder. To find such a procedure, we will again use the structure of solutions given in Theorem~\ref{thm:alt_sol_form}. The resulting theorem is as follows.

\begin{theorem}\label{thm:compact_sol_2}
Suppose $\psi_0$ is continuously differentiable with compact support $[b,a]$ for $0<b<a$. Then $\psi_0$ can generate a solution to Eq.~\eqref{eq:semi scrim eqns} with all $\psi_k$ continuously differentiable with compact support contained in $[b,a]$ if and only if
\begin{align} \label{eq:compact_sol_3}
\int_b^a \psi_0(\hat r) \hat r^{1-k} \dl {\hat r} = 0,
\end{align}
for $k=1,2,3,4$. 
\end{theorem}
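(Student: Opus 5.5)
The plan is to use the explicit representation of Theorem~\ref{thm:alt_sol_form}. Since $b>0$, $\psi_0$ vanishes identically on $[0,b]$. It is therefore trivially $\mathcal{O}(\hat r^n)$ for every $n$, and Lemma~\ref{lem:nice sols} together with Theorem~\ref{thm:alt_sol_form} applies. Every continuously differentiable solution of Eq.~\eqref{eq:semi scrim eqns} that is regular at $\hat r=0$ is given by Eq.~\eqref{eq:alt_sol} for some constants $C_2,C_3,C_4$. Here we use that $\beta_1=0$ is forced by regularity, and that the $\beta_k$ in Eq.~\eqref{eq:int_consts} exhaust the homogeneous solutions. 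The question then reduces to two conditions. First, vanishing of all $\psi_k$ on $[0,b]$, which constrains the $C_k$. Second, vanishing on $[a,\infty)$, which yields integral conditions on $\psi_0$.

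First consider $\hat r\le b$. Every integral in Eq.~\eqref{eq:alt_sol} then vanishes, since it runs over $[0,\hat r]$ where $\psi_0=0$. Hence $\psi_1=0$ there automatically, while $\psi_2=C_2$, $\psi_3=C_3\hat r+a_2C_2$ and $\psi_4=C_4\hat r^2+a_3C_3\hat r+\tfrac{a_3a_2}{2}C_2$. These polynomials vanish on an interval only if $C_2=C_3=C_4=0$. So support contained in $[b,a]$ forces all integration constants to be zero, and conversely that choice gives vanishing on $[0,b]$.

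Next consider $\hat r\ge a$, where each $\int_0^{\hat r}\psi_0\rho^{-j}\,\dl\rho$ equals the constant $I_j:=\int_b^a\psi_0\rho^{-j}\,\dl\rho$ for $j=0,1,2,3$. With $C_k=0$, Eq.~\eqref{eq:alt_sol} makes $\psi_1,\dots,\psi_4$ linear combinations of $I_0,\dots,I_3$ with coefficients drawn from the distinct powers $\hat r^{-1},1,\hat r,\hat r^2$.

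\begin{itemize}
\item $\psi_1\propto I_0/\hat r$, so its vanishing gives $I_0=0$.
\item $\psi_2=a_1a_0(I_1-I_0/\hat r)$, so its vanishing then gives $I_1=0$.
\item Similarly, $\psi_3=0$ gives $I_2=0$ and $\psi_4=0$ gives $I_3=0$.
\end{itemize}

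The linear independence of these powers of $\hat r$ on $[a,\infty)$ makes each step an equivalence. This establishes necessity of Eq.~\eqref{eq:compact_sol_3} for $k=1,\dots,4$. Conversely, if all four moments vanish, the same formulas show that each $\psi_k$ vanishes for $\hat r\ge a$, which gives sufficiency.

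Continuous differentiability follows from Lemma~\ref{lem:nice sols}. Alternatively, the formulas are continuously differentiable for $\hat r>0$ by the Fundamental Theorem of Calculus, and vanish near $0$.

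The main obstacle is the first step: justifying that every admissible solution is captured by Eq.~\eqref{eq:alt_sol}, i.e.\ that the homogeneous solutions are precisely the $\beta_k$. I would verify this directly. The homogeneous system of Eq.~\eqref{eq:semi scrim eqns} is triangular with solutions $\hat r^{-1},1,\hat r,\hat r^2$ at each level. Requiring regularity at $0$ removes the $\hat r^{-1}$ mode, and one then checks that the particular chain reproduces Eq.~\eqref{eq:int_consts}. An alternative route avoids this issue entirely: impose vanishing on $[0,b]$ as the initial condition and solve Eq.~\eqref{eq:semi scrim eqns} from $\hat r=b$ upward, which is a regular ODE there, so the solution is unique.
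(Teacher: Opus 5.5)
Your proposal is correct and follows essentially the paper's proof. Both use the representation of Theorem~\ref{thm:alt_sol_form}, read off $C_2=C_3=C_4=0$ from vanishing on $(0,b]$, and obtain the moment conditions from vanishing on $[a,\infty)$. The only difference is bookkeeping: the paper imports the $k=1,2$ conditions from Theorem~\ref{thm:compact_sol} and treats only $k=3,4$ directly, whereas you derive all four at once from the triangular structure, and you also justify that no solutions are missed (via the homogeneous solutions or uniqueness from $\hat r=b$), which the paper takes for granted.
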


\begin{proof}
Note that the $k=1,2$ cases of Eq.~\eqref{eq:compact_sol_3} are given in Theorem \ref{thm:compact_sol}, where we have $C_2 = 0$ enforced. Thus, only the $k=3,4$ cases of Eq.~\eqref{eq:compact_sol_3} remain. Suppose that all $\psi_k$ have compact support contained within $[b,a]$. Then, when evaluating Eq.~\eqref{eq:alt_sol} at $0 < \hr\le b$ for $k=3,4$, we find
\begin{align*}
0 & = C_3 \hr,
\\
0 & = C_4 \hr^2 + a_3C_3\hr,
\end{align*}
noting the compact support of $\psi_0$. As $\hr \neq 0$, we can conclude that $C_3 = C_4 = 0$ is enforced. If we now evaluate both $\psi_3$ and $\psi_4$ at $\hr\ge a$, we find that
\begin{align*}
0 & = a_2a_1a_0 \hr \int_b^a  \frac{\psi_0}{\rho^2}\dl \rho,
\\
0 & = \frac{a_3a_2a_1a_0}{6}\left ( \hr^2\int_b^a  \frac{\psi_0}{\rho^3}\dl \rho -3\hr \int_b^a  \frac{\psi_0}{\rho^2}\dl \rho\right ),
\end{align*}
noting that $k=1,2$ of Eq.~\eqref{eq:compact_sol_3} hold from Theorem~\ref{thm:compact_sol} and the compact support of $\psi_0$ means the portion of the integrals from 0 to $a$ and $b$ to $\hr$ are zero. Then, as $\hr \neq 0$ and $a_k\neq 0$, it follows that both the $k=3$ and $k=4$ cases must hold if $\psi_3$ and $\psi_4$ are to have compact support contained in $[b,a]$.

For the opposite direction, suppose that Eq.~\eqref{eq:compact_sol_3} holds. Then, taking all $C_k=0$ in the solution gives that all $\psi_k$ have compact support contained in $[b,a]$ when looking at the solution structure in Eq.~\eqref{eq:alt_sol}. 
\end{proof}

\subsection{Regularity conditions on $\scrim$ in the fully-compactified gauge}
Theorems \ref{thm:compact_sol} and \ref{thm:compact_sol_2} tell us how we may find regular initial data sets in the semi-compactified representation that have compact support. However, due to the nature of this gauge, it does not allow us to choose initial data sets that extend to $i^-$, which is metrically at infinity in this gauge.
To construct initial data sets on the whole of $\scrim$, we can use the fully-compactified representation, which brings all of $\scrim$ to a finite coordinate location. In this gauge, we have results that are similar to Theorems \ref{thm:compact_sol} and \ref{thm:compact_sol_2}, but apply to all of $\scrim$.

Before examining these results, we note a couple of important features for the solutions given in Eq.~\eqref{eq:integral}. When we consider $C_k \neq 0$ for any $1\le k\le 4$, we see that the factor of $s(r)^{k-5}$ will cause $\psi_k(r)$ to be singular at $r=0$. As this correlates to past timelike infinity and we are considering gravitational wave scattering, we should not expect realistic gravitational waves to create singular behaviour in this region. Thus, we assert that $C_k = 0$ for all $1\le k\le 4$. 

Briefly examining regularity at timelike infinity, the factor of $s^{k-5}$ poses an issue. However, notice that in the integrand there is a factor of $s^{4-k}$, and we integrate from $0$ to $r$. Thus, if we only assume that $\psi_{k-1}$ is continuously differentiable near $r=0$, we can find via L'H\^opital's rule that
\begin{align*}
\lim_{r\to 0} \frac{1}{s^{5-k}} \int_0^r s^{4-k}c^{1-k}\psi_{k-1}\dl \rho & = \lim_{r\to 0}\frac{ 2s^{4-k} c^{1-k}\psi_{k-1} }{  (5-k) \pi s^{4-k} c},
\\
& = \frac{2 \psi_{k-1}(0)}{(5-k)\pi}.
\end{align*}
implying that $\displaystyle\lim_{r\to 0} \psi_{k}(r) = a_{k-1} \psi_{k-1}(0)/(5-k)$. Thus, we can take this as the definition of $\psi_k(0)$.  With this in hand, we can look at the derivative of $\psi_k$ at $r=0$. Note that all functions in the formula for $\psi_k$ are differentiable for $0<r<1$, so that we can take derivatives away from $r=0$ without problems. For the derivative at $r=0$, we find that
\begin{align*}
&\lim_{r\to 0}\frac{\psi_k(r) - \psi_k(0)}{a_{k-1} r} 
\\
& =\lim_{r \to 0}\frac{  (5-k)\pi c^{k-2} \displaystyle \int_0^r s^{4-k}c^{1-k}\psi_{k-1}\dl \rho - 2\psi_{k-1}(0)s^{5-k} }{2(5-k) r s^{5-k}} .
\end{align*}
If we now apply L'H\^opital's rule to this expression and then multiply the numerator and denominator by $s^{k-4}$, we can find that
\begin{align*}
& \lim_{r\to 0}\frac{\psi_k(r) - \psi_k(0)}{a_{k-1}r} 
\\
&\quad = \lim_{r\to 0}\frac{\pi }{ (5-k)\pi r c + 2s} \left ( \frac{ \psi_{k-1}}{c}  - \psi_{k-1}(0) c \right.
\\
& \qquad\quad +\left. \frac{1}{2} (k-2) c^{k-3} s^{k-3}\int_0^r s^{4-k}c^{1-k}\psi_{k-1}\dl \rho \right ).
\end{align*}
We can see again that we are in a similar situation to before. However, the denominator behaves like $\mathcal{O}(r)$ near $r=0$ and the integral term behaves like $\mathcal{O}(r^2)$, causing this term to terminate. Hence, we only need
 \begin{align*}
 \lim_{r\to 0}\frac{\pi ( \psi_{k-1} - \psi_{k-1}(0) c^2 )}{ (5-k) \pi r c + 2s},
 \end{align*}
noting that we have multiplied by a factor of $c$, because $c\to 1$ as $r\to 0$. Applying L'H\^opital's rule gives us
\begin{align*}
\lim_{r\to 0}\frac{ \pi (\psi_{k-1} - \psi_{k-1}(0) c^2) }{ (5-k) \pi r c + 2s} & = \lim_{r\to 0}\frac{ \psi_{k-1}' + \pi \psi_{k-1}(0) cs }{ (6-k)c - \pi(5-k) rs/2} 
\\
&= \frac{1}{6-k}\psi_{k-1}'(0),
\end{align*}
allowing us to conclude $\psi_k'(0) = a_{k-1}\psi_{k-1}'(0)/(6-k)$. Using similar arguments to those above, one can also prove that $\displaystyle\psi_k'(0) = \lim_{r\to 0} \psi_k'(r)$, which means that $\psi_k$ is continuously differentiable for $0\le r<1$. Thus, for each $\psi_k$ on $0\le r<1$, we are guaranteed that it is continuously differentiable, given that $\psi_0$ is continuously differentiable. 

Now that we have generated some baseline properties of each $\psi_k$ away from the cylinder, our attention turns back to our main concern, regularity at the cylinder. At first glance, this is particularly challenging, given that the enforced choice of integration constants requires us to integrate from timelike infinity for our iterated integrals, instead of the cylinder. This means the approaches for the semi-compactified representation will not be easily extended here. 

A remedy can be found by using the semi-compactified results as a guide to the fully-compactified consideration, particularly, the alternate formulations in Theorem \ref{thm:alt_sol_form}. If an analogous form can be found here, we can easily obtain the implications needed on $\psi_0$ to produce regular $\psi_k$ at the cylinder. Furthermore, as we are integrating from timelike infinity, which we know how to guarantee regularity at, we do not need regularity at the cylinder to perform said integration by parts. Thus, with this in mind, we have the following theorem.

\begin{theorem}\label{thm:alt_sol_form_2}
Let $\psi_0(r)$ be continuously differentiable for $0\le r\le 1$. Then the continuously differentiable solutions $\psi_k(r)$, $k=1,\dots,4$ for $0\le r<1$ given in Eq.~\eqref{eq:integral}, where all $C_k=0$, are equivalent to
% \begin{widetext}
\begin{align}
\label{eq:alt_sol_2}
\begin{split}
\psi_1(r) & = \frac{\pi a_0}{ s^3(r)\sin(\pi r)} I_1(r),
\\
\psi_2(r) & = \frac{\pi a_0 a_1}{2s^3(r)\sin(\pi r)} \Big( \sin(\pi r) I_2(r)  -2\cos(\pi r) I_1(r)\Big),
\\
\psi_3(r) & = \frac{\pi a_0a_1a_2}{8s^3(r)\sin(\pi r)} \Big( \sin^2(\pi r) I_3(r) - 2\sin(2\pi r)I_2(r)
\\
&\quad   +\,4 \cos(2\pi r) I_1(r)\Big),
\\
\psi_4(r) & = \frac{\pi a_0a_1a_2a_3}{48s^3(r)\sin(\pi r)} \Big( \sin^3(\pi r) I_4(r)\\
& \quad - 6\sin^2(\pi r)\cos(\pi r) I_3(r)
\\
&\quad +\, 4\sin(\pi r)( 2\cos(2\pi r)+1)I_2(r)\\
&\quad - 8\cos(\pi r) (2\cos(2\pi r)-1) I_1(r)\Big), 
\end{split}
\end{align}
where we have defined the integrals
\begin{align*}
 \begin{split}
  I_1(r) &= \int_0^r s^3(\rho) \psi_0(\rho) \dl \rho,\\    
  I_2(r) &= \int_0^r \frac{s^2(\rho)}{c(\rho)} \cos(\pi \rho) \psi_0(\rho) \dl \rho,\\
  I_3(r) &= \int_0^r \frac{s(\rho)}{c^2(\rho)} \psi_0(\rho)\dl \rho,\\
  I_4(r) &= \int_0^r \frac{1}{c^3(\rho)}\cos(\pi \rho) \psi_0(\rho) \dl \rho.
 \end{split} 
\end{align*}
% \end{widetext}
\end{theorem}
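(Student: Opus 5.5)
The cleanest route is \emph{verification plus uniqueness}, not repeating the integration-by-parts chain of Theorem~\ref{thm:alt_sol_form}. For each $k$, Eq.~\eqref{eq:main} is a first-order linear ODE for $\psi_k$ with source $\psi_{k-1}$. By Eq.~\eqref{eq:integral}, its homogeneous solutions are the multiples of $s^{k-5}c^{k-2}$, which blow up like $r^{k-5}$ as $r\to 0$ because $k\le 4$. So, given $\psi_{k-1}$, there is exactly one solution of Eq.~\eqref{eq:main} on $0<r<1$ that stays bounded as $r\to0$. By the L'H\^opital discussion preceding the theorem, this bounded solution is the $C_k=0$ member of Eq.~\eqref{eq:integral}. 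It therefore suffices to prove, inductively in $k$, two facts about the right-hand side of Eq.~\eqref{eq:alt_sol_2} (call it $\Phi_k$):
\begin{itemize}
\item[(i)] $\Phi_k$ satisfies Eq.~\eqref{eq:main} with $\psi_{k-1}$ replaced by $\Phi_{k-1}$, where $\Phi_0=\psi_0$;
\item[(ii)] $\Phi_k$ is bounded near $r=0$.
\end{itemize}

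Step (ii) is a power count. Since $\psi_0$ is continuous, the integrands behave like $\rho^3,\rho^2,\rho,1$ near $0$, so $I_1=\mathcal O(r^4)$, $I_2=\mathcal O(r^3)$, $I_3=\mathcal O(r^2)$ and $I_4=\mathcal O(r)$. In each bracket, $I_j$ is multiplied by a trigonometric factor carrying $\sin^{j-1}(\pi r)$, so every term is $\mathcal O(r^4)$. The prefactor $1/(s^3\sin(\pi r))$ is $\mathcal O(r^{-4})$, hence $\Phi_k=\mathcal O(1)$. The case $k=1$ also follows directly: using $\sin(\pi r)=2sc$, the formula for $\Phi_1$ coincides literally with Eq.~\eqref{eq:integral} at $C_1=0$.

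Step (i) is a direct differentiation. Write $\Phi_k=P(r)\sum_j q_{kj}(r)I_j(r)$ with $P=1/(s^3\sin(\pi r))$. Differentiating produces two groups of terms.
\begin{itemize}
\item[(a)] Terms $(P q_{kj})' I_j$, which still contain the integrals.
\item[(b)] Local terms $P\sum_j q_{kj}\,w_j\,\psi_0$, where $w_j$ is the weight in $I_j$, e.g.\ $w_1=s^3$ and $w_4=\cos(\pi r)/c^3$.
\end{itemize}
Substituting into Eq.~\eqref{eq:main} and moving $\frac{\pi a_{k-1}}{\sin(\pi r)}\Phi_{k-1}$ to the left, I would check separately that the coefficient of each $I_j$ vanishes and that the local $\psi_0$ terms cancel.
\begin{itemize}
\item The $I_j$-coefficients give trigonometric identities in $\sin(\pi r),\cos(\pi r)$. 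Using $\cos(\pi r)=c^2-s^2$ and $\sin(\pi r)=2sc$, these become polynomial identities in $s,c$ modulo $s^2+c^2=1$.
\item The local terms must cancel outright for $k\ge2$, since $\Phi_{k-1}$ contains no bare $\psi_0$. This amounts to $\sum_j q_{kj}w_j=0$, which is the mechanism that fixed the numerical coefficients (e.g.\ $-6$, $4(2\cos 2\pi r+1)$) in the statement.
\end{itemize}
Should the cancellations in the verification become unwieldy, a fallback derivation is to swap the order of the nested integrals (Fubini on the simplex $0<\rho_1<\dots<r$) in the iterated form of Eq.~\eqref{eq:integral}. This gives $\psi_k=\int_0^r K_k(r,\rho)\psi_0(\rho)\,\dl\rho$, with $K_k$ obtained from elementary integrals of products of powers of $s$ and $c$. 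Expanding $K_k$ in functions of $r$ times functions of $\rho$ then reproduces the four integrals $I_j$.

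The main obstacle is the bookkeeping in step (i) for $k=3,4$. There the brackets involve $\sin(2\pi r)$, $\cos(2\pi r)$ and the coefficient $(3+(7-2k)\cos\pi r)$, so many terms must cancel exactly. I would handle this by first reducing everything to $x=\cos(\pi r)$ with $\sin^2(\pi r)=1-x^2$, and verifying each identity as a polynomial identity in $x$.

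Finally, continuous differentiability on $0\le r<1$ is inherited from the preceding discussion of Eq.~\eqref{eq:integral}: the two expressions agree on $(0,1)$ and, being continuous, also at $r=0$, so nothing new is needed.
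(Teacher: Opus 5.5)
Your proposal is correct, but it argues a posteriori, whereas the paper derives the formulas.

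\textbf{The paper's route.} It integrates by parts repeatedly, using that $\frac{2}{\pi a_0}s^4c\,\psi_1$, $\frac{2}{\pi a_1}s^3\psi_2$ and $\frac{2s^2}{\pi a_2 c}\psi_3$ are antiderivatives of $s^3\psi_0$, $s^2\psi_1/c$ and $s\psi_2/c^2$. This expresses each $I_j$ as a combination of $\psi_1,\dots,\psi_j$ (Eqs.~\eqref{eq:alt_int_4}--\eqref{eq:alt_int_6}). Inverting that triangular system gives \eqref{eq:alt_sol_2}; the boundary terms at $\rho=0$ are killed by the factors of $s^3$.

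\textbf{Your route.} You verify that the stated expressions satisfy the hierarchy \eqref{eq:main} and are bounded at $r=0$. You then conclude by uniqueness, since the homogeneous solutions $s^{k-5}c^{k-2}$ blow up at $i^-$. Both ingredients check out. The power count in (ii) is right, and the local cancellation $\sum_j q_{kj}w_j=0$ holds for $k=2,3,4$, each case reducing to $s^2+c^2=1$.

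\textbf{The bookkeeping is light.} With $P=1/(s^3\sin\pi r)$ one finds $P'/P=-\pi(3+5\cos\pi r)/(2\sin\pi r)$. Adding the coefficient of $\psi_k$ in \eqref{eq:main} therefore gives $-\pi(k-1)\cot(\pi r)$. Write $\Phi_k=\pi a_0\cdots a_{k-1}\,P\,Q_k/N_k$ with $N_k=1,2,8,48$, where $Q_k$ is the bracket in \eqref{eq:alt_sol_2} and $Q_1=I_1$. The ODE then becomes $\sin(\pi r)\,Q_k'-\pi(k-1)\cos(\pi r)\,Q_k=2\pi(k-1)\,Q_{k-1}$, which takes only a few lines to check for $k=3,4$.

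\textbf{Comparison.} The paper's derivation explains where the weights in $I_2,\dots,I_4$ come from, including the ``corrective'' $\cos(\pi\rho)$ factors, and it parallels Theorem~\ref{thm:alt_sol_form}. Your verification presupposes the answer, but it is shorter and more mechanical. It replaces the boundary-term tracking and the final inversion by one power count plus uniqueness. It also makes explicit that the $C_k=0$ solutions are exactly those singled out by boundedness at $r=0$.
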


\begin{proof}
The first integral relation is Eq.~\eqref{eq:integral}. Thus, we begin with the first new integral $I_2$. To unravel this expression, we look to change $\psi_0$ into $\psi_1$ through integration by parts. Using Eq.~\eqref{eq:integral}, we know that an anti-derivative of $s^3\psi_0$ is $2s^4c \psi_1/(\pi a_0)$. Thus, integrating this factor, 
\begin{align*}
& I_2(r)
\equiv \int_0^r s^3 \psi_0 \frac{\cos(\pi\rho)}{sc} \dl \rho
\\
& \ =  \frac{2}{\pi a_0} s^4 c \psi_1 \frac{\cos(\pi\rho)}{sc} \biggl |_0^r - \frac{2}{\pi a_0}\int_0^r s^4 c \psi_1 \diff{}{\rho} \left ( \frac{\cos(\pi \rho)}{sc} \right )\! \dl \rho.
\end{align*}
Before differentiating the expression in the integrand, notice that $sc = \sin(\pi r/2) \cos(\pi r/2) = \sin(\pi r)/2$. Thus, the term being differentiated is $2\cot(\pi \rho)$, which has the derivative $-2\pi\csc^2(\pi \rho)$. However, further notice that $\csc(\pi \rho ) = 1/ \sin(\pi \rho) = 1/(2 s(\rho)c(\rho))$. Thus, our expression simplifies to
% \begin{align*}
$$I_2(r) = \frac{2}{\pi a_0} s^3\cos(\pi r) \psi_1 + \frac{1}{a_0}\int_0^r \frac{s^2}{c} \psi_1 \dl \rho,$$
% \end{align*}
noting that when we evaluated the first term at $\rho =0$, the factor of $s^3$ causes this to be zero. This happens similarly throughout the rest of the proof, without us explicitly mentioning it. We also note that the integral appearing in the previous expression is the same as the one occurring in the definition of $\psi_2$ in Eq.~\eqref{eq:integral}, implying
% \begin{align}
\begin{equation}\label{eq:alt_int_4}
% \begin{split}
I_2(r) 
= \frac{2}{\pi a_0} s^3\cos(\pi r) \psi_1 + \frac{2}{\pi a_0a_1} s^3 \psi_2. 
 % \end{split}
% \end{align}
\end{equation}
We now look at the next integral $I_3$. Again, using that $2s^4c \psi_1/(\pi a_0)$ is an anti-derivative of $s^3\psi_0$, we obtain
\begin{align*}
 I_3(r)
& \equiv \int_0^r \frac{1}{s^2c^2} \cdot s^3 \psi_0 \dl \rho,
\\
& = \frac{2}{\pi a_0}\frac{s^2}{c} \psi_1 - \frac{2}{\pi a_0} \int_0^r s^4 c \psi_1 \diff{}{\rho} \frac{1}{s^2c^2} \dl \rho. 
\end{align*}
Then, as $1/(s^2c^2) = 4 \csc^2 (\pi \rho)$, which has derivative $-8\pi \csc^2(\pi \rho) \cot(\pi\rho) $, we can simplify the expression to 
% \begin{align*}
\begin{equation*}
 I_3(r) = \frac{2}{\pi a_0}\frac{s^2}{c} \psi_1 +\frac{16}{a_0}\int_0^r s^4 c \csc^2(\pi \rho) \cot(\pi\rho)  \psi_1 \dl \rho.
% \end{align*}
\end{equation*}
Similarly to before, we aim to replace $\psi_1$ with $\psi_2$, which can be done when we recognise that $2s^3 \psi_2/(\pi a_0)$ is an anti-derivative of $s^2\psi_1/c$ from Eq.~\eqref{eq:integral}. Using this for integration by parts and simplifying the remaining factor in the integrand, we find that
\begin{align*}
I_3(r) &= \frac{2}{\pi a_0}\frac{s^2}{c} \psi_1 + \frac{8}{\pi a_0a_1} s^3 \cot(\pi r) \psi_2 
\\
&\qquad -\,  \frac{8}{\pi a_0a_1} \int_0^rs^3 \psi_2 \diff{}{\rho} \cot(\pi \rho) \dl \rho.
\end{align*}
Lastly, as $-\pi\csc^2(\pi \rho) = -\pi/(4s^2c^2)$, which is the derivative of $\cot(\pi\rho) $, we can simplify the expression to
\begin{align}
\label{eq:alt_int_5}
\begin{split}
I_3(r) & = \frac{2}{\pi a_0}\frac{s^2}{c} \psi_1 + \frac{8}{\pi a_0a_1} s^3 \cot(\pi r) \psi_2
\\
&\qquad +\, \frac{2}{ a_0a_1} \int_0^r\frac{s}{c^2}\psi_2 \dl \rho
\\
& = \frac{2}{\pi a_0}\frac{s^2}{c} \psi_1 + \frac{4}{\pi a_0a_1} \frac{s^2}{c} \cos(\pi r)\psi_2
\\
& \qquad + \, \frac{4}{\pi a_0a_1a_2} \frac{s^2}{c} \psi_3,
\end{split}
\end{align}
noting the use of Eq.~\eqref{eq:integral} for $k=3$ and the fact that $2\csc(\pi r) = 1/(sc)$ enabling us to write $2\cot(\pi r) = \cos(\pi r)/(sc)$.  
Finally, we consider the last integral $I_4$, which we write as
\begin{align*}
I_4(r) &= \int_0^r \frac{1+ 8s^2c^2}{c^3} \cos(\pi \rho)\psi_0 \dl \rho - 8I_2(r),
\end{align*}
to simplify derivative terms and utilising Eq.~\eqref{eq:alt_int_4} on $I_2(r)$. Focusing on the new integral, we again use the anti-derivative $2s^4c\psi_1/(\pi a_0)$ of $s^3 \psi_0$ for integration by parts. Together with $\cos(\pi\rho) = 2sc\cot(\pi \rho)$ and $1/(2sc) =\csc(\pi\rho) $, we find that our expression becomes
\begin{align*}
& \int_0^r \frac{1+ 8s^2c^2}{c^3} \cos(\pi \rho)\psi_0 \dl \rho 
\\
& \quad \equiv 8\int_0^r [\csc^2(\pi\rho) + 2 ] \cot(\pi\rho) s^3\psi_0 \dl \rho
\\
& \quad = \frac{16}{\pi a_0} s^4 c [ \csc^2(\pi r) + 2 ] \cot(\pi\rho)  \psi_1 
\\
& \qquad - \, \frac{16}{\pi a_0} \int_0^r s^4 c \psi_1 \diff{}{\rho} \left ( \left [ 2 + \csc^2(\pi\rho) \right ] \cot(\pi\rho) \right ) \dl \rho.
\end{align*}
Although $\csc^2(\pi r)$ and $[\csc^2(\pi\rho)+2]\cot(\pi\rho)$ are singular at $r=0$ and $\rho=0$, respectively, the factor $s^3$ removes this divergence. Evaluating the derivative in the integrand gives $- 3\pi \csc^4(\rho \pi)$, and our relation becomes
\begin{align*}
& \int_0^r \frac{1+ 8s^2c^2}{c^3} \cos(\pi \rho)\psi_0 \dl \rho 
\\
& \quad = \frac{8}{\pi a_0} s^3 [\csc^2(\pi r) + 2 ] \cos(\pi \rho)\psi_1
\\
& \qquad +\,  \frac{48}{a_0}\int_0^r s^4c\csc^4(\pi \rho) \psi_1 \dl \rho,
\end{align*}
where we replaced $\cot(\pi \rho)$ by $\cos(\pi \rho)/(2sc)$. We again use the anti-derivative $2s^3\psi_2/(\pi a_0 )$ of $s^2\psi_1/c$ to employ integration by parts on the next integral. Simplifying the remaining factor in the integral, we find that
\begin{align*}
& \int_0^r \frac{1+ 8s^2c^2}{c^3} \cos(\pi \rho)\psi_0 \dl \rho 
\\
& \quad = \frac{8}{\pi a_0} s^3 [\csc^2(\pi r) + 2] \cos(\pi \rho)\psi_1 + \frac{24}{\pi a_0 a_1} s^3 \csc^2(\pi\rho) \psi_2
\\
& \qquad -\, \frac{24}{\pi a_0 a_1} \int_0^r s^3 \psi_2 \diff{}{\rho} \csc^2(\pi \rho) \dl \rho. 
\end{align*}
Noting that $- 2\pi\csc^2(\pi \rho) \cot(\pi \rho)$ is the derivative of $\csc^2(\pi \rho)$ and $\csc(\pi \rho) = 1/(2sc)$, the above relation becomes
\begin{align*}
& \int_0^r \frac{1+ 8s^2c^2}{c^3} \cos(\pi \rho)\psi_0 \dl \rho 
\\
& \quad = \frac{8}{\pi a_0} s^3 [\csc^2(\pi r) + 2 ] \cos(\pi \rho)\psi_1 + \frac{6}{\pi a_0 a_1} \frac{s}{c^2} \psi_2
\\
& \qquad +\, \frac{48}{a_0 a_1} \int_0^r s^3 \csc^2(\pi \rho) \cot(\pi\rho)\psi_2 \dl \rho. 
\end{align*}
We now utilise the fact that $2s^2\psi_3/(a_2\pi c)$ is an anti-derivative of $s\psi_2/c^2$ to perform one last integration by parts. Simplifying the remaining factor then gives 
\begin{align*}
& \int_0^r \frac{1+ 8s^2c^2}{c^3} \cos(\pi \rho)\psi_0 \dl \rho 
\\
& \quad = \frac{8}{\pi a_0} s^3  [ \csc^2(\pi r) + 2  ] \cos(\pi \rho)\psi_1 + \frac{6}{\pi a_0 a_1} \frac{s}{c^2} \psi_2
\\
& \qquad +\, \frac{24}{\pi a_0 a_1a_2} \frac{s^2}{c} \cot(\pi r) \psi_3 
\\
& \qquad \quad -\, \frac{24}{\pi a_0 a_1 a_2}\int_0^r \frac{s}{c^2} \psi_3 \diff{}{\rho} \cot(\pi \rho) \dl \rho. 
\end{align*}
Noting that the derivative of $\cot(\rho \pi)$ is $- \pi \csc^2 (\rho \pi) = -  \pi/(4 s^2c^2)$, using Eq.~\eqref{eq:integral} for the $k=4$ case, and with $\cot(\pi r) = \cos(\pi r)/(2sc)$, we obtain
\begin{align*}
& \int_0^r \frac{1+ 8s^2c^2}{c^3} \cos(\pi \rho)\psi_0 \dl \rho 
\\
& \quad = \frac{8}{\pi a_0} s^3  [ \csc^2(\pi r) + 2  ] \cos(\pi \rho)\psi_1 + \frac{6}{\pi a_0 a_1} \frac{s}{c^2} \psi_2
\\
& \qquad +\, \frac{12}{\pi a_0 a_1a_2} \frac{s}{c^2} \cos(\pi r) \psi_3 +\frac{6}{ a_0 a_1 a_2}\int_0^r \frac{1}{c^3} \psi_3 \dl \rho
\\
& \quad = \frac{8}{\pi a_0} s^3  [ \csc^2(\pi r) + 2 ] \cos(\pi \rho)\psi_1 + \frac{6}{\pi a_0 a_1} \frac{s}{c^2} \psi_2
\\
& \qquad +\, \frac{12}{\pi a_0 a_1a_2} \frac{s}{c^2} \cos(\pi r) \psi_3 +\frac{12}{\pi a_0 a_1 a_2a_3}\frac{s}{c^2} \psi_4.
\end{align*}
Using this in our original integral formula, we can then find the desired relation
\begin{align}
\label{eq:alt_int_6}
\begin{split}
& I_4(r) = \frac{2}{\pi a_0} \frac{s}{c^2} \cos(\pi r)\psi_1 + \frac{2}{\pi a_0 a_1}\frac{s}{c^2}(1+ 2\cos^2(\pi r) )\psi_2
\\
& \qquad +\, \frac{12}{\pi a_0 a_1a_2} \frac{s}{c^2} \cos(\pi r) \psi_3 +\frac{12}{\pi a_0 a_1 a_2a_3}\frac{s}{c^2} \psi_4.
\end{split}
\end{align}
Finally, rearranging Eqs.~\eqref{eq:alt_int_4}--\eqref{eq:alt_int_6} to express $\psi_k$ in terms of the integrals $I_1,\dots, I_4$ and doing several cosine and sine double angle formula simplifications gives Eq.~\eqref{eq:alt_sol_2}, completing the proof. 
\end{proof}

With this in hand, we now have a direct way to compare the regularity of $\psi_k$ to the properties of $\psi_0$. It is also interesting, but not surprising, that the nature of the integrals is similar to those given in Eq.~\eqref{eq:alt_sol}, in that the new integral of $\psi_0$ introduced for $\psi_k$ mimics the integral of $\psi_{k-1}$ in the other definition of $\psi_k$. However, this time, we have what appear to be `corrective' factors of $\cos(\pi \rho)$ for the $\psi_2$ and $\psi_4$ integrals. Using this representation, we now have the following result regarding regularity at the cylinder and other properties of $\psi_k$. 

\begin{theorem}\label{thm:props_psi_k}
Suppose $\psi_0$ is continuously differentiable for $0\le r\le 1$. Then, if $\psi_0 = \mathcal{O}\left ((1-r)^n \right )$ for $n\ge 2$, it can generate a solution to Eq.~\eqref{eq:main} that is continuously differentiable for all $0\le r \le 1$ if and only if 
\begin{equation}\label{eq:props_1}
\int_0^1 \sin^3(\pi r/2) \psi_0(r) \dl r =0.
\end{equation}
Furthermore, all $\psi_k(1) = 0$ if and only if 
\begin{equation}\label{eq:props_2}
\int_0^1 \frac{\sin^2(\pi r/2)}{\cos(\pi r/2)} \cos(\pi r) \psi_0(r) \dl r = 0.
\end{equation}
Lastly, if $\psi_0$ has compact support $[a,b]$ for $0< a<b <1$, then all $\psi_k$ have compact support contained in $[a,b]$ if and only if Eqs.~\eqref{eq:props_1} and \eqref{eq:props_2} hold and
\begin{align}
\int_a^b \frac{\sin(\pi r/2)}{\cos^2(\pi r/2)} \psi_0(r) \dl r & = 0,\label{eq:props_3}
\\
\int_a^b \frac{\cos(\pi r)}{\cos^3(\pi r/2)} \psi_0(r) \dl r & = 0.\label{eq:props_4}
\end{align}
\end{theorem}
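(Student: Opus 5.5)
The plan is to work entirely from the representation of Theorem~\ref{thm:alt_sol_form_2}. By the discussion preceding it, the $\psi_k$ are already continuously differentiable on $0\le r<1$, so only the behaviour as $r\to 1$ needs study.

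\textbf{Local expansions at the cylinder.} Near $r=1$ we have
\[
s\to 1,\qquad c\sim \tfrac{\pi}{2}(1-r),\qquad \sin(\pi r)\sim \pi(1-r),\qquad \cos(\pi r)\to -1,\qquad \cos(2\pi r)\to 1 .
\]
Hence the common prefactor $1/(s^3\sin(\pi r))$ behaves like $1/(\pi(1-r))$. The integrands satisfy $s^3\psi_0=\mathcal O((1-r)^n)$, the $I_2$ integrand is $\mathcal O((1-r)^{n-1})$, the $I_3$ integrand is $\mathcal O((1-r)^{n-2})$ and the $I_4$ integrand is $\mathcal O((1-r)^{n-3})$. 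For $n\ge2$ this gives:
\begin{itemize}
\item $I_1$, $I_2$ and $I_3$ extend continuously to $r=1$;
\item $I_4$ is at worst $\mathcal O(\ln(1-r))$.
\end{itemize}

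\textbf{Sorting the terms in Eq.~\eqref{eq:alt_sol_2}.} After division by $\sin(\pi r)$:
\begin{itemize}
\item the $I_4$-term is $\mathcal O((1-r)^2\ln(1-r))$;
\item the $I_3$-terms are $\mathcal O(1-r)$ (for $\psi_4$ the coefficient of $I_3$ carries $\sin^2$);
\item the $I_2$-terms tend to finite limits;
\item the $I_1$-terms behave like $\mathrm{const}\cdot I_1(r)/(1-r)$.
\end{itemize}
The coefficient of $I_1(r)/(\pi(1-r))$ in $\psi_1,\dots,\psi_4$ is, up to the nonzero factors $a_k$, respectively $1$, $2$, $4$ and $8$, so none vanishes. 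Therefore every $\psi_k$ is bounded at $r=1$ if and only if $I_1(1)=0$, which is Eq.~\eqref{eq:props_1}; this gives the ``only if'' direction.

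\textbf{Regularity when Eq.~\eqref{eq:props_1} holds.} Conversely, assume Eq.~\eqref{eq:props_1}. Then
\[
I_1(r)=-\int_r^1 s^3\psi_0\,\dl\rho=\mathcal O\bigl((1-r)^{n+1}\bigr),
\]
so the $I_1$-terms are $\mathcal O((1-r)^n)$. Similarly, write $I_2(r)=I_2(1)-\int_r^1(\dots)$ with remainder $\mathcal O((1-r)^n)$. Continuity at $r=1$ is then clear. For continuous differentiability I would not differentiate the formula term by term. Instead I would use the ODE \eqref{eq:main}: it gives $\psi_k'$ as $\bigl(-\tfrac{\pi}{2}(3+(7-2k)\cos\pi r)\psi_k+\pi a_{k-1}\psi_{k-1}\bigr)/\sin(\pi r)$. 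This reduces the claim to showing that the numerator vanishes to first order at $r=1$, i.e.\ that the expansion of $\psi_k$ about $r=1$ has the right $\mathcal O(1)$ and $\mathcal O(1-r)$ structure. I would obtain this inductively in $k$ from the expansions above, checking the matching relations $(k-2)\psi_k(1)=a_{k-1}\psi_{k-1}(1)$-type compatibilities at the cylinder.

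This step is the main obstacle. In the borderline case $n=2$, the $I_4$-term contributes $(1-r)^2\ln(1-r)$ to $\psi_4$ and the $I_3$-terms contribute at order $(1-r)$. So I must verify that no $(1-r)\ln(1-r)$ term survives: one appears in the $I_4$ contribution to $\psi_4$ only with a factor $(1-r)^2$, which is still $C^1$. I must also verify that the derivative limits from Eq.~\eqref{eq:main} exist, using l'H\^opital arguments as in the treatment of $r=0$.

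\textbf{Values at the cylinder.} Taking limits in Eq.~\eqref{eq:alt_sol_2} with $I_1(1)=0$ gives
\[
\psi_1(1)=0,\qquad \psi_2(1)=\tfrac{\pi a_0a_1}{2}I_2(1).
\]
In $\psi_3$, the term $-2\sin(2\pi r)I_2/\sin(\pi r)=-4\cos(\pi r)I_2\to 4I_2(1)$, so $\psi_3(1)=\tfrac{\pi a_0a_1a_2}{2}I_2(1)$. In $\psi_4$, the $I_2$-term $4\sin(\pi r)(2\cos(2\pi r)+1)I_2/\sin(\pi r)\to 12I_2(1)$, so $\psi_4(1)=\tfrac{\pi a_0a_1a_2a_3}{4}I_2(1)$. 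Hence all $\psi_k(1)$ vanish if and only if $I_2(1)=0$, which is Eq.~\eqref{eq:props_2}.

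\textbf{Compact support.} Let $\operatorname{supp}\psi_0\subset[a,b]$ with $0<a<b<1$. For $r\le a$ all $I_j(r)=0$, so every $\psi_k$ vanishes there automatically. For $b\le r<1$ each $I_j(r)$ equals the constant $J_j:=\int_a^b(\dots)\psi_0$, which coincides with the corresponding integral over $[0,1]$ (and is finite because $\psi_0$ vanishes near $1$). Requiring $\psi_k\equiv0$ on $[b,1)$ and using the triangular structure of Eq.~\eqref{eq:alt_sol_2} gives the conditions in order:
\begin{itemize}
\item $\psi_1$ forces $J_1=0$;
\item then $\psi_2\propto\sin(\pi r)J_2$ forces $J_2=0$;
\item then $\psi_3\propto\sin^2(\pi r)J_3$ forces $J_3=0$;
\item then $\psi_4\propto\sin^3(\pi r)J_4$ forces $J_4=0$.
\end{itemize}
Here $\sin(\pi r)\neq0$ on $[b,1)$. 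These are exactly Eqs.~\eqref{eq:props_1}--\eqref{eq:props_4}, and conversely if all four hold every $\psi_k$ vanishes identically on $[b,1]$, so each $\psi_k$ has support contained in $[a,b]$.
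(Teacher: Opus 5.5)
Your argument is correct. For the necessity of \eqref{eq:props_1}, the cylinder values giving \eqref{eq:props_2}, and the compact-support part, it is essentially the paper's proof. The only variation there is that you extract all four conditions from the triangular structure on $[b,1)$, whereas the paper takes the first two from regularity and $\psi_2(1)=0$ and then evaluates $\psi_3,\psi_4$.

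The genuine difference is in showing that \eqref{eq:props_1} suffices for $C^1$ regularity at $r=1$. The paper checks directly that each block $I_1/\sin(\pi r)$, $I_2$, $\sin(\pi r)I_3$, $\sin^2(\pi r)I_4$ is $C^1$ up to $r=1$. It uses l'H\^opital for $I_1/\sin(\pi r)$ and direct estimates for the other blocks, treating $n=2$ separately because of the logarithm in $I_4$. You instead combine expansions at the cylinder with the ODE \eqref{eq:main}.

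Your route works, but the step you flag as the main obstacle should be stated precisely. Showing that the numerator is merely $\mathcal{O}(1-r)$ only bounds $\psi_k'$. What you need is $\psi_k(r)=\psi_k(1)+\alpha_k(1-r)+o(1-r)$ for every $k$, and your estimates already give this:
\[
\alpha_1=\alpha_2=0,\qquad \alpha_3=\tfrac{\pi^2a_0a_1a_2}{8}I_3(1),\qquad \alpha_4=a_3\alpha_3 .
\]
The $I_4$-term is $\mathcal{O}((1-r)^2\ln(1-r))=o(1-r)$. No induction is needed. In fact the slope of $\psi_3$ cannot be obtained from $\psi_2$ through the ODE, since $(1-r)$ is a homogeneous local solution at $k=3$; it must be read off from Eq.~\eqref{eq:alt_sol_2}.

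With these expansions, note that $3+(7-2k)\cos(\pi r)=2(k-2)+\mathcal{O}((1-r)^2)$, and you have already verified $(k-2)\psi_k(1)=a_{k-1}\psi_{k-1}(1)$. The numerator in \eqref{eq:main} is therefore $\pi\bigl(a_{k-1}\alpha_{k-1}-(k-2)\alpha_k\bigr)(1-r)+o(1-r)$. Hence $\psi_k'(r)$ has a limit as $r\to1^-$, and $\psi_k$ is $C^1$ on $[0,1]$.

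Your route avoids differentiating $I_4$ and exhibits regularity at $I^-$ exactly as the cylinder compatibility conditions of Proposition~\ref{prop:r0 conds}. The paper's route is self-contained and does not use the equation, at the cost of separate limit computations for each term.
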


\begin{proof}
The proof of each property is heavily reliant on the result of Theorem \ref{thm:alt_sol_form_2}. For the forward direction of the proof, suppose that each $\psi_k$ generated from $\psi_0$ is continuously differentiable for all $0\le r\le 1$. If we examine $\psi_1$ at $r=1$, as it is differentiable, we are guaranteed that $\psi_1 = \mathcal{O}(1)$ near $r=1$. Thus, looking at the formula for $\psi_1$ given in \eqref{eq:alt_sol_2}, we also find that
\begin{align*}
I_1(r) = \psi_1(r)\frac{s^3 \sin(\pi r)} {\pi a_0} = \mathcal{O}(1-r),
\end{align*}
due to $s(r) =\mathcal{O}(1)$ and $\sin(\pi r) = \mathcal{O}(1-r)$. Thus, taking $r \to 1^-$ gives Eq.~\eqref{eq:props_1} as a necessary condition. 

For the sufficient condition, we look at the derivative of $I_1(r)/\sin(\pi r)$ at $r=1$. But first, we must find the limit of this quotient as $r\to 1^-$ before discussing derivatives. For this, a simple application of L'H\^opital's rule gives that
\[
\lim_{r\to 1^-} \frac{I_1(r)}{\sin(\pi r)} = \lim_{r\to 1^-} \frac{s^3 \psi_0 }{\pi \cos(\pi r)} = 0.
\]
as $\psi_0 = \mathcal{O}\left ( (1-r)^n\right )$ for $n\ge 2$ near $r=1$. Thus, for the derivative at $r=1$, we only need to take the same limit but divided by another factor of $r-1$. For this, L'H\^opital's simply yields
\begin{align*}
\lim_{r\to 1^-} \frac{I_1(r)}{(r-1)\sin(\pi r)} 
& = \lim_{r\to 1^-} \frac{s^3 \psi_0 }{\pi (r-1)\cos(\pi r) + \sin(\pi r)} ,
\\
& = 0
\end{align*}
again, because $\psi_0 = \mathcal{O}\left ( (1-r)^n\right )$ for $n\ge 2$ and the denominator behaves like $\mathcal{O}\left (1-r\right )$ near $r=1$. Thus, this term is differentiable at $r=1$. For continuously differentiable, note that the derivative of $I_1(r)/\sin(\pi r)$ for $r\neq 1$ is
\begin{align*}
\frac{s^3 \psi_0}{\sin(\pi r)} -\frac{\pi \cos(\pi r) I_1(r)}{\sin^2(\pi r)}.
\end{align*}
We know that the first term goes to zero in the limit $r\to1^-$ as $\psi_0 =\mathcal{O}\left ( (1-r)^n\right )$ for $n\ge 2$. Using L'H\^opital again, we find that the second term also approaches zero. Consequently, the limit of the derivative is equal to its value at $r=1$, making $I_1(r)/\sin(\pi r)$ continuously differentiable. 

Next, we consider the other three integral terms of relevance in \eqref{eq:alt_sol_2}. Note that the factors in front of these integrals are continuously differentiable at $r=1$. We separate the cases for $\psi_0 = \mathcal{O}\left ( (1-r)^n\right )$ into $n\ge 3$ and $n=2$ to demonstrate that the $n=2$ case still holds, but we have $\psi_4 = \mathcal{O}\left ( (1-r)^2\ln (1-r) \right )$ near $r=1$, which is undesirable. (This is the same behaviour hinted at earlier for the semi-compactified representation when $\psi_0(\hr ) = \mathcal{O}(\hr^2)$.) 

When $\psi_0 = \mathcal{O}\left ( (1-r)^n\right )$ for $n\ge 3$, we see that the integrands are continuous at $r=1$ for $I_2(r)$ and $I_3(r)$, implying that these integrals themselves are continuously differentiable. For $I_4(r)$, note that it is multiplied by a factor of $\sin^2(\pi r)=\mathcal{O}\left ((1-r)^2\right )$ near $r=1$. Thus, the derivative at $r=1$ is zero. Similarly, by the product rule, the net derivative term for $r\neq 1$ will be multiplied by a factor of $\sin (\pi r)$. Then, the entire term will have the behaviour $\mathcal{O}(1-r)$, which tends to zero as $r\to 1^-$. Hence, it is continuously differentiable. 

If $\psi_0 = \mathcal{O}\left ( (1-r)^2\right )$, the analysis becomes slightly more involved. We know that $I_2(r)$ has a continuous integrand, as it has behaviour $\mathcal{O}(1-r)$. For $I_3(r)$, notice that it has an extra factor of $\sin(\pi r)$ out front. Thus, the combined derivative at $r=1$ is
\[
\lim_{r\to 1^-} \frac{\sin(\pi r)}{r-1}I_3(r) = - \pi I_3(1). 
\]
On the other hand, the derivative for $r<1$ is
\[
\frac{s}{c^2}\sin(\pi r) \psi_0 + \pi \cos(\pi r)I_3(r),
\]
which, in the limit as $r \to 1^-$, gives exactly the value above, making the derivative continuous. For $I_4(r)$, the integrand is of order $\mathcal{O}\left (1/(1-r)\right )$, meaning the integral itself has behaviour $\mathcal{O}\left( \ln (1-r)\right )$. However, the extra factor of $\sin^2(\pi r)$ compensates for this, as can be seen in the following: We already know that it is continuous when we define it by its limit as $r \to 1^-$, which is zero. Then, for the derivative, we see that the limit considered in the definition of the derivative has behaviour $\mathcal{O}\left ((1-r)\ln (1-r) \right ) \to 0$ as $r\to 1$. Thus, the derivative is $0$ at $r=1$. For the limit of the derivative, we see that
\begin{align*}
& \frac{\sin^2(\pi r)\cos(\pi r)}{c^3} \psi_0 + 2\pi \sin(\pi r)\cos(\pi r)I_4(r) 
\\
&\qquad \qquad = \mathcal{O}\left ( (1-r) \ln (1-r) \right ) \to 0,
\end{align*}
as $r \to 1^-$, giving us the desired result. 

For the second statement, $\psi_1(1)=0$ is enforced. Thus, we are only concerned with $\psi_2$, $\psi_3$, and $\psi_4$. When we evaluate these at $r=1$ using Eq.~\eqref{eq:alt_sol_2}, we can find that
\begin{align*}
\frac{\pi a_0a_1 }{2}I_2(1) = \,\psi_2(1) = \frac{1}{a_2}\psi_3(1) = \frac{2}{a_2a_3}\psi_4(1),
\end{align*}
noting that the $\sin(2\pi r) = 2\cos(\pi r) \sin(\pi r)$ in $\psi_3$ and that all other integral terms with their corresponding factors go to zero. Hence, the second statement is an immediate consequence of this fact. 

For the last statement, because $\psi_0$ has compact support $[a,b]$, any integral considered from $0$ to $1$ of $\psi_0$ is equivalent to when it is integrated from $a$ to $b$ or $0$ to $b$, as $0<a<b<1$. For the forward direction, Eqs.~\eqref{eq:props_1} and \eqref{eq:props_2} are already enforced by the condition of regularity and $\psi_2(1) = 0$. Thus, it remains to show that Eqs.~\eqref{eq:props_3} and \eqref{eq:props_4} are enforced. For this, if we evaluate $\psi_3$ and $\psi_4$ at $r$ such that $b\le r< 1$, we find that
\begin{align*}
0 & = \frac{\pi a_0 a_1 a_2 c}{4 s^2} I_3(b),
\\
0 & = \frac{\pi a_0a_1a_2a_3c}{24s^2} ( \sin(\pi r) I_4(b) - 6 \cos(\pi r)I_3(b) ),
\end{align*}  
noting the use of $\sin(\pi r ) = 2sc$. The first equation is only satisfied when Eq.~\eqref{eq:props_3} holds as $\cos(\pi r/2) \neq 0 $ for $0<b\le r<1$. Applying this to the second of the above equations shows that Eq.~\eqref{eq:props_4} must hold as well, which concludes the proof of the forward direction. For the backward direction, we can immediately see that $\psi_k(r) = 0$ whenever $0\le r\le a$ or $b\le r\le 1$, since Eqs.~\eqref{eq:props_1}--\eqref{eq:props_4} all hold and $\psi_0$ has compact support $[a,b]$, completing the proof. 
\end{proof}

The results of Theorem \ref{thm:props_psi_k} are useful in providing us with the guidance needed to ensure certain desirable properties of each $\psi_k$ given our choice of the incoming gravitational wave $\psi_0$. 
As in the semi-compactified representation, the condition $\psi_0(r) = \mathcal{O}\left((1-r)^3\right)$ near $r=1$ ensures that $m$-times differentiability of $\psi_0$ implies $m$-times differentiability of each $\psi_k$.
The proof of this relies on a couple of key ideas. At timelike infinity ($r=0$), one should justify why considering 
\begin{align*}
&\frac{1}{s^4c}\int_0^r s^3 \psi_0\dl \rho, &&&&\frac{1}{s^3}\int_0^r \frac{s^2}{c} \cos(\pi \rho) \psi_0\dl \rho,
\\ 
&  \frac{c}{s^2}\int_0^r \frac{s}{c^2} \psi_0\dl \rho, && \text{and} && \frac{c^2}{s}\int_0^r\frac{\cos(\pi \rho)}{c^3} \psi_0\dl \rho,
\end{align*}
is equivalent to 
\[
\frac{1}{r^{n}} \int_0^{r} \rho^{n-1} f_n(\rho) \dl \rho,
\]
for a wisely chosen $f_n$, for $n = 1,\dots, 4$, which is also $m$-times continuously differentiable and related to $\psi_0$. Then, one needs to show that if $f_n$ is $m$-times continuously differentiable, then so is the previous expression. For the consideration at spacelike infinity ($r=1$), we do a similar manipulation, but this time show why it is equivalent to consider the regularity of expressions with the form
\[
\hr^{n-2}\int_0^{\hr} \rho^{1-n} g_n(\rho)\dl \rho,
\]
where $\hr = 1-r$, and $g_n$, for $n=1,\dots,4$, is a carefully selected $m$-times continuously differentiable function related to $\psi_0$ such that $g_n(\hr) = \mathcal{O}(\hr^3)$ near $\hr = 0$. Then it becomes the same justification as for the semi-compactified representation. 

A further desirable property is that each $\psi_k$ has compact support bounded away from both timelike and spacelike infinity, as realistic gravitational waves are not expected to originate in these regions and therefore should not influence them. However, this requirement is very restrictive on $\psi_0$, as it must satisfy four different integral conditions. An idea of how to construct said initial data is outlined in Sec.~\ref{sec:reg conds2} in the semi-compactified representation, but the idea remains the same for the fully-compactified representation.

\subsection{Initial data sets satisfying regularity conditions: (1) $\delta$-peaks} \label{sec:reg conds}
We now briefly discuss the constraints imposed by Theorems~\ref{thm:compact_sol},~\ref{thm:compact_sol_2}, and~\ref{thm:props_psi_k} on $\psi_0$ and the resulting initial data set in the specific case of a mode $l=2$, $m=0$. In similar problems, the simplest ingoing wave considered for intuition is a single $\delta$-distribution.
In this case, setting $\psi_0(\hr) = \delta(\hr-r_1)$ for some $r_1 \geq 0$ results in singular behaviour at $\hr=0$, as \cref{eq:compact_sol_1} is not satisfied. However, we can choose a combination of $\delta$-distributions. 
For example,
\begin{equation} \label{eq:sc double delta}
	\psi_0(\hr) = \delta(\hr-r_1) - \delta(\hr - r_2),
\end{equation}
with $r_1 > r_2 > 0$ satisfies \cref{eq:compact_sol_1}.
Although setting $\psi_0$ to be a combination of $\delta$-distributions obviously fails some of the basic requirements we have assumed of the initial data, it still provides a good way to understand the basic behaviour of the system.
If $\psi_0$ is assumed to be a continuously differentiable function, then it must have at least one positive and one negative component to potentially satisfy \cref{eq:compact_sol_1}.
Setting these two components to be $\delta$-distributions, we are able to analytically solve for the other $\psi_k$ to get a qualitative idea of the form of the initial data set. 
If we choose $\psi_0$ as in \cref{eq:sc double delta}, with $r_1 = 2/3$ and $r_2 = 1/3$, we obtain the initial data set shown in \cref{fig:sc two delta}.

We may want $\psi_k$ to vanish at $\hr=0$, in which case we need to satisfy \cref{eq:compact_sol_2}. 
This is not possible with only two $\delta$-distributions, but a linear combination of three gives enough degrees of freedom to satisfy both conditions. 
For example, take
\begin{equation} \label{eq:sc three delta}
	\psi_0(\hr) = \delta(\hr - r_1) - A \delta(\hr - r_2) + (A-1) \delta(\hr-r_3),
\end{equation} 
with $r_i > 0$, $r_i \neq r_j$, which clearly satisfies \cref{eq:compact_sol_1}. 
To satisfy \cref{eq:compact_sol_2}, we choose $A = \frac{r_2(r_1 - r_3)}{r_1(r_2 - r_3)}$.
An example of such a solution set is given in \cref{fig:sc three delta}.

We can also choose $\psi_0$ such that all $\psi_k$ have compact support away from the cylinder via \cref{thm:compact_sol_2}. Thus, we must ensure that each instance of \cref{eq:compact_sol_3} is satisfied. For this, we start with prescribing 
\begin{equation} \label{eq:sc five delta}
    \psi_0(\hr) = \sum_{i=1}^4 A_i\delta(\hr-r_i) - \delta(\hr - r_5),
\end{equation}
for any given $r_i>0$ such that $r_i \neq r_j$ for $i\neq j$. Enforcing \cref{eq:compact_sol_3}, we find that the choice of our coefficients $A_i$ is 
\begin{align}
    A_i = \frac{r_i^3}{r_5^3}\prod_{\substack{j\le 4 \\ j\neq i}} \frac{r_j - r_5}{r_j - r_i}.
\end{align}
An example of a solution of this form is given in \cref{fig:sc 5 delta}, with  $r_1=1/4$, $r_2 = 1/3$, $r_3=1/2$, $r_4=2/3$, and $r_5 = 3/4$.
\begin{figure}[htbp]
	\centering
	\begin{subfigure}{0.45\textwidth}
		\includegraphics[width=\textwidth]{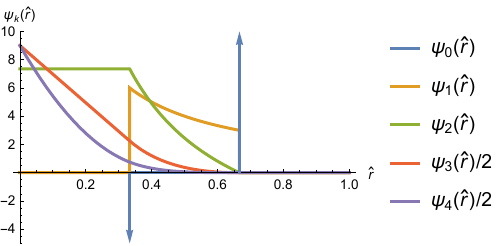}
		\caption{Resultant data set when $\psi_0(\hr) = \delta\left(\hr - \frac{2}{3}\right) - \delta\left(\hr - \frac{1}{3}\right)$.}
		\label{fig:sc two delta}
	\end{subfigure}
	\begin{subfigure}{0.45\textwidth}
		\includegraphics[width=\textwidth]{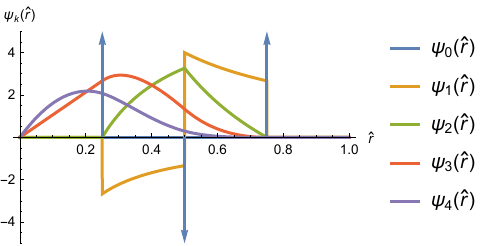}
		\caption{Resultant data set when $\psi_0(\hr) = \delta\left(\hr - \frac{3}{4}\right) - \frac{4}{3} \delta\left(\hr - \frac{1}{2}\right) + \frac{1}{3} \delta\left(\hr - \frac{1}{4}\right)$.}
		\label{fig:sc three delta}
	\end{subfigure}
    \begin{subfigure}{0.45\textwidth}
		\includegraphics[width=\textwidth]{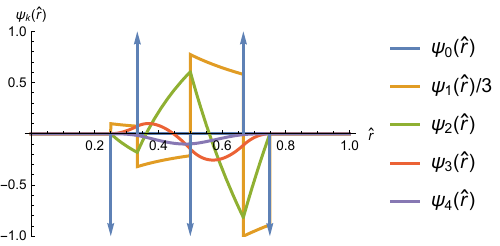}
		\caption{Resultant data set when $\psi_0(\hr) = \sum_{i=1}^4 A_i \delta(\hat r - r_i) - \delta(\hat r - r_5)$.}
		\label{fig:sc 5 delta}
	\end{subfigure}
	\caption{}
	\label{fig:sc delta sols}	
\end{figure}

With \cref{thm:props_psi_k}, we have also obtained integral conditions on the initial data for the fully-compactified gauge analogous to \cref{thm:compact_sol,thm:compact_sol_2}. 
We can get an idea of what initial data sets satisfying these conditions look like by taking $\psi_0$ as a linear combination of $\delta $-distributions.
Take
\begin{equation}
	\psi_0(r) = \delta(r-r_1) - A \delta(r-r_2),
\end{equation}   
where $A > 0$, $r_1, r_2 \in (0,1)$, and $r_1 \neq r_2$. 
To satisfy \cref{eq:props_1}, we require $A = s(r_1)^3/s(r_2)^3$. 
Choosing $r_1 = 1/3$ and $r_2 = 2/3$, and solving for the other components, we obtain the initial data set shown in \cref{fig:fc two delta}.

\begin{figure}[htbp]
	\centering
	\begin{subfigure}{0.45\textwidth}
		\includegraphics[width=\textwidth]{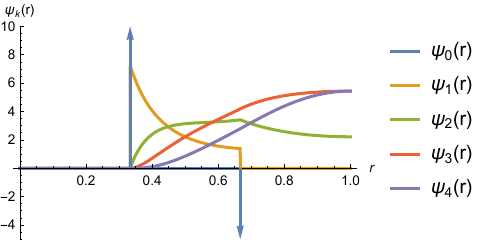}
		\caption{Resultant data set for $\psi_0$ of the form \linebreak 
        $\psi_0(r) = \delta\left(r - \frac{2}{3}\right) - A \delta\left(r - \frac{1}{3}\right)$.}
		\label{fig:fc two delta}
	\end{subfigure}
	\begin{subfigure}{0.45\textwidth}
		\includegraphics[width=\textwidth]{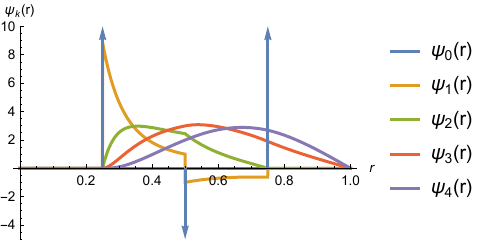}
		\caption{Resultant data set for $\psi_0$ of the form \linebreak 
        $\psi_0(r) = \delta\left(r - \frac{3}{4}\right) - A \delta\left(r - \frac{1}{2}\right) + B \delta\left(r - \frac{1}{4}\right)$.}
		\label{fig:fc three delta}
	\end{subfigure}
    \begin{subfigure}{0.45\textwidth}
		\includegraphics[width=\textwidth]{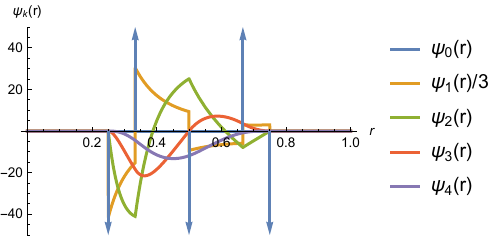}
		\caption{Resultant data set for $\psi_0$ of the form \linebreak 
        $\psi_0(r) = \sum_{i=1}^4 B_i\delta(r-r_i) - \delta(r - r_5)$.}
		\label{fig:fc five delta}
	\end{subfigure}
	\caption{}
	\label{fig:fc delta sols}	
\end{figure}

Similarly to the semi-compactified representation, we may want $\psi_k$ to vanish at $r=1$. 
Hence, we also need to satisfy \cref{eq:props_2}. 
For this, we require a linear combination of three $\delta$-distributions.  
Here, we take one of the form
\begin{equation} \label{eq:fc three delta}
	\psi_0(r) = \delta(r - r_1) - A \delta(r - r_2) + B \delta(r-r_3),
\end{equation} 
with $r_i \in (0,1)$, $r_i \neq r_j$, and $A, B>0$.
From \cref{eq:props_1,eq:props_2}, we obtain a system of equations for $A$ and $B$. 
Solving this in the case $r_1 = 1/4$, $r_2 = 1/2$, $r_3 = 3/4$ yields
\begin{equation*}
	A = 4\sqrt{2} \sin\left(\frac{\pi}{8}\right)^3, \quad B = \tan\left(\frac{\pi}{8}\right)^3.
\end{equation*}
With this choice of $\psi_0(r)$, we can solve for the other components $\psi_k$, in which case we obtain the initial data set shown in \cref{fig:fc three delta}.

Lastly, we can choose the compact support of our initial data sets to lie entirely within the interval $(0,1)$, and, hence, away from $I^-$ and $i^-$. For this, we take a linear combination of five $\delta$-distributions, like 
\begin{equation} \label{eq:fc five delta}
    \psi_0(r) = \sum_{i=1}^4 B_i\delta(r-r_i) - \delta(r - r_5),
\end{equation}
for each $r_i \neq r_j$ for $i\neq j$, impose \cref{eq:props_1,eq:props_2,eq:props_3,eq:props_4}, and solve for each $B_i$. If we choose $r_1=1/4$, $r_2 = 1/3$, $r_3=1/2$, $r_4=2/3$, and $r_5 = 3/4$, we find that the appropriate choice of each $B_i$ is  
% \begin{align*}
%     B_1 & = -\frac{1}{\sqrt{2}} \cos^6\left ( \frac{\pi}{8}\right ),
%     &
%     B_2 & = \frac{3}{4} \cos^3\left ( \frac{\pi}{8}\right ),
%     \\
%     B_3 & = -\frac{1}{2\sqrt{2}} \cos^3\left ( \frac{\pi}{8}\right ),
%     &
%     B_4 & = \frac{1}{4\sqrt{3}} \cos^3\left ( \frac{\pi}{8}\right ).
% \end{align*}
\begin{align*}
    B_1 & = -16 \sqrt 2 \cos^6\left ( \frac{\pi}{8}\right ),
    &
    B_2 & = 24 \cos^3\left ( \frac{\pi}{8}\right ),
    \\
    B_3 & = - 8 \sqrt 2 \cos^3\left ( \frac{\pi}{8}\right ),
    &
    B_4 & = \frac{8}{\sqrt 3} \cos^3\left ( \frac{\pi}{8}\right ).
\end{align*}
A plot of this solution can be seen in \cref{fig:fc five delta}.

\subsection{Initial data sets satisfying regularity conditions: (2) piecewise polynomials} \label{sec:reg conds2}

In this subsection, we outline a procedure for constructing $\psi_0$ in the semi-compactified representation such that the resulting $\psi_k$ have compact support bounded away from spacelike infinity, with the discussion again restricted to the case $l=2$, $m=0$. An identical procedure can be carried out for the fully-compactified representation, since the number of conditions is the same. We require $\psi_0$ to obey the four integral conditions in Eq.~\eqref{eq:compact_sol_3}. Randomly choosing $\psi_0$ such that Eq.~\eqref{eq:compact_sol_3} holds is highly unrealistic. Instead, we propose the following approach, which naturally extends the construction considered in the previous subsection.

First, pick five functions, say $\zeta_j$ for $1\le j \le 5$, with compact support $[b_j,a_j]$ such that $0<b_j<a_j$. Then, we write $\psi_0$ as a linear combination of these five functions. More precisely,
\begin{equation}
\psi_0 = \sum_{j=1}^4 A_j \zeta_j - \zeta_5,
\end{equation}
for arbitrary $A_j$. Imposing Eq.~\eqref{eq:compact_sol_3} results in the following four equations for the four unknowns $A_1,\dots,A_4$,
\begin{align}\label{eq:matrix_1}
\sum_{j=1}^4 A_j \zeta_{ij} = \zeta_{i5},
\end{align}
where we define 
\begin{align}\label{eq:matrix_2}
\zeta_{ij} = \int_{b_j}^{a_j} \frac{\zeta_j}{\rho^i}\dl \rho. 
\end{align}
Then, if the matrix $(\zeta_{ij})_{1\le i,j\le4}$ is invertible, the coefficients $A_j$ can be determined exactly so that Eq.~\eqref{eq:matrix_1}, and therefore Eq.~\eqref{eq:compact_sol_3}, hold.

We now illustrate a simple example where we can calculate an exact set of initial data functions. For this, let us consider the functions
\begin{align}
\zeta_j (\hr) & =  \frac{\hr^3}{B_j}  \left(  \hr - j-2\right)^2\left( \hr-j\right)^2 \label{psi_test}
\\
& = \frac{\hr^3}{B_j}\left( \hr^4 -4(j+1)\hr^3+2[3(j+1)^2-1]\hr^2 \right. \notag
\\
& \qquad\qquad \left. -\, 4j(j+1)(j+2)\hr + j^2(j+1)\right),\notag
\end{align}
for $j\le \hr\le j+2$ and zero elsewhere, where we choose $(B_1, B_2, B_3, B_4, B_5) = (5,15,30,50,100)$ to give each $\zeta_j$ approximately the same maximum value. A plot of each $\zeta_j$ is shown in \cref{example_plot}.
\begin{figure}
\centering
\begin{tikzpicture}
\begin{axis}
[
xtick = {0,2,4,6,8},
xlabel = {$\hr$},
ylabel = {$\zeta_j$},
ytick = {0,1,2,3},
axis lines = left,
xmin = 0,
xmax = 8,
ymin = 0,
ymax = 3,
x = 0.7cm,
y = 1.4cm,
legend pos = outer north east
]
\addplot[
samples= 2,
color=black,
domain=0:1,
line width = 1pt,
forget plot,
]
{0};
\addplot[
samples= 2,
color=black,
domain=3:8,
line width = 1pt,
forget plot,
]
{0};
\addplot[
samples= 100,
color=black,
domain=1:3,
line width = 1pt,
]
{x^3*(x-1)^2*(x-3)^2/5};
\addlegendentry{$\zeta_1$};
\addplot[
samples= 2,
color=cyan,
domain=0:2,
line width = 1pt,
forget plot,
]
{0};
\addplot[
samples= 2,
color=cyan,
domain=4:8,
line width = 1pt,
forget plot,
]
{0};
\addplot[
samples= 100,
color=cyan,
domain=2:4,
line width = 1pt,
]
{x^3*(x-2)^2*(x-4)^2/(15)};
\addlegendentry{$\zeta_2$};
\addplot[
samples= 2,
color=magenta,
domain=0:3,
line width = 1pt,
forget plot,
]
{0};
\addplot[
samples= 2,
color=magenta,
domain=5:8,
line width = 1pt,
forget plot,
]
{0};
\addplot[
samples= 100,
color=magenta,
domain=3:5,
line width = 1pt,
]
{x^3*(x-3)^2*(x-5)^2/(30)};
\addlegendentry{$\zeta_3$};
\addplot[
samples= 2,
color=violet,
domain=0:4,
line width = 1pt,
forget plot,
]
{0};
\addplot[
samples= 2,
color=violet,
domain=6:8,
line width = 1pt,
forget plot,
]
{0};
\addplot[
samples= 100,
color=violet,
domain=4:6,
line width = 1pt,
]
{x^3*(x-4)^2*(x-6)^2/(50)};
\addlegendentry{$\zeta_4$};
\addplot[
samples= 2,
color=orange,
domain=0:5,
line width = 1pt,
forget plot,
]
{0};
\addplot[
samples= 2,
color=orange,
domain=7:8,
line width = 1pt,
forget plot,
]
{0};
\addplot[
samples= 100,
color=orange,
domain=5:7,
line width = 1pt,
]
{x^3*(x-5)^2*(x-7)^2/(100)};
\addlegendentry{$\zeta_5$};
\end{axis}
\end{tikzpicture}
\caption{A plot of each $\zeta_j$ given in Eq.~\eqref{psi_test}.}\label{example_plot}
\end{figure}

Following the procedure outlined, we find that the appropriate choice of our coefficients $A_j$ is $(A_1, A_2, A_3, A_4) =(-1/20, 3/5, -9/5, 2)$. The resulting non-zero parts of $\tilde\psi_0$ in the horizontal gauge for the $l= 2$ mode are given in the Appendix, see Eq.~\eqref{eq:psi_0_example}. However, as will be mentioned later, we prefer to run numerical simulations in the diagonal gauge, since the sets where hyperbolicity is lost in the physically relevant computational domain are $I^\pm$ (unlike the horizontal gauge, where it is lost on all of $\scri^\pm$). Using the relation given in Eq.~\eqref{eq:psitrafo} and noting that $\scrim$ in one gauge corresponds to $\scrim$ in another gauge, we can calculate the non-zero part of our ingoing wave, $\psi_0$, to be Eq.~\eqref{eq:psi_0_example_2} with the corresponding $\psi_k$ in Eqs.~\eqref{eq:psi_1_example}--\eqref{eq:psi_4_example}. The accompanying plots of each $\psi_k$ are given in Fig.~\ref{psi_k_plots}.

\begin{figure}
\vspace{0.5cm}
\centering
\includegraphics[width = 0.45\textwidth]{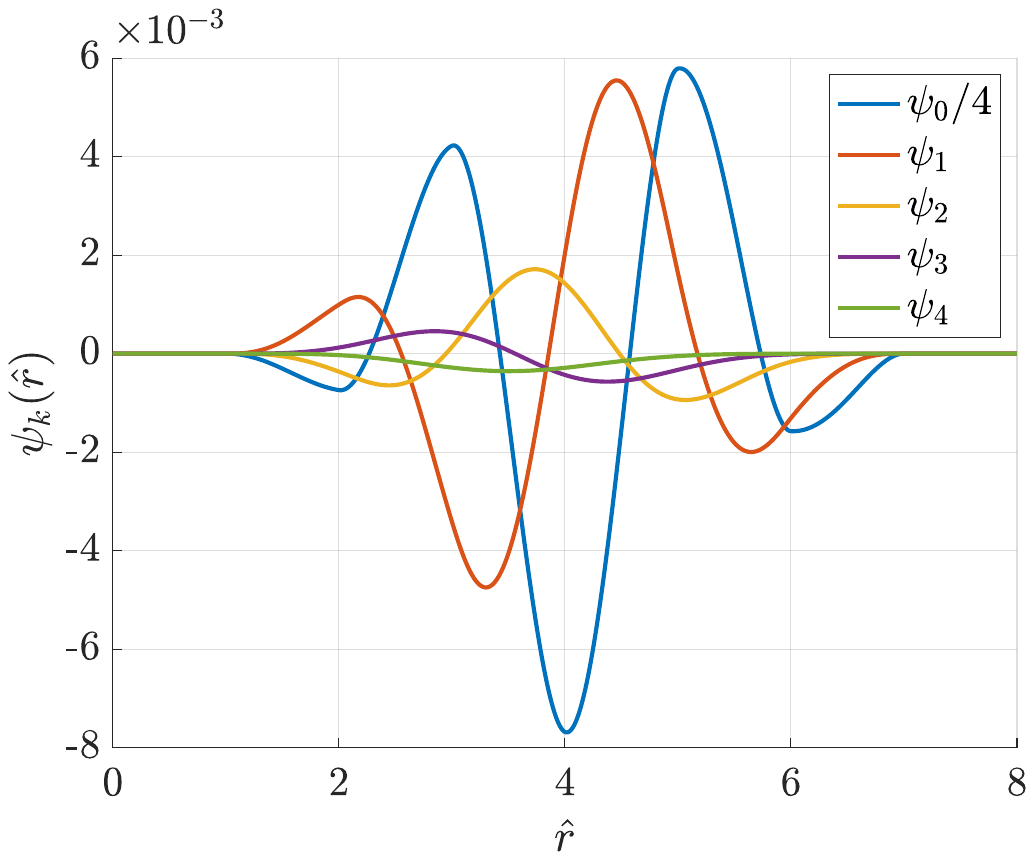}
\caption{Plot of all $\psi_k$ in the diagonal semi-compactified representation, constructed so that they have compact support bounded away from the cylinder. Note that $\psi_0$ has been scaled down so all $\psi_k$ are discernible.} \label{psi_k_plots}
\end{figure}
Obviously, there are some caveats to this method. Firstly, it is generally not possible to obtain an exact form for $\psi_0$, as precise values for all $\zeta_{ij}$ in Eq.~\eqref{eq:matrix_2} or the fully-compactified equivalent are highly unlikely. And even in the simplest cases, like the one we have given, it is also impractical and time-consuming to calculate $\psi_0$ by hand (although computer algebra programmes can manage this problem). Thus, representing more complicated behaviour generally requires these calculations to be carried out numerically. Secondly, precise control over the final form of $\psi_0$ is only achieved through prescribing some of its characteristic features, rather than specifying it exactly. Nevertheless, the method relies only on elementary linear algebra and is therefore straightforward to implement numerically.

\subsection{Numerical implementation}\label{initial_data_numeric}
In the preceding subsections, we found how to determine the initial data in integral form. 
However, in general, these integrals will not be elementary. 
Additionally, we would like to develop methods that can likely be generalised to the nonlinear case. As the principal parts of the Bianchi equation's components are naturally linear, the generalisation to the nonlinear case only introduces source terms involving connection components. With this generalisation in mind, we now turn to numerics to implement the construction of initial data.

So far, we have written the equations on $\scrim$ as a system of ODEs. This is because we introduced a basis of SWSH, which transformed the system from one of PDEs to a set of decoupled ODEs for each harmonic. This will not carry over to the nonlinear case, as modes will mix with each other. In our numerical implementation, we therefore no longer expand our functions in the SWSH basis, but instead write the equations in terms of the $\hat{\eth}'$ operator---the operator associated with the unit 2-sphere.

We use the Python package COFFEE, which provides the required numerical methods, including a numerical implementation of the $\hat{\eth}$ and $\hat{\eth}'$ operators using a spectral method based on SWSHs \cite{doulis2019, beyer2016}. We employ the $\hat{\eth}'$ operator numerically (referring the reader to \cite{beyer2016} for a discussion of how this is done) treating the equations as a system of evolution equations in the $r$-direction for functions on the spheres. Throughout, we assume axisymmetry and hence neglect the $\phi$ coordinate. Consequently, all SWSHs with $m \neq 0$ vanish.

The $r$ and $\theta$ coordinates are discretised to form a grid of equidistant points $(r_m, \theta_n)$, where $m = 1, \dots, M$ and $n = 1, \dots, N$.
We ensure that the $r$-steps are not too large relative to the size of the $\theta$ grid by fixing the factor between them, called the CFL; that is, 
\begin{equation*}
	 \frac{\Delta r}{\Delta \theta} = \text{CFL} \leq 0.2.
\end{equation*}

If we omit the step of expanding to a single harmonic before restricting the equations to $\scrim$, the equations take the form
\begin{equation} \label{eq:sc for numerics}
	\partial_{\hr} \psi_k  = \frac{k-2}{\hr} \psi_k - \frac{1}{\hr}\hat{\eth}' \psi_{k-1} = f_k(\hr, \psi_k, \psi_{k-1} )
\end{equation}
in the semi-compactified horizontal representation, and
\begin{equation} \label{eq:fc for numerics}
\begin{split}
	\partial_r \psi_k &= - \frac{\pi(3 + (7 - 2k) \cos(\pi r))}{2 \sin(\pi r)} \psi_k + \frac{\pi}{\sin(\pi r)} \hat{\eth}'\psi_{k-1}
    \\
    &= f_k(r, \psi_k, \psi_{k-1})
\end{split}
\end{equation}
in the fully-compactified representation.
On a given $r$-slice, we usually approximate the $\partial_r \psi_k$ derivatives using the classical fourth-order Runge--Kutta (RK4) method, except at singular points as described in the next section.

\subsubsection{Avoiding singular points}
At past timelike infinity $i^-$ and the bottom of the cylinder $I^-$, the GCFE become singular. 
This means the equations on $\scrim$ in the \SCG\ are singular at $\hat r = 0$, and in the \FCG, they are singular at $r=0$ and $r=1$. 
This leads to division by zero in the numerical approximation of the derivative when using RK4. The issue is resolved by not evaluating the equations at these points.

To this end, when beginning the numerical evolution at $i^-$, we use the implicit Euler method for the first $r$-step and the explicit Euler method for the final $r$-step. The explicit Euler method does not require evaluating the solution at the next step, thereby avoiding singular evaluations when the subsequent $r$-value corresponds to a singular point. Similarly, the implicit Euler method does not require evaluating the solution at the current $r$-value, so beginning with an implicit Euler step avoids evaluating the solution at $i^-$. Since these are only first-order methods, the overall rate of convergence is reduced.
Usually, performing an implicit Euler step along the $r$-coordinate to compute $(\psi_k)_{n+1} \approx \psi_k(r_{n+1}, \theta)$ would require a Newton--Raphson iteration or similar method to solve the implicit equation
\begin{equation*}
    (\psi_k)_{n+1} = (\psi_k)_n + \Delta r\, f_k(r_{n+1}, (\psi_k)_{n+1}, (\psi_{k-1})_{n+1}),
\end{equation*}
where $f_k$ denotes the right-hand side of the differential equation for $\psi_k$.
Determining $(\psi_k)_{n+1}$ from this equation would generally require an iterative procedure. However, in the \FCG, only a single implicit Euler step from $r_0 = 0$ is necessary, and we assume that $\psi_0(0,\theta)=0$, i.e.\ we choose a $\psi_0$ that vanishes at $i^-$. In this case, the implicit equation can be solved directly for $\psi_k(r_1,\theta)$, yielding
\begin{equation}
	\psi_k(r_1, \theta ) \approx \frac{\pi r_1 \eth' \psi_{k-1}(r_1, \theta)}{\sin(\pi r_1) + \frac{\pi}{2} r_1 \bigl(3 + (7-2k)\cos(\pi r_1)\bigr)}.
\end{equation}

\subsubsection{Convergence tests}
In numerical analysis, a common way to test the correctness of a numerical method is through convergence tests. Exact solutions are particularly useful for this purpose. Fortunately, such solutions are straightforward to construct: We may simply prescribe $\psi_4$ and then use differentiation together with the system of ODEs to obtain the remaining $\psi_k$, including $\psi_0$.

We begin with the semi-compactified representation. For the exact solution, we choose
\begin{equation}
	\psi_4(\hr) = \begin{cases}
		\hr^2 (\hr-1)^6 & \text{if $0 < \hr < 1$,}
		\\
		0 & \text{otherwise,}
	\end{cases}
\end{equation}
which is a simple bump function, with appropriate fall-off at each end so that the other $\psi_k$ will all be regular.
This yields the initial data set
\begin{equation}
	\begin{split}
		\psi_{0}(\hr) &= 3 (\hr-1)^2 \hr^3 \left(42 \hr^3-56 \hr^2+21 \hr-2\right),
		\\
		\psi_{1}(\hr) &= -(\hr-1)^3 \hr^3 \left(28 \hr^2-21 \hr+3\right),
		\\
		\psi_{2}(\hr) &= \sqrt{\frac{3}{2}} (\hr-1)^4 \hr^3 (7 \hr-2),
		\\
		\psi_{3}(\hr) &= -3 (\hr-1)^5 \hr^3,
		\\
		\psi_{4}(\hr) &= (\hr-1)^6 \hr^2,
	\end{split}
\end{equation}
on $(0, 1)$, and $0$ elsewhere. These are the solutions for the $l=2,m=0$ mode. We now change our notation so that $\psi_k(\hr, \theta) = \psi_k(\hr) \cdot \swsh{2\!}{20}(\theta)$, in order to remove ourselves from the SWSH basis.

The system was solved `backwards' relative to the $\hr$ coordinate, beginning at $\hr = 1$ and evolving to $\hr = 0$.
As we are working in coordinates that have been inverted, $\hr = 0$ corresponds to $I^-$, which is in the future of every point on $\scri^-$, so taking steps in the negative direction with respect to the $\hr$ coordinate is still evolving towards the future in this gauge.
This has the additional advantage of avoiding the need to switch to a first-order method until the end of the evolution. Moreover, all the solutions we work with here vanish at $\hr=1$, but are not necessarily zero at the cylinder, so it is more natural to start from the end of the interval where the value of $\psi_k$ is known.
Note that restricting attention to solutions supported on $[0,1)$ rather than $[0,a)$ for some $a>0$ is arbitrary, since the underlying system of ODEs is invariant under the rescaling $\hr \mapsto a\,\hr$.

As expected from linear theory, our numerical experiments show that, for each component $\psi_k$, the $l=2$ mode is the only mode that is excited. Owing to numerical error, some other modes become non-zero, but all remain small---around $10^{-12}$.
Because of this, to visualise the solution and compute the error and convergence rate, we simply restricted to the $l=2$ mode of the computed solution.
A plot of this solution is shown in \cref{fig:sc exact}.
To test convergence to the exact solution, the resolution was successively increased by repeatedly halving the CFL parameter.
While it would be more usual to decrease the resolution in both directions, the angular direction is treated using a spectral method, and $N=32$ is already sufficiently accurate. Moreover, the same convergence rate was observed when the angular resolution was also increased.
The resulting error for $\psi_4$ is shown in \cref{fig:sc error}.

\begin{figure}[htbp]
	\centering
	\begin{subfigure}[t]{0.44\textwidth}
		\centering
		\includegraphics[scale=0.55]{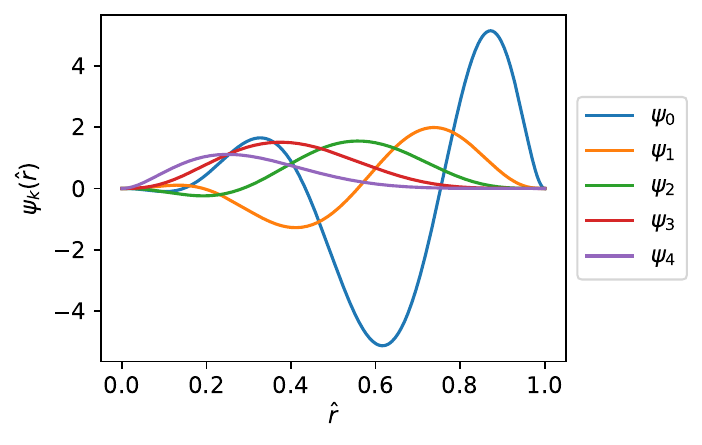}
		\caption{The $l=2$ mode of the exact solution.}
		\label{fig:sc exact}
	\end{subfigure}
	\hfill
	\begin{subfigure}[t]{0.54\textwidth}
		\centering
		\includegraphics[scale=0.55]{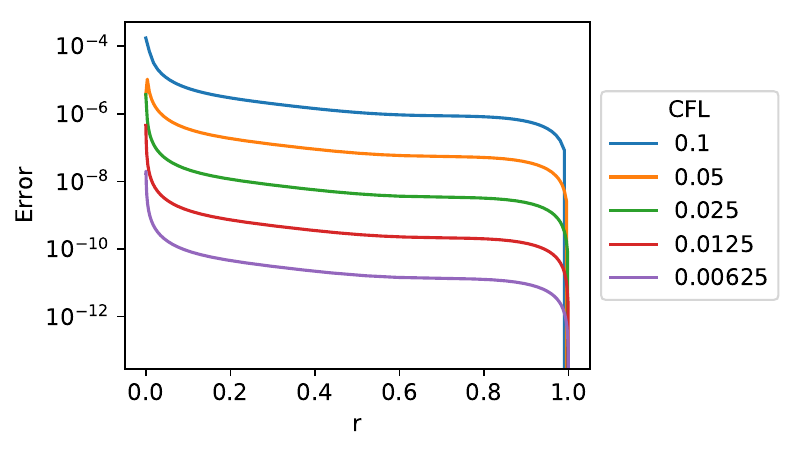}
		\caption{Error in $\psi_4$ as the resolution is increased.}
		\label{fig:sc error}
	\end{subfigure}
	\caption{Exact solution and convergence plot for the semi-compactified gauge.}
\end{figure}

For a single increase in resolution, the convergence rate was computed as 
\begin{equation}
	\log_2\!\left(\frac{\|E_N\|_2}{N}\right)
   -\log_2\!\left(\frac{\|E_{2N}\|_2}{2N}\right),
\end{equation}
where $\|\cdot\|_2$ is the $\ltwo$-norm and $E_N$ is the error when using $N$ points.
The results are shown in \cref{tab:sc convergence}. 
\begin{table}[htbp]
	\centering
	\caption{Error and convergence rate to the exact solution in the semi-compactified gauge.}
	\label{tab:sc convergence}
	$\begin{NiceArray}{lcccc}
		\hline
		\Block{2-1}{\text{CFL}} & \Block{1-2}{\psi_1} & &  \Block{1-2}{\psi_4} &
		\\
		 & \text{Error} & \text{Convergence}& \text{Error} & \text{Convergence}
		\\
		\hline
		0.1		& -17.66 & 		& -13.80 & 		\\
		0.05	& -21.92 & 4.26 & -16.67 & 2.87 \\
		0.025	& -25.09 & 3.17 & -18.93 & 2.26 \\
		0.0125	& -29.20 & 4.10 & -22.26 & 3.33 \\
		0.00625	& -33.43 & 4.23 & -25.32 & 3.06 \\
		\hline
	\end{NiceArray}$
\end{table}
Notably, the convergence rate is not very consistent as the resolution is increased.
This is due to the region near $\hr = 0$, where the $1/\hr$ factor in \cref{eq:sc for numerics} becomes very large, leading to a rapid accumulation of numerical errors. This effect is clearly visible in \cref{fig:sc error}. If points near $\hr = 0$ are excluded, the convergence rate improves significantly. This is illustrated in \cref{tab:sc convergence 2}, where points in the interval $[0,0.1)$ are omitted.

\begin{table}[htbp]
	\centering
	\caption{Error and convergence rate to the exact solution on the interval $(0.1, 1)$ in the semi-compactified gauge.}
	\label{tab:sc convergence 2}
	$\begin{NiceArray}{lcccc}
		\hline
		\Block{2-1}{\text{CFL}} & \Block{1-2}{\psi_1} & &  \Block{1-2}{\psi_4} &
		\\
		 & \text{Error} & \text{Convergence}& \text{Error} & \text{Convergence}
		\\
		\hline
		0.1		& -19.40 & 		& -19.81 & 		\\
		0.05	& -23.38 & 3.98 & -23.80 & 3.98 \\
		0.025	& -27.37 & 3.99 & -27.79 & 3.99 \\
		0.0125	& -31.38 & 4.01 & -31.80 & 3.99 \\
		0.00625	& -35.38 & 4.00 & -35.80 & 3.99 \\
		\hline
	\end{NiceArray}$
\end{table}

Similarly to our discussion of the semi-compactified gauge, we next test convergence of our methods in the fully-compactified representation. Here, we choose an exact solution of the form $\psi_k(r, \theta) = \psi_k(r)\, \swsh{2-k}{20}(\theta)$, setting $\psi_0(r, \theta) = \psi_0(r)\, \swsh{2}{20}(\theta)$, and then test that the other components $\psi_k$ converge at the correct rate.
Once again, the easiest way to generate a non-trivial exact data set is to prescribe $\psi_4$.   
Choosing $\psi_4(r) = \sin^2(\pi r)$, yields the following set of initial data:
\begin{equation}
	\begin{split}
		\psi_0(r) &= \frac{15}{64} \sin ^2\left(\frac{\pi  r}{2}\right) \cos ^4\left(\frac{\pi  r}{2}\right)
        \\
        & \quad (115 \cos (\pi  r)+38 \cos (2 \pi  r)+77 \cos (3 \pi  r)+26),
		\\
		\psi_1(r) &= \frac{1}{8} \sin ^2\left(\frac{\pi  r}{2}\right) \cos ^4\left(\frac{\pi  r}{2}\right)
        \\
        &\quad (32 \cos (\pi  r)+77 \cos (2 \pi  r)+51),
		\\
		\psi_2(r) &= \sqrt{\frac{3}{2}} \sin^2\left(\frac{\pi  r}{2}\right) \cos ^4\left(\frac{\pi  r}{2}\right) (7 \cos (\pi  r)+1),
		\\
		\psi_3(r) &= \frac{3}{4} \sin ^2(\pi  r) (\cos (\pi  r)+1),
		\\
		\psi_4(r) &= \sin ^2(\pi  r).
	\end{split}
\end{equation}
We then set $\psi_0(r, \theta) = \psi_0(r)\ \swsh{2\!}{20}(\theta)$ for the above choice of $\psi_0$, and found the other $\psi_k$ numerically, evolving from $r = 0$ to $r = 1$.
As in the semi-compactified representation, for each component $\psi_k$, only the $l=2$ mode became significantly excited, while some other modes became non-zero, but stayed below about $10^{-12}$.
Consequently, our plots and error calculations are again restricted to the $l = 2$ mode. 

\cref{fig:test soln} shows the $l=2$ modes for a CFL of $0.2$ and an angular resolution of $N = 32$. Comparing with the exact solution given above, we compute how the error in the numerical approximation of $\psi_4$ changes as the resolution in the $r$-direction is successively doubled. This results in the convergence plot shown in \cref{fig:fc convergence}.
The convergence rate turns out to be between second and third order; see \cref{tab:fc convergence}.
\begin{figure}[htbp]
	\centering
	\begin{subfigure}[t]{0.44\textwidth}
		\centering
		\includegraphics[scale=0.55]{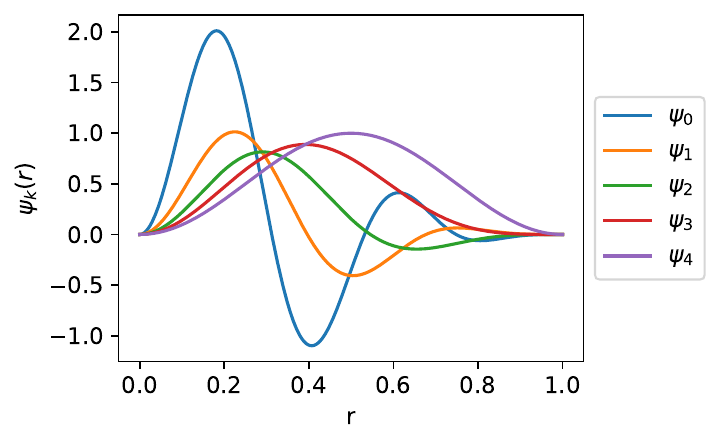}
		\caption{The $l=2$ mode for an exact solution in the fully-compactified gauge, found numerically with a CFL of $0.2$.}
		\label{fig:test soln}
	\end{subfigure}
	\begin{subfigure}[t]{0.54\textwidth}
		\centering
		\includegraphics[scale=0.55]{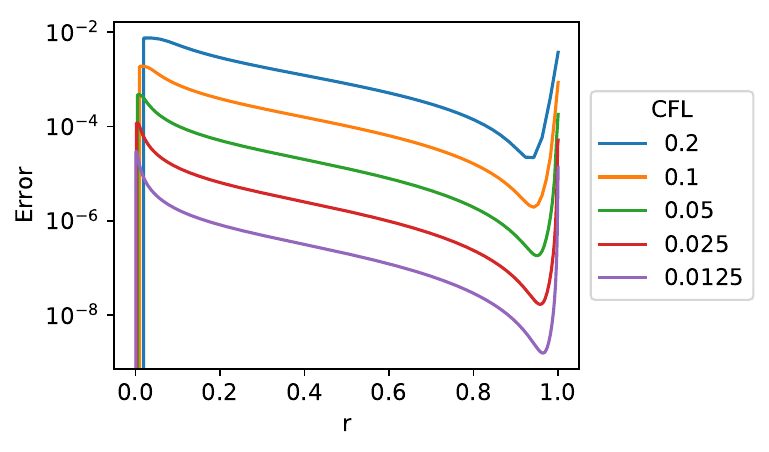}
		\caption{Absolute error in $\psi_4$ computed for different CFL values.}
		\label{fig:fc convergence}
	\end{subfigure}
	\caption{Exact solution and convergence plot for the fully-compactified gauge.}
\end{figure}
\begin{table}[htbp]
	\centering
	\caption{Error and convergence rate to the exact solution in the fully-compactified gauge.}
	\label{tab:fc convergence}
	$\begin{NiceArray}{lcccc}
		\hline
		\Block{2-1}{\text{CFL}} & \Block{1-2}{\psi_1} & &  \Block{1-2}{\psi_4} &
		\\
		 & \text{Error} & \text{Convergence}& \text{Error} & \text{Convergence}
		\\
		\hline
		0.2   	& -11.66 & 		& -9.21  & 		\\
		0.1   	& -14.51 & 2.85 & -11.84 & 2.63 \\
		0.05  	& -17.45 & 2.94 & -14.55 & 2.70 \\
		0.025 	& -20.42 & 2.97 & -17.28 & 2.73 \\
		0.0125	& -23.40 & 2.98 & -20.08 & 2.80 \\
		\hline
	\end{NiceArray}$
\end{table}
This is likely due to the implicit Euler step at $r = 0$. 
Using the same exact solution, but beginning at $r = 0.5$, we can avoid the implicit Euler step.
In this case, we obtain much closer to fourth-order convergence, except near $r = 1$; this can be seen in \cref{fig:fc convergence 2} and \cref{tab:fc convergence 2}.
\begin{figure}[htb]
	\centering
	\includegraphics[scale=0.55]{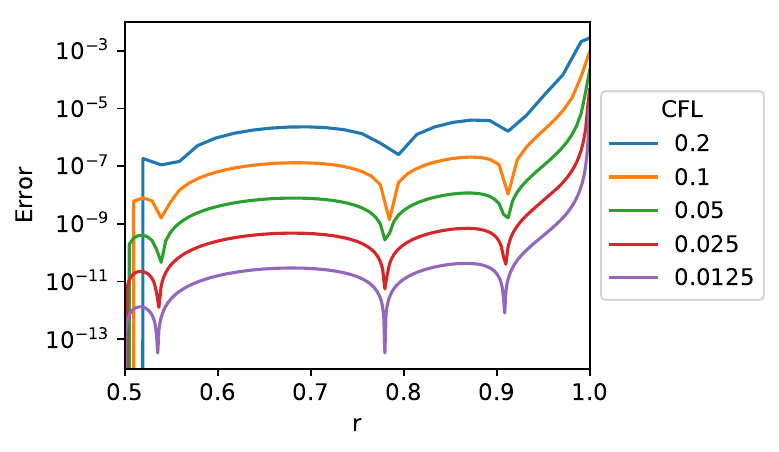}
	\caption{Error in $\psi_4$ when evolving an exact solution in the fully-compactified gauge beginning at $r = 0.5$.}
	\label{fig:fc convergence 2}
\end{figure}

\begin{table}[htb]
	\centering
	\caption{Error and convergence rate to the exact solution in the fully-compactified gauge when beginning evolution from $r = 0.5$ and ignoring points after $r = 0.9$.}
	\label{tab:fc convergence 2}
	$\begin{NiceArray}{lcccc}
		\hline
		\Block{2-1}{\text{CFL}} & \Block{1-2}{\psi_1} & &  \Block{1-2}{\psi_4} &
		\\
		 & \text{Error} & \text{Convergence}& \text{Error} & \text{Convergence}
		\\
		\hline
		0.2   	& -20.31 & 		& -18.38 & 		\\
		0.1   	& -24.40 & 4.09 & -23.06 & 4.68 \\
		0.05  	& -28.37 & 3.97 & -27.00 & 3.93 \\
		0.025 	& -32.35 & 3.98 & -30.97 & 3.97 \\
		0.0125	& -36.34 & 3.99 & -34.95 & 3.99 \\
		\hline
	\end{NiceArray}$
\end{table}

% CFL: 0.2        psi1 error: -20.31
% CFL: 0.1        psi1 error: -24.40 convergence = 4.09
% CFL: 0.05       psi1 error: -28.37 convergence = 3.97
% CFL: 0.025      psi1 error: -32.35 convergence = 3.98
% CFL: 0.0125     psi1 error: -36.34 convergence = 3.99

% CFL: 0.2        psi4 error: -18.38
% CFL: 0.1        psi4 error: -23.06 convergence = 4.68
% CFL: 0.05       psi4 error: -27.00 convergence = 3.93
% CFL: 0.025      psi4 error: -30.97 convergence = 3.97
% CFL: 0.0125     psi4 error: -34.95 convergence = 3.99

\subsubsection{Numerically generating initial data sets}
Of course, there is little point in a numerical implementation if it does not allow us to generate initial data which could not be found analytically.
We now turn our attention to the generation of initial data sets through choosing the ingoing radiation field $\psi_0$ on $\scrim$, and then solving numerically for the other components $\psi_k$. 

We begin with the semi-compactified representation. 
\cref{thm:compact_sol} implies that we cannot use a simple bump function for $\psi_0$; however, a combination of bump functions may work.
Inspired by the discussion in Sec.~\ref{sec:reg conds} about solutions involving combinations of Dirac delta distributions such as \cref{eq:sc double delta}, we may choose $\psi_0$ to be a combination of bump functions. 

We use the bump function
\begin{equation}
	b(\hr) = 
	\begin{cases}
		\sin \! \left(\frac{\pi \hr}{\Delta \hr}\right)^8 & \text{if $0 \leq \hr \leq \Delta \hr$},
		\\
		0 & \text{otherwise.}
	\end{cases},
\end{equation}
with $\Delta \hr = 1/8$. 
Note that this function has compact support $(0, \Delta \hr)$, and is $8$ times differentiable. 
A regular initial data set analogous to the choice in \cref{eq:sc double delta} for the semi-compactified representation may be generated by simply replacing the $\delta$-distributions with bump functions; for example, 
\begin{equation} \label{eq:sc two bump}
	\psi_0(\hr) = b \! \left(\hr - \frac 13\right) - b \! \left(\hr- \frac 23\right).
\end{equation}
% Finally, recalling that $\psi_4$ will contain an $r^2$ term but no $r^3$ term or higher in its expansion at $r=0$, we will choose an $l=3$ mode rather than $l=2$ to ensure that \cref{lem:reg} is satisfied.      
If we set $\psi_0(\hr, \theta) = \psi_0(\hr)\ \swsh{2\!}{20}(\theta)$ on $\scrim$, and evolve numerically with an angular resolution of $N = 32$ and a CFL of $0.05$, which yields a spatial resolution of $M = 204$, we obtain the initial data set shown in \cref{fig:sc two bump}.

If we choose $\psi_0$ that satisfies \cref{eq:compact_sol_2} as well, then we can obtain an initial data set that vanishes at the cylinder. 
Analogously to \cref{eq:sc three delta}, we may choose
\begin{equation} \label{eq:sc three bump}
	\psi_0(\hr) = b \!\left(\hr-\frac{1}{4}\right) - A \, b \! \left(\hr-\frac{1}{2}\right) + (A-1) \, b \! \left(\hr - \frac{3}{4}\right),
\end{equation}
which clearly satisfies \cref{eq:compact_sol_1}; that is, its integral is zero over the interval $(0,1)$. 
We then use \cref{eq:compact_sol_2} to find the value of $A$ which yields initial data that vanishes at the cylinder.  
It turns out that we need to choose $A \approx 1.3837878553899127$.
Using \cref{eq:sc three bump} with this value of $A$ and the same resolution as before yields the initial data set shown in \cref{fig:sc three bump}.

Finally, we may obtain an initial data set that has compact support by choosing $\psi_0(\hr)$ to be a combination of bumps that satisfies \cref{eq:compact_sol_3}. 
Analogously to \cref{eq:sc five delta}, we set 
\begin{equation} \label{eq:sc five bump}
    \psi_0(\hr) = \sum_{i=1}^4 B_i b(\hr - r_i) - b(\hr-r_5),
\end{equation}
where $r_1 = 1/4, r_2 = 1/3, r_3 = 1/2, r_4 = 2/3, r_5 = 3/4$, and the coefficients $B_i$ are chosen by solving the linear system imposed by \cref{eq:compact_sol_3}.
Using \cref{eq:sc five bump} with these values of $B_i$ and the same resolution as before yields the initial data set shown in \cref{fig:sc five bump}.
\begin{figure}[htbp]
	\centering
	\begin{subfigure}{0.49\textwidth}
		\includegraphics[scale=0.55]{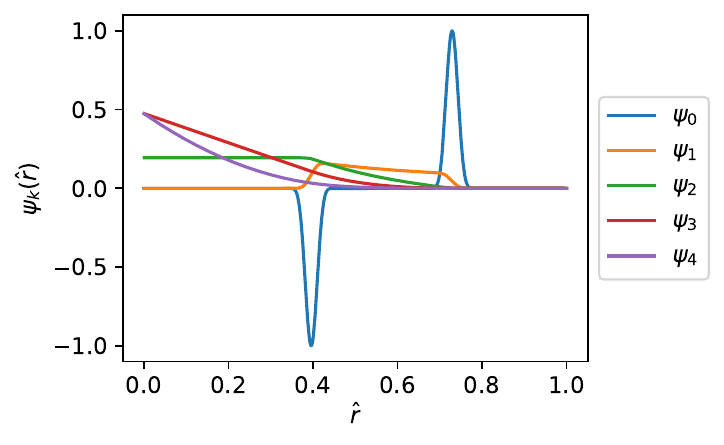}
		\caption{Initial data from choosing $\psi_0$ as in \cref{eq:sc two bump}.}
		\label{fig:sc two bump}
	\end{subfigure}
	\begin{subfigure}{0.49\textwidth}
		\includegraphics[scale=0.55]{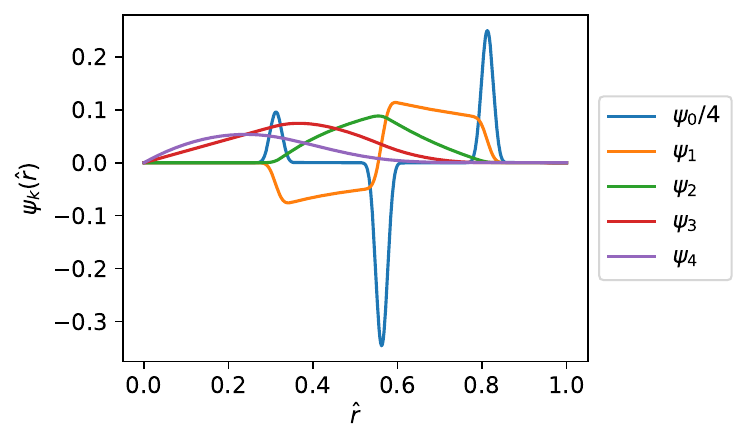}
		\caption{Initial data from choosing $\psi_0$ as in \cref{eq:sc three bump}.}
		\label{fig:sc three bump}
	\end{subfigure}
    \begin{subfigure}{0.49\textwidth}
		\includegraphics[scale=0.55]{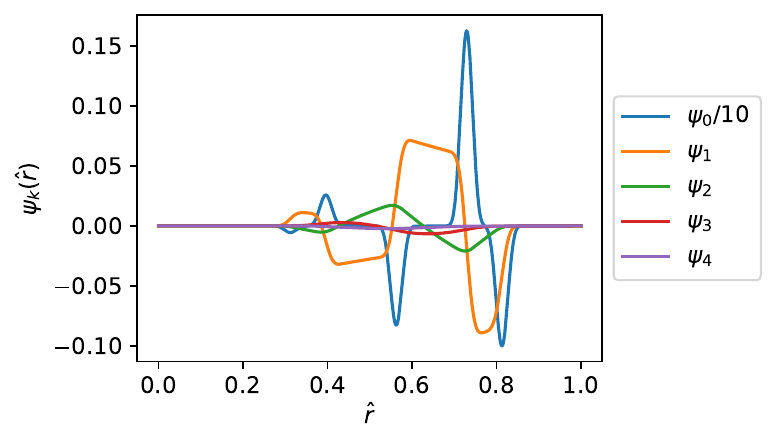}
		\caption{Initial data from choosing $\psi_0$ as in \cref{eq:sc five bump}.}
		\label{fig:sc five bump}
	\end{subfigure}
	\caption{Some numerically calculated initial data sets in the semi-compactified gauge.}
\end{figure}
It is interesting to note the similarity of \cref{fig:sc two bump,fig:sc three bump,fig:sc five bump} to \cref{fig:sc two delta,fig:sc three delta,fig:sc 5 delta}.

In the fully-compactified gauge, we can obtain full initial data sets through a similar method.
For example, a regular initial data set analogous to the choice of $\psi_0$ in \cref{eq:sc three delta} for the semi-compactified representation can be generated by choosing
\begin{equation} \label{eq:fc two bump}
	\psi_0(r) = b \! \left(r-\frac{1}{3}\right) - A \, b \! \left(r-\frac{2}{3}\right),
\end{equation}
where $A$ is fixed by the requirement that \cref{eq:props_1} holds.
We compute $A \approx 0.2620939449674576$.
If we set $\psi(r, \theta) = \psi_0(r)\ \swsh{2\!}{30}$ for this choice of $\psi_0$, with an angular resolution of $N = 32$ and a spatial resolution of $M = 204$, this yields the initial data set shown in \cref{fig:fc two bump}.

To obtain an initial data set that vanishes on the cylinder, we choose
\begin{equation} \label{eq:fc three bump}
	\psi_0(r) = b \!\left(r-\frac{1}{4}\right) - A \, b \! \left(r-\frac{1}{2}\right) + B \, b \! \left(r - \frac{3}{4}\right),
\end{equation}
and then use \cref{eq:props_1,eq:props_2} from \cref{thm:props_psi_k} to find the appropriate choices of $A$ and $B$.
Doing this, we find that $A \approx 0.37822387453047346$ and $B \approx 0.07945013784832826$.
Setting $\psi_0(r, \theta) = \psi_0(r)\ \swsh{2\!}{30}$ with this choice of $\psi _0$ and the same resolution as before yields the initial data set shown in \cref{fig:fc three bump}.

To obtain an initial data set with compact support, we may choose $\psi_0( r)$ to be a combination of bumps that satisfies \crefrange{eq:props_1}{eq:props_4}.
Analogously to \cref{eq:fc five delta}, we set 
\begin{equation} \label{eq:fc five bump}
    \psi_0(r) = \sum_{i=1}^4 B_i b(r - r_i) - b(r-r_5),
\end{equation}
where $r_1 = 1/4, r_2 = 1/3, r_3 = 1/2, r_4 = 2/3, r_5 = 3/4$, and the coefficients $B_i$ are chosen by solving the linear system imposed by \crefrange{eq:props_1}{eq:props_4}.
Using this choice of $\psi_0$ with the same resolution as before yields the initial data set shown in \cref{fig:fc five bump}.

As with the semi-compactified representation, we may note the similarity to the initial data sets seen in \cref{fig:fc two delta,fig:fc three delta,fig:fc five delta}.
\begin{figure}[htbp]
	\centering
	\begin{subfigure}{0.49\textwidth}
		\includegraphics[scale=0.55]{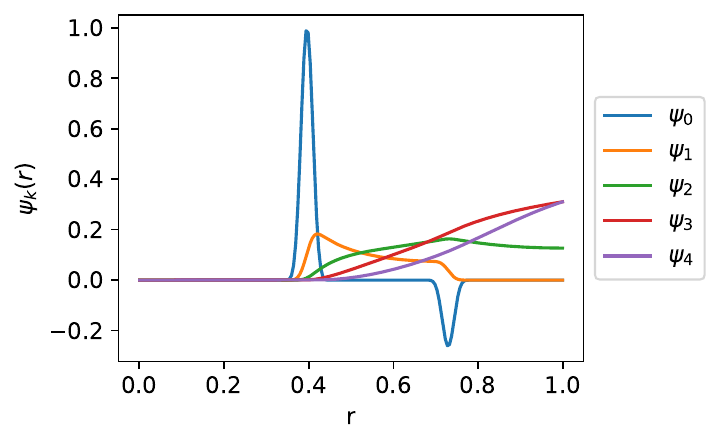}
		\caption{Initial data from choosing $\psi_0$ as in \cref{eq:fc two bump}.}
		\label{fig:fc two bump}
	\end{subfigure}
	\begin{subfigure}{0.49\textwidth}
		\includegraphics[scale=0.55]{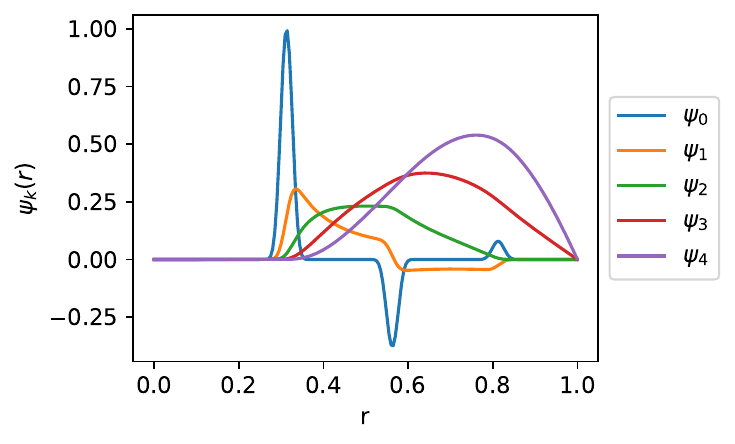}
		\caption{Initial data from choosing $\psi_0$ as in \cref{eq:fc three bump}.}
		\label{fig:fc three bump}
	\end{subfigure}
    \begin{subfigure}{0.49\textwidth}
		\includegraphics[scale=0.55]{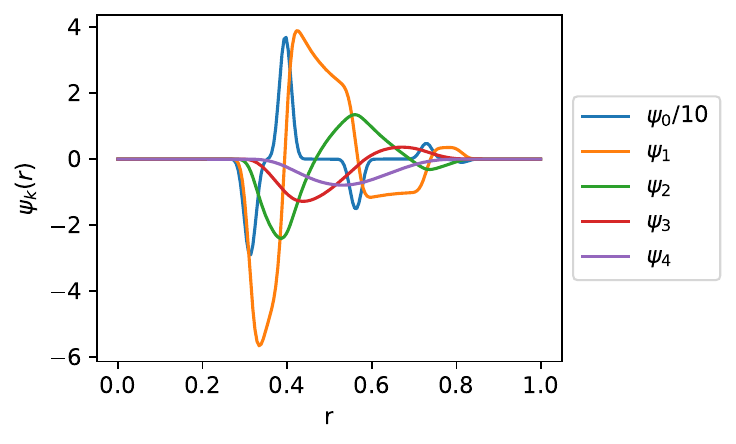}
		\caption{Initial data from choosing $\psi_0$ as in \cref{eq:fc five bump}.}
		\label{fig:fc five bump}
	\end{subfigure}
	\caption{Some numerically calculated initial data sets in the fully-compactified gauge.}
\end{figure}

Due to the singular nature of the equations at the critical points at $i^-$ and $I^-$, one might expect that for the initial data sets with compact support avoiding these points we might observe better numerical performance.
However, we did not find a difference in the number of gridpoints required or the convergence between the different choices of $\psi_0$ tested. Hence, we can conclude that our numerical method is robust enough to handle singular points in the initial data without significant loss of accuracy.

% \subsubsection{Initial data sets for global evolutions}
% The following describes the initial data sets used in the remainder of the paper, as initial data for a fully global evolution.
% These are all in the \SCG.
% We begin with initial data sets which are `trivial' in the sense that they are generated by choosing $\psi_4$ and differentiating, rather than picking $\psi_0$ and integrating. 
% These have the advantage of being nicely behaved, while still being nontrivial to evolve through the spacetime. 
% We then create a nontrivial initial data set by choosing bmup functions as described in the previous section.

\section{Evolving through the cylinder}\label{sec:IV}
Having generated initial data on $\mathscr{I}^-$, we now turn to evolving them through the cylinder and onto $\mathscr{I}^+$ to obtain the full solution in a manner akin to \cite{doulis2013second}. This presents several challenges.

In deriving our methods, we work with the \emph{semi-compactified representation}, as this avoids issues at both the physical origin and timelike infinity. We also expand the $\psi_k$ in the SWSH basis and restrict attention to the $l=2$, $m=0$ mode. In this case, the evolution equations can be written as
\begin{equation}
    \mathbf{A}^0\partial_t\mathbf{\psi} - \mathbf{A}^1\partial_r\mathbf{\psi} = \mathbf{B}\mathbf{\psi},
\end{equation}
where $\mathbf{\psi}$ is the vector of components of the spin-2 field, and $\mathbf{A}^0$, $\mathbf{A}^1$, and $\mathbf{B}$ are $5 \times 5$ matrices, with
\begin{equation}
    \mathbf{A}^0 = \begin{pmatrix}
        1+t\kappa' &0&0&0&0\\
        0&1&0&0&0\\
        0&0&1&0&0\\
        0&0&0&1&0\\
        0&0&0&0&1 - t\kappa'
    \end{pmatrix}.
\end{equation}
As all $A^\mu$ are Hermitian, the evolution equations will form a symmetric hyperbolic system when the matrix $\mathbf{A}^0$ is positive definite, that is, for $|t| < 1/\kappa'$. Outside of this region, the equations lose hyperbolicity, which is disastrous for our numerical methods. The region where hyperbolicity is lost includes the top and bottom of the cylinder at infinity and extends far into the region beyond $\mathscr{I}$, as shown in \cref{LossofHyperbolicity}.
\begin{figure}[htbp]
\centering

\begin{subfigure}{0.2\textwidth}
\centering
\begin{tikzpicture}[baseline,remember picture]

% Light red region outside (above) the parabola
\fill[red!20]
  plot[domain=2:4] (\x,{(4-\x)^2})  % along parabola
  -- (4,4)                          % up right side
  -- (2,4)                          % across top
  -- cycle;                         % back to start

% Original diagram
\draw (3,1)  -- node[anchor = west]{$\mathscr{I}^+$} (0,4);
\draw (3,0) -- (3,1);
\draw [dashed] (0,0) -- (3,0);
\filldraw (3,1) circle (1pt) node[anchor = west]{$I^+$};

\draw[red] plot[smooth,domain=2:4] (\x,{(4-\x)^2});

\end{tikzpicture}
\end{subfigure}
\hfill
\begin{subfigure}{0.2\textwidth}
\centering
\begin{tikzpicture}[baseline,remember picture]

% Light red filled region above the line
\fill[red!20] (0,2) -- (3,2) -- (3,3.5) -- (0,3.5) -- cycle;

% Red boundary line
\draw[red] (0,2) -- (3,2);

\node (n1) at (1.7,2.2) {$\mathscr{I}^+$};

\draw (3,0) -- (3,2);
\draw [dashed] (0,0) -- (3,0);
\filldraw (3,2) circle (1pt) node[anchor = west]{$I^+$};

\end{tikzpicture}
\end{subfigure}

\caption{The red curves marks where hyperbolicity is lost, and the shaded region indicates the corresponding domain.}
\label{LossofHyperbolicity}
\end{figure}
In the diagonal representation, the region where hyperbolicity is lost is bounded by a parabola outside $\mathscr{I}$, while in the horizontal representation, its boundary coincides with $\mathscr{I}$. However, our goal is to extract data on $\mathscr{I}^+$. 
As observed in \cite{GlobalSimulationsJoerg}, data can be evolved arbitrarily close to $\mathscr{I}^+$, but as this limit is approached, the time step required for convergence becomes prohibitively small. We therefore adopt the diagonal representation, which allows us to extract data exactly on $\mathscr{I}^+$ while avoiding the region where hyperbolicity is lost and maintaining stable evolution.

\subsection{The equations and issues to overcome}
The main difficulties occur at the top and bottom of the cylinder during the evolution. As discussed above, the evolution equations lose hyperbolicity there, as well as in a region beyond $\mathscr{I}$. In particular, the evolution equation for $\psi_0$ ($\psi_4$) cannot be evaluated at the bottom (top) of the cylinder. As a result, our standard approach on the cylinder, based on explicit time-stepping methods such as RK4, breaks down at these locations. Additionally, in the diagonal representation, $\mathscr{I}^-$ is not a level set of $t$, which is incompatible with evolution of spacelike hypersurfaces parametrised by these level sets.

The difficulties at the bottom of the cylinder can easily be addressed by applying an implicit numerical method, such as implicit Euler, for the first time step. This avoids the need to evaluate the evolution equations directly at $I^-$, allowing the data to be evolved along the cylinder to $I^+$. The success of this approach also suggests a way to overcome the second issue. In the horizontal representation, $\mathscr{I}^-$ is a level set of $t$, so the initial surface for the evolution can naturally be chosen as $\scrim$. In this representation, however, the equations lose hyperbolicity everywhere on $\mathscr{I}^-$ rather than only at the cylinder. Nevertheless, the same implicit method can be applied everywhere on $\scrim$, not just at $I^-$.

Although the above resolves the issues at the bottom of the cylinder, the implicit approach does not allow us to evolve beyond the top of the cylinder $I^+$. To proceed, we introduce a novel approach that exploits the geometry of the diagonal representation. We observe that the problematic $\psi_4$ evolution equation contains both a time and a radial derivative. This allows us to rewrite it as an evolution equation in the radial direction (see Eq.~\eqref{ethevo}),
\begin{equation}\label{eq:psi4r}
    \partial_r\psi_4 = \frac{t\kappa' - 1}{\kappa}\partial_t\psi_4 + \frac{1}{1-r}a_3\psi_3 + \Big{(}3\frac{\kappa'}{\kappa} + \frac{1}{1-r}\Big{)}\psi_4.
\end{equation}

This equation is regular for $r > 0$ and above $I^+$, and $\mathscr{I}^+$ lies within this region. Eq.~\eqref{eq:psi4r} is written in a form that makes explicit how $\psi_4$ evolves in the radial direction. Since the equations for the remaining four components are regular at $I^+$, they can be evolved using standard numerical methods. Moreover, sufficiently far along $\scrip$, $\psi_4$ can also be evolved without difficulty.
As a result, all five $\psi_k$ are known in the green regions of a given time slice shown in Fig.~\ref{fig:PastIpicture}. In the blue regions, only $\psi_4$ is unknown and can be obtained by integrating Eq.~\eqref{eq:psi4r}. This procedure can be iterated to move sufficiently far away from $\mathscr{I}^+$ to perform a regridding process that removes a piece of the outermost part of the computational domain. This in turn allows the evolution to be extended arbitrarily far into the future.
\begin{figure}[htbp]
\centering
\begin{tikzpicture}[xscale=-1]

\draw (0,1)  -- node[anchor = west]{$\mathscr{I}^+$} (3,4);
\draw (0,0) -- (0,1);
\draw [dashed] (0,0) -- (3,0);
\filldraw (0,1) circle (1pt) node[anchor = west]{$I^+$};

\draw[red] plot[smooth,domain=-1:1] (\x, {(1+\x)^2});

\draw [green] (0,1) -- (3,1);
\draw [green] (1,1.5) -- (3,1.5);
\draw [green] (1.5,2) -- (3,2);

\draw [blue] (0.3,1.5) -- (1,1.5);
\draw [blue] (0.75,2) -- (1.5,2);

\end{tikzpicture}
\caption{Schematic depiction of the method to evolve beyond $I^+$. The horizontal lines are level sets of $t$ with green indicating where $\psi_4$ has been evaluated by time evolution and blue with spatial evolution via Eq.~\eqref{eq:psi4r}.}
\label{fig:PastIpicture}
\end{figure}

\subsection{Numerical implementation}
% 100 iterations

The numerical evolution from $\scrim$ proceeds as follows. We start with the horizontal representation so that the initial surface is a level set of $t$, and employ an implicit Euler method to advance the solution for a fixed number of time steps. We then switch to an explicit RK4 scheme for further evolution. Once $t=0$ is reached, the coordinates of the horizontal and diagonal representations coincide, and we transform the data to the diagonal representation. The evolution is then continued in this representation until $I^+$ is reached. Spatial derivatives are approximated using a fourth-order finite-difference operator satisfying the summation-by-parts property \cite{osti_7184523}. Upon reaching the top of the cylinder, the evolution proceeds as illustrated in Fig.~\ref{fig:PastIpicture}:

\begin{itemize}
\item[1.] Evolve $\psi_0,\psi_1,\psi_2,$ and $\psi_3$ forward one time step everywhere on the time slice.
\item[2.] Evolve $\psi_4$ forward one time step in a region sufficiently far away from the cylinder.
\item[3.] Using the computed value of $\psi_4$ as an initial condition, together with the remaining $\psi_k$ from step 1, integrate $\psi_4$ radially outwards.
\end{itemize}

In order to integrate Eq.~\eqref{eq:psi4r}, we require not only the values of the $\psi_k$ components, but also the temporal derivative of $\psi_4$. This is approximated using backward differentiation formulas, which estimate time derivatives from previous time slices. To perform the radial integration, we use the fourth-order Adams--Bashforth method \cite{Adams_Bashforth_1883, Butcher_Linear_Multistep}, a linear multistep method.
This choice is motivated by the fact that Runge--Kutta methods of order higher than two require function evaluations at intermediate points, i.e.\ between grid points. In contrast, the Adams--Bashforth method uses values at previous grid points to achieve higher-order accuracy. Since the standard evolution proceeds without difficulty in the interior of the spacetime, we have access to sufficiently many interior points for this purpose.

Using this approach, we are able to evolve solutions from past null infinity, through the cylinder, and extract data at future null infinity.

\subsection{Evolving an exact solution}
Evolving an exact solution to the linearised GCFE provides a useful test of our numerical methods, as it allows direct comparison between numerical and exact results. This enables us to assess the accuracy of the scheme.

The exact solution with $l=2$, $m=0$ derived in \cite{DoulisPHD} is given by
\begin{equation}\label{ExactSoln}
\begin{split}
    \psi_0 &= \frac{\kappa^3(1-r+t\kappa)^4}{(1- r)^5},\\
    \psi_1 &= -\frac{2\kappa^3(1 - r-t\kappa)(1 - r+t\kappa)^3}{(1-r)^5},\\
    \psi_2 &= \frac{\sqrt{6}\kappa^3(1-r-t\kappa)^2(1-r+t\kappa)^2}{(1-r)^5},\\
    \psi_3 &= -\frac{2\kappa^3(1-r-t\kappa)^3(1-r+t\kappa)}{(1-r)^5},\\
    \psi_4 &= \frac{\kappa^3(1 - r-t\kappa)^4}{(1-r)^5}.
\end{split}
\end{equation}
By appropriately choosing $\kappa$, this solution can easily be expressed in the diagonal or horizontal representations. Initial and boundary data are then obtained by evaluating the exact solution at the desired space and time coordinates.

To test the convergence of our code, we evolve the exact solution at different resolutions and verify that the difference between numerical and exact solutions decreases with the expected order of convergence. The convergence plots in Fig.~\ref{exactsolnconvergencescri} confirm that the numerical solution converges to the exact solution on $\mathscr{I}^+$ when evolving from $\scrim$ using the methods described in this section.

\begin{figure}[htbp]
    \includegraphics[width = 0.5 \textwidth]{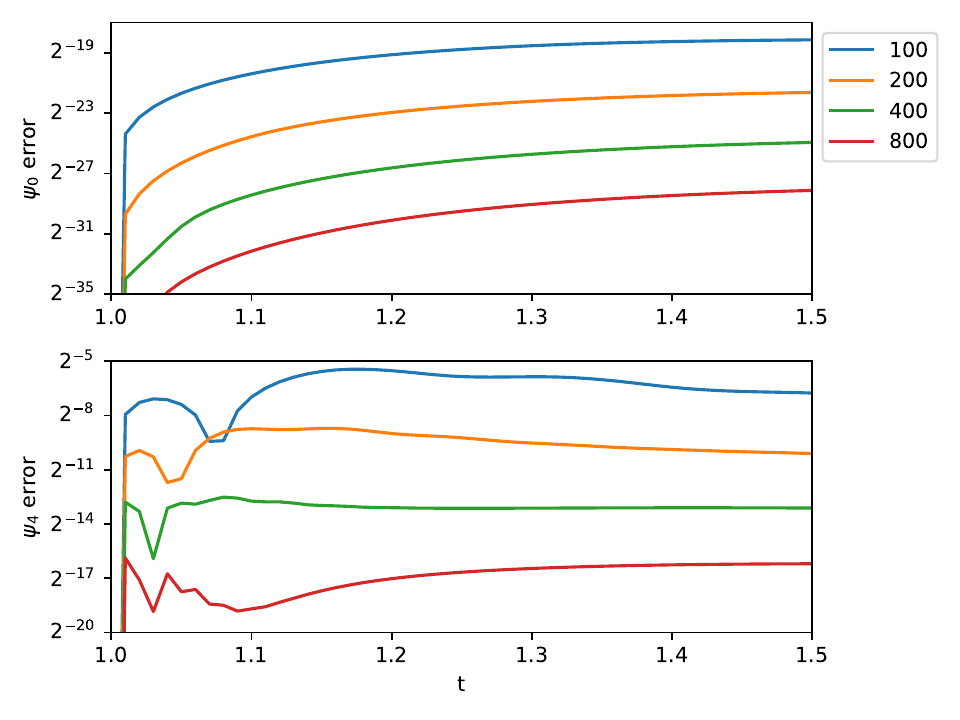}
    \caption{Convergence of $\psi_0$ and $\psi_4$ to the exact solution along $\scrip$ with increasing number of radial grid points.}
    \label{exactsolnconvergencescri}
\end{figure}

\subsection{A full numerical evolution of the linearised field equations}
Having validated our implementation, we now consider the evolution of a more general initial data set from $\scrim$. Using the methods described in Sec.~\ref{initial_data_numeric}, we construct data on $\mathscr{I}^-$ and evolve them to $\mathscr{I}^+$.

\subsubsection{Boundary conditions}
In contrast to the exact solution, the corresponding function values cannot be directly imposed as boundary conditions. However, the constraint propagation system contains no derivatives transverse to the level sets of $r$ \cite{kroon2023}. As a consequence, the ingoing characteristic mode remains unconstrained.
With respect to the computational domain, the characteristic modes at the left boundary correspond to $\psi_4$ (ingoing) and $\psi_0$ (outgoing). For simplicity, we impose fully reflective boundary conditions of the form
\begin{equation}\label{eq:reflective}
\psi_0 = -\overline{\psi_4},
\end{equation}
which is a special case of \emph{maximally dissipative boundary conditions}, see for example \cite{friedrich1999initial}.

\subsubsection{Convergence tests}
In order to test the robustness of the convergence of this method for non-exact initial data, we evolve numerically generated initial data (constructed as described in Sec.~\ref{initial data}) from $\mathscr{I}^-$ to $\mathscr{I}^+$ and compute the constraint violations for simulations at different resolutions.

In particular, we present convergence plots for an evolution of the data set defined in \cref{eq:fc two bump}. This choice is motivated by the fact that it exhibits non-trivial values of several components of the spin-2 field on the cylinder. This serves two purposes: firstly, it represents a more challenging test case, as several aspects of the method become trivial when certain $\psi_k$ components vanish identically; secondly, it provides a non-trivial check of the regularity conditions derived in Sec.~\ref{Regularity SCG}.

\begin{figure}[htbp]
\includegraphics[width = 0.5\textwidth]{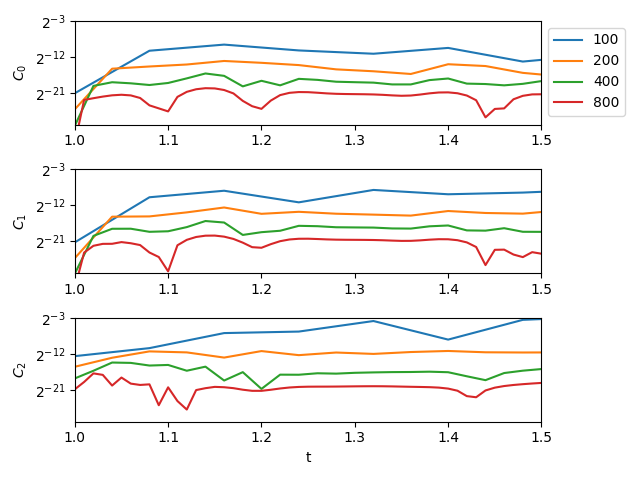}
\caption{Convergence of constraint quantities on $\mathscr{I}^+$ with increasing number of radial grid points.} \label{numerical_convergence}
\end{figure}

The convergence plots in Fig.~\ref{numerical_convergence} show that our method also converges for numerically generated initial data. As expected, the errors are larger than in the exact case, but a clear convergent behaviour is evident.

%\subsection{Evolving initial data sets}
\subsection{Scattering behaviour of gravitational wave data}
Having passed the basic convergence tests, we now evolve a solution that more closely resembles a physical gravitational wave, in order to check consistency with physical intuition.

An exact initial data set for an $l=2$, $m=0$ mode with compact support in $1<\hat r<2$ (corresponding to $r\in(-1,0)$) is constructed in the horizontal gauge by prescribing $\psi_4(\hat r)= \frac{1}{100} \sin(\pi \hat r)^6$ and solving for the remaining components $\psi_k$. This yields the initial data
\begin{align*}
    \psi_0(\hr) &= \frac{1}{400} \pi ^3\hr^3 \sin ^2(\pi \hr) [11 \pi \hr+4 \sin (2 \pi \hr)+18 \sin (4 \pi \hr)
    \\
    &\qquad +22 \pi \hr \cos (2 \pi \hr)+27
   \pi \hr \cos (4 \pi \hr)],
    \\
    \psi_1(\hr) &= -\frac{1}{200} \pi ^3\hr^3 \sin ^3(\pi \hr) [11 \cos (\pi \hr)+9 \cos (3 \pi \hr)],
    \\
    \psi_2(\hr) &= \frac{1}{200 \sqrt{6}}\sin ^4(\pi \hr) [12 \pi ^2\hr^2+\left(18 \pi ^2\hr^2-1\right) \cos (2 \pi \hr)
    \\
    &\qquad -6 \pi \hr \sin (2 \pi\hr)+1],
   \\
    \psi_3(\hr) &= \frac{1}{100} \sin ^5(\pi \hr) [\sin (\pi \hr)-3 \pi \hr \cos (\pi \hr)],
    \\
    \psi_4(\hr) &= \frac{1}{100}\sin^6(\pi \hr),
\end{align*}
% \begin{align*}
%     \psi_0(\hat r) &= \frac{1}{3} \pi ^3 \hr^3 (-2 \sin (2 \pi  \hat r)+4 \sin (4 \pi  \hat r)-\pi  \hat r \cos (2 \pi  \hat r)
%     \\ &\qquad +4 \pi  \hat r \cos (4 \pi  \hat r)),
%     \\
%     \psi_1(\hat r) &= \frac{1}{3} \pi ^3 \hat r^3 (\sin (2 \pi  \hat r)-2 \sin (4 \pi  \hat r)),
%     \\
%     \psi_2(\hat r) &= \frac{1}{2 \sqrt{6}}\sin (\pi  \hat r) ^2\left(4 \pi ^2 \hat r^2+\big(8 \pi ^2 \hat r^2-1\right) \cos (2 \pi  \hat r)
%     \\ &\qquad -4 \pi \hat  r \sin (2 \pi \hat r)+1\big),
%     \\
%     \psi_3(\hat r) &= \sin(\pi  \hat r) ^3 (\sin (\pi \hat  r)-2 \pi \hat  r \cos (\pi \hat  r)),
%     \\
%     \psi_4(\hat r) &= \sin (\pi  \hr)^4,
% \end{align*}
on $(1,2)$ and vanishing elsewhere.
\begin{figure}
\includegraphics[width = 0.5\textwidth]{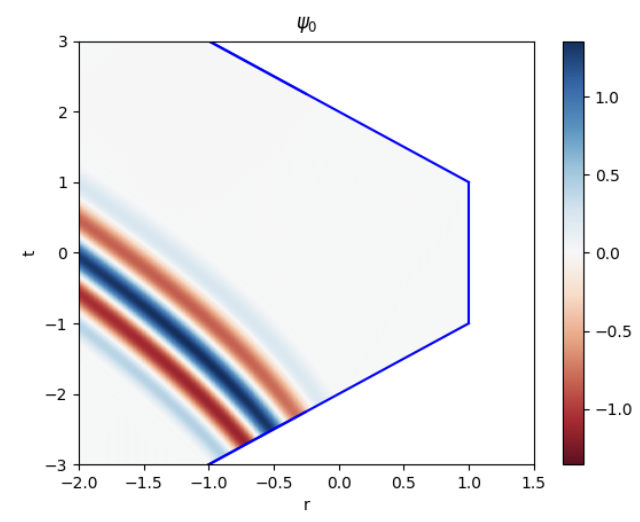}
\includegraphics[width = 0.5\textwidth]{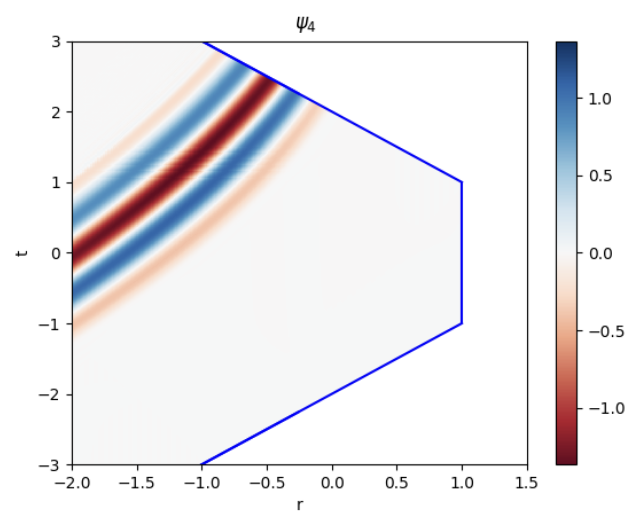}
\caption{Ingoing and outgoing components of the evolved gravitational wave in the diagonal gauge.}\label{fullevomerlyndata}
\end{figure}
As shown in Fig.~\ref{fullevomerlyndata}, the gravitational wave propagates inwards from $\mathscr{I}^-$, reflects off the boundary at $\hr=3$ ($r=-2$), where we have imposed the reflective boundary condition \eqref{eq:reflective}, and subsequently propagates to $\mathscr{I}^+$. This behaviour is consistent with the expected dynamics of a scattering problem.

We can also use the evolution system to investigate the piecewise polynomial initial data described in Sec.~\ref{sec:reg conds2}. We again use reflective boundary conditions and are able to achieve a stable evolution from past null infinity to future null infinity, including spatial infinity. The results of the evolution are depicted in Fig.~\ref{fig: piecewise polynomial evolution}.

\begin{figure}
    \includegraphics[width = 0.5\textwidth]{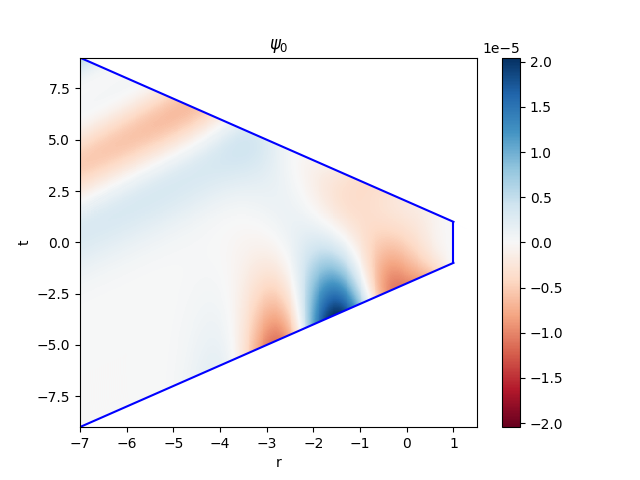}
    \includegraphics[width = 0.5\textwidth]{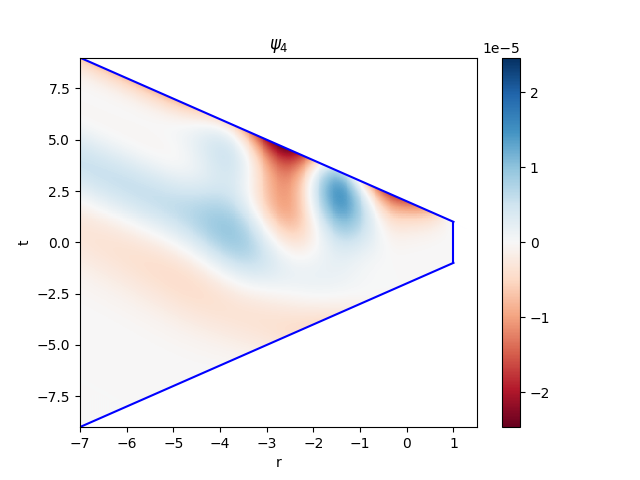}
\caption{Ingoing and outgoing components of the evolved piecewise polynomial data} \label{fig: piecewise polynomial evolution}
\end{figure}
We believe these to be the first numerical evolutions through spatial infinity in which data are prescribed explicitly on past null infinity and extracted on future null infinity.

\section{Summary and discussion}\label{sec:summary}
In this work, we have developed a framework for the construction and evolution of initial data for the linearised spin-2 system on conformally compactified Minkowski spacetime, with data prescribed on past null infinity $\scrim$ and evolved globally to future null infinity $\scrip$. Our results provide, to our knowledge, the first systematic analysis of initial data sets posed across the entirety of $\scrim$, extending from past timelike infinity $i^-$ to $I^-$, the bottom of the cylinder at spatial infinity.

A central outcome of this analysis is the identification of explicit and constructive conditions on the freely specifiable ingoing radiation field $\psi_0$ that ensure regularity of the full set of spin-2 components. In particular, we have obtained necessary and sufficient integral conditions for regular initial data, making precise how the behaviour of $\psi_0$ determines the global regularity properties of the solution. These results highlight a fundamental structural feature of the problem: Because the linearisation of rescaled Weyl spinor is governed by a first-order system intrinsic to $\scrim$, we cannot independently control the behaviour of the fields at both ends of null infinity. Regularity at $i^-$ and at the cylinder are therefore coupled in a nontrivial manner, and generic data give rise to polyhomogeneous expansions and logarithmic singularities. In this sense, the construction of globally regular initial data on $\scrim$ is significantly more constrained than might be anticipated from heuristic considerations.

An interesting consequence of these constraints is that physically intuitive choices of initial data, such as sharply localised or delta-function-like wave profiles, do not in general lead to globally regular solutions. In particular, while plane wave and impulsive wave solutions play an important role as exact solutions in general relativity, our results show that analogous constructions on $\scrim$ are incompatible with the regularity conditions required for smooth conformal evolution. This highlights a distinction between idealised exact solutions and admissible asymptotic initial data in the conformal setting.

On the constructive side, we have introduced a numerical procedure for generating complete initial data sets on $\scrim$ from a prescribed $\psi_0$. It implements the hierarchy of intrinsic equations in a stable and efficient manner, producing all remaining spinor components consistent with the constraints. Importantly, the construction relies only on the intrinsic transport system on null infinity and does not depend in an essential way on the linearisation. This suggests that it can, in principle, be extended to the fully nonlinear generalised conformal field equations when coupled to the remaining variables, providing a potential pathway towards constructing asymptotic initial data in the nonlinear regime.

We have also carried out numerical evolutions of the resulting data in both semi- and fully-compactified gauges. These simulations demonstrate, for the first time, stable time evolutions from $\scrim$ through the cylinder at spatial infinity and onward to $\scrip$. A key ingredient in achieving this is the development of new numerical treatments for the degeneracies that arise at the cylinder, allowing the system to be evolved across $I$ without loss of stability. This provides a concrete realisation of a fully global scattering problem in a time-evolution framework, complementing previous approaches based on hyperboloidal initial data.

There are several important directions for future work. One natural extension is the inclusion of the physical origin, which is part of our initial data investigations, but excluded in the time-evolution section. This step is essential for modelling more general configurations and will require a different gauge choice, as the present formulation is not based on a conformal Gau{\ss} gauge, in contrast to the semi-compactified setting, where such a structure is available. Developing a gauge that remains regular across the entire conformal manifold, including both timelike infinity and the cylinder, is therefore another important step.

More broadly, the framework developed here provides a foundation for studying global properties of gravitational radiation using asymptotic initial data. Extending these ideas to the fully nonlinear regime offers the prospect of a complete scattering picture for asymptotically flat spacetimes within the conformal approach.

\begin{acknowledgments}
This work was supported by the Marsden Fund Council from government funding managed by the Royal Society Te Apārangi of New Zealand. The University of Canterbury's RCH cluster was used for numerical simulations. CS would like to thank Florian Beyer and Juan Valiente Kroon for useful discussions.
\end{acknowledgments}

\appendix

\section{Appendices}

% The \nocite command causes all entries in a bibliography to be printed out
% whether or not they are actually referenced in the text. This is appropriate
% for the sample file to show the different styles of references, but authors
% most likely will not want to use it.
% \nocite{*}

\subsection{Piecewise polynomial equations}
The initial data set using piecewise polynomials for Sec.~\ref{sec:reg conds2} are given by
\begin{widetext}
\allowdisplaybreaks
\begin{align}
\begin{split}
\tilde\psi_0(\hr)  &= \frac{\hr^3}{100}\chi_0(\hr) \quad \text{ where } 
\\
\chi_0(\hr) &= \begin{cases}
-(\hr-1)^2(\hr-3)^2, & 1\le \hr\le 2
\\
 3\hr^4-40\hr^3+186\hr^2-360\hr+247, & 2\le \hr\le 3
\\
-2\left (  \hr^{4}-24\hr^{3}+178\hr^{2}-528\hr+547\right ),& 3\le \hr\le 4
\\
-2\left (  \hr^{4}-8\hr^{3}-14\hr^{2}+240\hr-477\right ), & 4 \le \hr\le 5
\\
 3\hr^{4}-56\hr^{3}+378\hr^{2}-1080\hr+1079, & 5\le \hr\le 6
\\
-(\hr-5)^2(\hr-7)^2, & 6\le r\le 7
\end{cases},
\end{split} 
\label{eq:psi_0_example}
\\
\psi_0(\hr)  &= \frac{1}{(1+\hr)^3}\tilde\psi_0(\hr) = \frac{\hr^3}{100(1+\hr )^3} \chi_0(\hr)
\label{eq:psi_0_example_2}
\\
\begin{split} 
\psi_1(\hr) & = \frac{1}{42000\hr(1+\hr )^3}  \chi_1(\hr) \quad \text{ where } 
\\
\chi_1(\hr) &= \begin{cases}
105\hr^8 - 960\hr^7 +3080\hr^6 - 4032\hr^5 +1890\hr^4 - 83, &1\le \hr< 2
\\
- 3(105\hr^8 - 1600\hr^7+8680\hr^6-20160\hr^5+17290\hr^4-9871), & 2\le \hr< 3
\\
2(105\hr^8-2880\hr^7+24920\hr^6-88704\hr^5+114870\hr^4-320169), & 3\le \hr< 4
\\
10(21\hr^8 - 192 \hr^7-392\hr^6+8064\hr^5-20034\hr^4 + 250539 ),&4\le \hr< 5
\\
-5(63\hr^8 - 1344\hr^7+10584\hr^6-36288\hr^5+45318\hr^4 + 77047),& 5\le \hr< 6
\\
5(21\hr^8-576\hr^7+5992\hr^6-28224\hr^5+51450\hr^4-823543),& 6\le \hr\le 7
\end{cases}, 
\end{split}
\label{eq:psi_1_example}
\\
\begin{split} 
\psi_2(\hr) & = \frac{\sqrt{6}}{42000\hr(1+\hr )^3}  \chi_2(\hr) \quad \text{ where } 
\\
\chi_2(\hr) &=\begin{cases}
-15\hr^{8}-160\hr^{7}+616\hr^{6}-1008\hr^{5}+630\hr^{4}-176\hr+83, & 1\le \hr< 2
\\
45\hr^{8}-800\hr^{7}+5208\hr^{6}-15120\hr^{5}+17290\hr^{4}-32592\hr+29613 ,& 2\le \hr< 3
\\
-2(15\hr^{8}-480\hr^{7}+4984\hr^{6}-22176\hr^{5}+38290\hr^{4}-238368\hr+320169), &3\le \hr< 4
\\
-2(15\hr^{8}-160\hr^{7}-392\hr^{6}+10080\hr^{5}-33390\hr^{4}+679136\hr-1252695), & 4\le \hr< 5
\\
45\hr^{8}-1120\hr^{7}+10584\hr^{6}-45360\hr^{5}+75530\hr^{4}-108272\hr-385235, &5\le \hr< 6
\\
- 15\hr^{8}-480\hr^{7}+5992\hr^{6}-35280\hr^{5}+85750\hr^{4}-1882384\hr+4117715,&6\le \hr< 7
\end{cases},
\end{split} 
\label{eq:psi_2_example}
\\
\begin{split} 
\psi_3(\hr) & = \frac{1}{14000\hr(1+\hr )^3}  \chi_3(\hr) \quad \text{ where }
\\
\chi_3(\hr) & = \begin{cases}
5\hr^{8}-64\hr^{7}+308\hr^{6}-672\hr^{5}+630\hr^{4}-476\hr^{2}+352\hr-83, & 1\le \hr< 2
\\
-(15\hr^{8}-320\hr^{7}+2604\hr^{6}-10080\hr^{5}+17290\hr^{4}-46116\hr^{2}+65184\hr-29613),& 2\le \hr< 3
\\
2(5\hr^{8}-192\hr^{7}+2492\hr^{6}-14784\hr^{5}+38290\hr^{4}-227556\hr^{2}+476736\hr-320169), &3\le \hr< 4
\\
2(5\hr^{8}-64\hr^{7}-196\hr^{6}+6720\hr^{5}-33390\hr^{4}+460572\hr^{2}-1358272\hr+1252695), & 4\le \hr< 5
\\
-(15\hr^{8}-448\hr^{7}+5292\hr^{6}-30240\hr^{5}+75530\hr^{4}-308644\hr^{2}+216544\hr+385235), &5\le \hr< 6
\\
5\hr^{8}-192\hr^{7}+2996\hr^{6}-23520\hr^{5}+85750\hr^{4}-1142876\hr^{2}+3764768\hr-4117715,&6\le \hr< 7
\end{cases},
\end{split}
\label{eq:psi_3_example}
\\
\begin{split} 
\psi_4(\hr) & = \frac{1}{21000\hr(1+\hr )^3}  \chi_4(\hr) \quad \text{ where }
\\
\chi_4(\hr) & = \begin{cases}
-(3\hr^{8}-48\hr^{7}+308\hr^{6}-1008\hr^{5}+1890\hr^{4}-2128\hr^{3}+1428\hr^{2}-528\hr+83), & 1\le \hr< 2
\\
9\hr^{8}-240\hr^{7}+2604\hr^{6}-15120\hr^{5}+51870\hr^{4}-108976\hr^{3}+138348\hr^{2}-97776\hr+29613,& 2\le \hr< 3
\\
-2(3\hr^{8}-144\hr^{7}+2492\hr^{6}-22176\hr^{5}+114870\hr^{4}-361312\hr^{3}+682668\hr^{2}-715104\hr+ 320169), &3\le \hr< 4
\\
-2(3\hr^{8}-48\hr^{7}-196\hr^{6}+10080\hr^{5}-100170\hr^{4}+498848\hr^{3}-1381716\hr^{2}+2037408\hr-1252695), & 4\le \hr< 5
\\
9\hr^{8}-336\hr^{7}+5292\hr^{6}-45360\hr^{5}+226590\hr^{4}-647696\hr^{3}+925932\hr^{2}-324816\hr-385235, &5\le \hr< 6
\\
-(3\hr^{8}-144\hr^{7}+2996\hr^{6}-35280\hr^{5}+257250\hr^{4}-1190896\hr^{3}+3428628\hr^{2}-5647152\hr+4117715),&6\le \hr< 7
\end{cases}.
\end{split}
\label{eq:psi_4_example}
\end{align}
\end{widetext}

\bibliography{refs}% Produces the bibliography via BibTeX.

\end{document}